\documentclass[10pt, conference, compsocconf]{IEEEtran}
\ifCLASSINFOpdf
  % \usepackage[pdftex]{graphicx}
  % declare the path(s) where your graphic files are
  % \graphicspath{{../pdf/}{../jpeg/}}
  % and their extensions so you won't have to specify these with
  % every instance of \includegraphics
  % \DeclareGraphicsExtensions{.pdf,.jpeg,.png}
\else
  % or other class option (dvipsone, dvipdf, if not using dvips). graphicx
  % will default to the driver specified in the system graphics.cfg if no
  % driver is specified.
  % \usepackage[dvips]{graphicx}
  % declare the path(s) where your graphic files are
  % \graphicspath{{../eps/}}
  % and their extensions so you won't have to specify these with
  % every instance of \includegraphics
  % \DeclareGraphicsExtensions{.eps}
\fi
\hyphenation{op-tical net-works semi-conduc-tor}
\usepackage{amsmath}
\usepackage[utf8]{inputenc} %for utf8 input
\usepackage{amssymb} %for shift symbol
\usepackage{amsmath}
\usepackage{listings} %for code
\usepackage{mathpartir} %for typing rules
\usepackage{microtype} %better micro typing
\usepackage{stmaryrd} %for llbracket
\usepackage{graphicx} %to include png images
\usepackage[table,dvipsnames]{xcolor} %for colors
\usepackage{url}
\usepackage{rtyps}
\usepackage{array} %for stupid tables
\usepackage{hhline}
\usepackage{calrsfs}
\usepackage{prettyref}
\usepackage{amsthm}
\usepackage{multirow}

\newcommand{\tstack}[0]{\ensuremath{\Sigma}}
\newcommand{\reststack}[0]{\ensuremath{\Sigma^r}}
\newcommand{\sectstack}[0]{\ensuremath{\Sigma^s}}

%     & \;\,  \C{AMake},\C{ASet},\C{AGet},\C{ALength},\C{ref},\C{deref} \}

% 
% \newcommand{\cMatP}[1]{\ensuremath{M^\mathsf{MatP}_{#1}}}

% 
% \newcommand{\cNil}{\ensuremath{M^\mathsf{nil}}}
% \newcommand{\cCons}{\ensuremath{M^\mathsf{cons}}}
% \newcommand{\cMatC}[1]{\ensuremath{M^\mathsf{matC}_{#1}}}
% \newcommand{\cMatN}[1]{\ensuremath{M^\mathsf{matN}_{#1}}}

\newcommand{\type}[1]{\ensuremath{\text{#1}}}

\newcommand{\vid}[0]{\ensuremath{\text{VID}}}

\newcommand{\word}[1]{\ensuremath{\mathop{\,\mathsf{#1}\,}}}

%-------------------------------------------------------
% Code
%-------------------------------------------------------

%Listings

\lstset{
   numbers=none, %use numbers=left
   numberstyle=\tiny,
   stepnumber=1,
   numbersep=5pt,
   basicstyle=\tt\small,
   escapechar=\%,
   mathescape,
%   language=CaML,
   columns=flexible,
%   basewidth=0.455em,
   xleftmargin=\leftmargini,
%   xrightmargin=\leftmargini,
%   morekeywords={mod,div,matchD},
   deletekeywords={as},
   literate={<-}{$\;\leftarrow\;\;\,$}{3} {->}{$\rightarrow$}{3},
   firstnumber=auto % use name=NAME to identify split listings
%
% Placement: belowskip,aboveskip,lineskip,boxpos=l|c|r
%
% wonna figure-style listings? Use: caption={Useless code,label=useless
%
% keywordstyle=\color{black}\bfseries\underbar
% morekeywords={one,two,three,four}
}
% \lstinline!! 
% use basicstyle=\small ?

%breaklines ?
\newcommand{\code}[1]{\text{\sl #1}}

\newcommand{\fid}[0]{\text{FID} }

\newcommand{\val}[0]{\ensuremath{\mathit{Val}}}

%-------------------------------------------------------
%Functions

\newcommand{\dom}[1]{\text{dom}(#1)}

%-------------------------------------------------------
% Types (lists, trees and functions)
%-------------------------------------------------------

%Potential

%sharing
\newcommand{\share}[0]{\ensuremath{\curlyveedownarrow\!}}

%subtyping
\newcommand{\subtype}[0]{\ensuremath{<:}}

%simple types

\newcommand{\bdatatypes}[0]{\ensuremath{\mathcal{T}}}

\newcommand{\datatypes}[0]{\ensuremath{\mathcal{A}}}
\newcommand{\fotypes}[0]{\ensuremath{\mathcal{F}}}

\newcommand{\secdatatypes}[0]{\ensuremath{\mathcal{S}}}
\newcommand{\secfotypes}[0]{\ensuremath{\mathcal{F}^s}}

\renewcommand{\int}[0]{\ensuremath{\mathit{int}}}

% %list trees functions
% \newcommand{\li}[2][XX]{%
%    \ifthenelse{\equal{#1}{XX}}%
%       {\ensuremath{L(#2)}}%
%             {\ensuremath{L^{#1}(#2)}}}

\newcommand{\tree}[2][XX]{%
   \ifthenelse{\equal{#1}{XX}}%
      {\ensuremath{T(#2)}}%
            {\ensuremath{T^{#1}(#2)}}}

\newcommand{\tr}[2][XX]{%
   \ifthenelse{\equal{#1}{XX}}%
      {\ensuremath{T(#2)}}%
            {\ensuremath{T^{#1}(#2)}}}

\newlength{\rWidth}

\newcommand{\funold}[3][XX]{%
   \ifthenelse{\equal{#1}{XX}}%
      {\ensuremath{#2 {\rightarrow} #3}}%
      { \settowidth{\rWidth}{\ensuremath{#1}}%
        \ensuremath{ #2 {\xrightarrow{\hspace{\rWidth}}\hspace{-0.84\rWidth}}\!\!\!^%
         {#1}%{\BehindSubString{,}{#1} / \BeforeSubString{,}{#1}}%
         \hspace{0.2\rWidth}\,\, #3}}}

%cost-free

%0-1 length

%sized types

%semantics

%size

%-------------------------------------------------------
%Type rules (mathpatir)

\newcommand{\RuleToplabel}[4][]{\ensuremath{\inferrule[(#2)]{#3}{#4}}}
\newcommand{\RuleNolabel}[3][]{\ensuremath{\inferrule*[#1]{#2}{#3}}}

%-------------------------------------------------------

%-------------------------------------------------------
% Numbers
%-------------------------------------------------------

\newcommand{\Qplus}[0]{\ensuremath{\mathbb{Q}_{\geq 0}}}

%-------------------------------------------------------

%Soundnes proof

%Multivariate

%-------------------------------------------------------
% secure program
%-------------------------------------------------------
\newcommand{\define}[0]{\ensuremath{::=}}
\newcommand{\poten}[1]{\ensuremath{\Phi_{#1}}}
\newcommand{\esim}[1]{\ensuremath{\approx_{\!#1}}}
\newcommand{\secesim}[1]{\ensuremath{\equiv_{#1}}}

\newcommand{\context}[0]{\ensuremath{\Gamma}}
\newcommand{\rescontext}[0]{\ensuremath{\Gamma^{r}}}
\newcommand{\seccontext}[0]{\ensuremath{\Gamma^{s}}}
\newcommand{\uppervars}[2]{[#1]_{\blacktriangleleft #2}}
\newcommand{\lowervars}[2]{_{#2\triangleleft}[#1]}
\newcommand{\hiddenvars}[2]{[#1]_{\not\blacktriangleleft #2}}
\newcommand{\midvars}[3]{_{#2\triangleleft}[#1]_{\blacktriangleleft#3}}

\newcommand{\ev}[0]{\ensuremath{E}}
\newcommand{\sizeval}[1]{\ensuremath{|#1|}}

\newcommand{\consume}[2]{\ensuremath{\text{consume}_{#1}{(#2)}}}

\newcommand{\transform}[0]{\ensuremath{\hookrightarrow}}
\newcommand{\llow}[0]{\ensuremath{\ell}}
\newcommand{\lhigh}[0]{\ensuremath{h}}

\newcommand{\atype}[1]{\ensuremath{\widehat{#1}}}

\newcommand{\qleakage}[0]{\ensuremath{C}}
\newcommand{\runcertain}[0]{\ensuremath{R}}
\newcommand{\envs}[1]{\ensuremath{\mathbf{E}^{#1}}}
\newcommand{\sizes}[0]{\ensuremath{N}}
\newcommand{\usage}[0]{\ensuremath{U}}

\newcommand{\lbound}[0]{\ensuremath{l}}
\newcommand{\ubound}[0]{\ensuremath{u}}

%%% Local Variables: 
%%% mode: latex
%%% TeX-master: "main"
%%% End: 

\newtheorem{lemma}{Lemma}

\newtheorem*{example*}{Example}

\newtheorem{theorem}{Theorem}

\newtheorem{definition}{Definition}
\newtheorem*{definition*}{Definition}
\newtheorem*{remark*}{Remark}

\newcommand{\wrt}[0]{w.r.t.}

\begin{document}
%
% paper title
% can use linebreaks \\ within to get better formatting as desired
\title{Verifying and Synthesizing Constant-Resource Implementations with Types \\
\normalsize \emph{(Extended Version of the IEEE S\&P Oakland Paper)}}

% author names and affiliations
% use a multiple column layout for up to two different
% affiliations

\author{\IEEEauthorblockN{Van Chan Ngo \qquad  Mario Dehesa-Azuara \qquad Matthew Fredrikson \qquad Jan Hoffmann}
\IEEEauthorblockA{Carnegie Mellon University, Pittsburgh, Pennsylvania 15213\\
Email: channgo@cmu.edu, mdehazu@gmail.com, mfredrik@cs.cmu.edu, jhoffmann@cmu.edu}
%\and
%\IEEEauthorblockN{Authors Name/s per 2nd Affiliation (Author)}
%\IEEEauthorblockA{line 1 (of Affiliation): dept. name of organization\\
%line 2: name of organization, acronyms acceptable\\
%line 3: City, Country\\
%line 4: Email: name@xyz.com}
}

% conference papers do not typically use \thanks and this command
% is locked out in conference mode. If really needed, such as for
% the acknowledgment of grants, issue a \IEEEoverridecommandlockouts
% after \documentclass

% for over three affiliations, or if they all won't fit within the width
% of the page, use this alternative format:
% 
%\author{\IEEEauthorblockN{Michael Shell\IEEEauthorrefmark{1},
%Homer Simpson\IEEEauthorrefmark{2},
%James Kirk\IEEEauthorrefmark{3}, 
%Montgomery Scott\IEEEauthorrefmark{3} and
%Eldon Tyrell\IEEEauthorrefmark{4}}
%\IEEEauthorblockA{\IEEEauthorrefmark{1}School of Electrical and Computer Engineering\\
%Georgia Institute of Technology,
%Atlanta, Georgia 30332--0250\\ Email: see http://www.michaelshell.org/contact.html}
%\IEEEauthorblockA{\IEEEauthorrefmark{2}Twentieth Century Fox, Springfield, USA\\
%Email: homer@thesimpsons.com}
%\IEEEauthorblockA{\IEEEauthorrefmark{3}Starfleet Academy, San Francisco, California 96678-2391\\
%Telephone: (800) 555--1212, Fax: (888) 555--1212}
%\IEEEauthorblockA{\IEEEauthorrefmark{4}Tyrell Inc., 123 Replicant Street, Los Angeles, California 90210--4321}}

% use for special paper notices
%\IEEEspecialpapernotice{(Invited Paper)}

% make the title area
\maketitle

\begin{abstract}
Side channel attacks have been used to extract critical data such as encryption keys and
confidential user data in a variety of adversarial settings. In practice, this
threat is addressed by adhering to a constant-time programming discipline,
which imposes strict constraints on the way in which programs are written.
This introduces an additional hurdle for programmers faced with the already
difficult task of writing secure code, highlighting the need for solutions
that give the same source-level guarantees while supporting more natural programming
models.

We propose a novel type system for verifying that programs correctly implement
constant-resource behavior. Our type system extends recent work on
\emph{automatic amortized resource analysis} (AARA), a set of techniques that
automatically derive provable upper bounds on the resource consumption of programs. 
We devise new techniques that build on the
potential method to achieve compositionality, precision, and automation.

A strict global requirement that a program always maintains constant resource
usage is too restrictive for most practical applications. 
%In practice, it 
It is
sufficient to require that the program's resource behavior remain constant
with respect to an attacker who is only allowed to observe part of the
program's state and behavior. To account for this, our type system
incorporates information flow tracking into its resource analysis. This allows
our system to certify programs that need to violate the constant-time
requirement in certain cases, as long as doing so does not leak confidential 
information to attackers. We formalize this guarantee by defining a new
notion of \emph{resource-aware noninterference}, and prove that our system
enforces it.

Finally, we show how our type inference algorithm can be used to synthesize a constant-time
implementation from one that cannot be verified as secure, effectively
repairing insecure programs automatically. We also show how a second novel
AARA system that computes \emph{lower bounds} on resource usage can be used to derive quantitative
bounds on the amount of information that a program leaks through its resource
use. We implemented each of these systems in Resource Aware ML, and show that
it can be applied to verify constant-time behavior in a number of applications
including encryption and decryption routines, database queries, and other
resource-aware functionality.
\end{abstract}

\begin{IEEEkeywords}
Language-based security; timing channels; information flow; resource analysis; static analysis
\end{IEEEkeywords}
%\category{D.2.4}{Software/Program Verification}{Formal Methods}
%\category{D.4.6}{Security and protection}{Information Flow Controls}
% general terms are not compulsory anymore,
% you may leave them out
%\terms
%term1, term2
%\keywords
%Timing channels, resource analysis, information flow

%%% Local Variables:
%%% mode: latex
%%% mode: flyspell
%%% TeX-master: "main.tex"
%%% End:

% For peer review papers, you can put extra information on the cover
% page as needed:
% \ifCLASSOPTIONpeerreview
% \begin{center} \bfseries EDICS Category: 3-BBND \end{center}
% \fi
%
% For peerreview papers, this IEEEtran command inserts a page break and
% creates the second title. It will be ignored for other modes.
\IEEEpeerreviewmaketitle

%!TEX root=main.tex
%\vspace{-.2cm}
\section{Introduction}
\label{sec:introduction}
%\vspace{-.05cm}
Side-channel attacks extract sensitive information about a program's state
through its observable use of resources such as time, network, and memory.
These attacks pose a realistic threat to the security of systems in a range 
of settings, in which the attacker has local
access to the native host~\cite{Kocher96}, through multi-tenant virtualized
environments~\cite{Reiter2015,Ristenpart2009}, or remotely over the
network~\cite{BrumleyB03}. Side channels have revealed highly-sensitive
data such as cryptographic
keys~\cite{Kocher96,BrumleyB03,CanvelHVV2003,AlFardanP13,GullaschBK11} and
private user
data~\cite{HaeberlenPA11,AndryscoKMJLS15,FeltenS00,BortzB07,ZhangJRR14}.

These attacks are mounted by taking repeated measurements of a program's
resource behavior, and comparing the resulting observations against a model
that relates the program's secret state to its resource usage. Unlike direct
information flow channels that operate over the input/output semantics of a
program, the conditions that give rise to side channels are oftentimes subtle
and therefore difficult for programmers to identify and mitigate. This also
poses a challenge for automated tool support aimed at addressing such
problems---whereas direct information flow can be described in terms of
standard program semantics, a similar precise treatment of side channels
requires incorporating the corresponding resource into the semantics and
applying quantitative reasoning.

This difficulty has led previous work 
in the area 
to treat resource use
indirectly, by reasoning about the flow of secret information into branching
control flow or other operations that might affect resource
use~\cite{AlmeidaBBDE16,RodriguesQFA16,BartheBCLP14,MolnarPSW}. These
approaches can limit the expressiveness of secure programs and further
complicate the development. For example, by requiring
programmers to write code using a ``constant-time discipline'' that forbids
the use of variables influenced by secret state in statements that could affect
the program's control path~\cite{AlmeidaBBDE16}. 
% Furthermore, adopting a constant-time discipline generally requires the use of low-level programming 
% languages or knowledge of compilers. %optimizations.

\paragraph{Verifiable constant-resource language}
In this paper, we present a novel type system that gives developers the ability to certify that
their code is secure against resource side-channel attacks \wrt{} a high-level attack 
model, in which the resource consumption of each language construct is modeled by a constant. 
Our approach reduces constraints on the expressiveness of programs
that can be verified, and does not introduce general stylistic guidelines that must be
followed in order to ensure constant-resource behavior. Programmers
write code in typical functional style and annotate variables with
standard types. Thus, it does not degrade the readability of the code. 
% In addition, the certification at the high-level can enable us to better 
% analyze the safety of tools such as compilers and assemblers. For instance, some compilers perform optimizations that 
% can remove code or branches and thus affect the source constant-resource programs which 
% have been certified.
At compile time, our verifier performs a quantitative
analysis to infer additional type information that
characterizes the
resource usage. From this, constant-resource behavior \wrt{} the high-level 
model on all executions of the program is determined automatically. 

The granularity with which our resource guarantees hold against an attacker
who can measure the total quantity of consumed resources is roughly equivalent
to what can be obtained by adhering to a strict constant-time
programming discipline. The certified constant-resource programs prevent side-channels 
that are inherent in implementing algorithms \wrt{} the provided high-level attack model. 
For example, if the resource under consideration is
execution time, measured by the number of language constructs executed by the 
program (e.g., the total number of arithmetic operations, function calls, etc.),
then our system provides a defense against attackers that can observe the same resource measure.
To have a stronger guarantee, e.g., against cache side channels, our resource model could in principle incorporate memory-access patterns and instruction caches. Other types of side channels arising from low-level behaviors, such as branch prediction or instructions whose resource usage is influenced by argument values, require corresponding changes to the resource model.
%  Furthermore, attacks can exploit low-level properties such as
% branch prediction or the effect
% of arguments on instruction timing. 
Our technique does not currently model such timing differences, so is not a defense against
such attacks.

In general, requiring that a program always consumes constant resources is too
restrictive. In most settings, it is sufficient to ensure that the resource
behavior of a program does not depend on selected confidential parts of the
program's state. To account for this, our type system tracks information flow
using standard techniques, and uses this information to reason about an
adversary who can observe and manipulate public state as well as resource
usage through public outputs. Intuitively,
\emph{resource-aware noninterference}---the guarantee enforced by this type
system---requires that the parts of the
program that are both affected by secret data and can influence public
outputs, can only make constant use of resources.

%We prove the soundness of our type system with respect to a \emph{cost semantics} which characterize the program's semantics with respect to a chosen resource. 
To accomplish this without limiting expressiveness or imposing
stylistic requirements, the type system must be allowed to freely switch
between local and global reasoning. One extreme would be to ignore the
information flow of the secret values and prove that the whole program has
global constant resource consumption. The other extreme would be to ensure
that every conditional that branches on a secret value (\emph{critical}
conditionals) uses a constant amount of resources. However, there are
constant-resource programs in which individual conditionals are not locally
constant-resource (see Section~\ref{sec:securitytype}). As a result, we allow
different levels of global and local reasoning in the type system to ensure
that every critical conditional occurs in a constant-resource block.
%The granularity with which our resource guarantees hold against an attacker
%who can measure the total quantity of consumed resources is roughly equivalent
%to what can be obtained by adhering to a strict constant-time
%programming discipline. For example, if the resource under consideration is
%execution time, then our guarantee can be understood as allowing the attacker
%to observe the number of operations executed by the program. 
%Although this model does not cover all known side-channel attacks, it applies to a realistic
%class of attackers that are not able to make intermediate observations of the
%program's behavior, such as those that reside over a network. Additionally, we
%show how one can use derived upper and lower bounds to quantify leakage
%through resource use, by reasoning about the number of distinct observations
%that an attacker can make.

Finally, we show that our type-inference algorithm can be used to
automatically repair programs that make inappropriate non-constant use of
resources, by \emph{synthesizing} constant-resource ones whose input/output
behavior is equivalent. To this end, we introduce a \code{consume} expression
that performs resource padding. The amount of resource padding that is needed
is automatically determined by the type system and is parametric in the values
held by program variables. An advantage of this technique over prior
approaches~\cite{AskarovZM10,KopfD09} is that it does not change the
worst-case resource behavior of many programs. Of course, it would be possible
to perform this transformation by padding resource usage dynamically at the
end of the program execution, but this would require instrumenting the program
to track at runtime the actual resource usage of the program.
\paragraph{Novel resource type systems} In order to verify constant
resource usage, as well as to produce quantitative upper and
lower-bounds on information leakage via resource behavior, this work
extends the theory behind automatic amortized resource analysis (AARA)~\cite{Jost03,HoffmannAH11,HoffmannW15}
to automatically derive lower-bound and constant-resource proofs.

Previous AARA techniques are limited to deriving upper bounds. 
To this end, the resource potential is used as an \emph{affine}
quantity: it must be available to cover the cost of the execution, but excess
potential is simply discarded. We show that if potential is treated as a
\emph{linear} resource, then corresponding type derivations prove that programs have
constant resource consumption, i.e., the resource consumption is
independent of the execution path. Intuitively, this amounts to requiring that
\emph{all} potential must be used to cover the cost and that excess potential
is not wasted. Furthermore, we show that if potential is treated as a
\emph{relevant} resource then we can derive lower bounds on the resource
usage. Following a similar intuition, this requires that all potential is
used, but the available potential does not need to be sufficient to cover the
remaining cost of the execution.

%These insights are essential to the guarantees provided by our approach. But
%they are also relevant to other problems that require automatic quantitative
%reasoning about programs' resource usage. 
We implemented these type systems
in Resource Aware ML (RAML)~\cite{HoffmannW15}, a language that supports
user-defined data types, higher-order functions, and other features common to
functional languages. Our type inference uses efficient LP solving
to characterize resource usage for general-purpose programs in this language.
We formalized soundness proofs for these type systems, as well as the one of
classic linear AARA~\cite{Jost03}, in the proof assistant Agda. The soundness
is proved \wrt{} an operational cost semantics and, like the type systems
themselves, is parametric in the resource of interest.
\paragraph{Contributions} We make the following contributions:
\begin{itemize}

\item A security type system that incorporates
  our novel lower-bound and constant-time type systems to prevent and quantify
  leakage of secrets through resource side channels, as well as an
  LP-based method that automatically transforms programs into constant-resource versions.

\item An implementation of these systems that extends RAML. We evaluate the
  implementation on several examples, including encryption routines and data
  processing programs that were previously studied in the context of timing
  leaks in differentially-private systems~\cite{HaeberlenPA11}.

% \item The first type-based AARA's that derives lower bounds.
% \item The first type-based AARA that proves constant resource use.
\item A mechanization of the soundness proofs the two new type systems
  and classic AARA for upper bounds in Agda. To the best our knowledge, this
  is also the first formalization of the soundness of linear AARA for
  worst-case bounds.
% \item An implementation of the type systems for lower bounds and
%   constant resource usage in RAML.

% \item A new information-flow type systems that prevents information
%   leakage that is based on side channels
% \item A new method to use LP solving to automatically turn programs
%   into constant-resource programs and the implementation of this
%   method in RAML.
\end{itemize}
Technical details including the complete proofs and inference rules can be
found on the RAML website~\cite{HoffmannW15}.
%in a companion technical report~\cite{sidechanTR}. 
%Our implementation will be made available prior to publication, along with the examples listed in our evaluation section.

%%% Local Variables:
%%% mode: latex
%%% TeX-master: "main.tex"
%%% End:

%\vspace{-.2cm}
\section{Language-level constant-resource programs}
%\vspace{-.05cm}
In this section, we define our notion of a constant-resource program. We start
with an illustrative example: a login with a username and password. During the
login process, the secret password with a high security level is compared with
the low-security user input, and the result is sent back to the user. As
a result, the pure noninterference property~\cite{Goguen82,McLean90} is
violated because data flows from high to low. Nevertheless, such a program is
often considered secure because it satisfies the \emph{relaxed
noninterference} property~\cite{Myers98,Myers00,Li05}.

Fig.~\ref{fig:compare} shows an implementation of the login process in a 
monomorphically-typed purely-functional language. The arguments
$h$ and $l$ are lists of integers that are the bytes of the password and the user
input (characters of the hashes). The function returns \code{true} if
the input is valid and \code{false} otherwise.

This implementation is vulnerable against an attacker who measures the
execution time of the login function. Because the function returns
\code{false} immediately on finding a mismatched pair of bytes, the resource
usage depends on the size of the longest matching prefix. Based on this
observation, the attacker can mount an efficient attack to recover the correct
password byte-by-byte. For example, if we assume that there is no noise
in the measurements, it requires at most $256 = 2^8$ calls to the function to
reveal one byte of the secret password. Thus, at most $256*N$ runs are needed
to recover a secret password of $N$ bytes.
If noise is added to the measurements then the number of necessary guesses
is increased but the attack remains feasible~\cite{Kocher96,BrumleyB03}.

One method to prevent this sort of attack is to develop a
\emph{constant-resource} implementation of the \code{compare} function
that minimizes the information that an attacker can learn from the
resource-usage information. Ideally, the resource usage should not be
dependent on the content of the secret password, which means it is
constant for fixed sizes of all public parameters.
%We note that this constant-resource implementation can only ensure 
%the control flow of the function is totally independent of the secret password 
%at the algorithmic level. Which means it may allows a different type of leakage that 
%could reveal the secret password on a lower level, e.g., physical environment 
%on modern hardware. For instance, the memory accesses in the function would 
%allow an attacker with access to the full trace of memory addresses accessed 
%by the function to infer the content of the password. This leakage can be 
%exploited via cache-timing attacks~\cite{timingAES,Percival05}, in which 
%sensitive data is leaked to accessed addresses from cache hit/miss patterns.
%
\paragraph{Syntax and semantics} We use the purely-functional first-order language defined in
Fig.~\ref{fig:language} to formally define the notion of a
language-level constant-time implementation. The grammar is written
using abstract binding trees~\cite{Foundationpl}. However, equivalent
expressions in OCaml syntax are used for examples. The expressions are
in \emph{let normal form}, meaning that they are formed from variables
whenever it is possible. It makes the typing rules and semantics
simpler without losing expressivity. The syntactic form \emph{share}
has to be use to introduce multiple occurrences of a variable in an
expression.  A value is a boolean constant, an integer value $n$,
the empty list $\word{nil}$,
a list of values $[v_1,...,v_n]$, or a pair of values $(v_1,v_2)$.
\begin{figure}[!t]
% \vspace{-.2cm}
\begin{lstlisting}
let rec compare(h,l) = match h with
| [] -> match l with | [] -> true 
                     | y::ys -> false
| x::xs -> match l with
  | [] -> false
  | y::ys -> if (x = y) then compare(xs,ys)
             else false
\end{lstlisting}
\vspace{-.2cm}
\caption{The list comparison function \code{compare} is not constant resource
\wrt{} $h$ and $l$. This implementation is insecure against an attacker who
measures its resource usage.}
\label{fig:compare}
\end{figure}
\begin{figure}[!thb]
\vspace{-.2cm}
\begin{displaymath}
\begin{array}{rl}
T & ::= \type{unit} \mid \type{bool} \mid \type{int} \mid L(T) \mid T * T \\
G & ::= T \rightarrow T \\
e & ::= \word{()} \mid \word{true} \mid \word{false} \mid n \mid x \mid \text{op}_{\diamond}(x_1,x_2) \mid \text{app}(f,x) \\
& \mid \text{if}(x,e_t,e_f) \mid \text{let}(x,e_1,x.e_2) \mid \text{pair}(x_1,x_2) \mid \text{nil} \\
& \mid \text{match}(x,(x_1,x_2).e) \mid \text{cons}(x_1,x_2) \\
& \mid \text{match}(x,e_1,(x_1,x_2).e_2) \mid \text{share}(x,(x_1,x_2).e) \\
v & ::= \word{()} \mid \word{true} \mid \word{false} \mid n \mid \word{nil} \mid [v_1,...,v_n] \mid (v_1,v_2) \\
\diamond & \in \{{+},{-},{*},\word{div},\word{mod},{=},{<>},{>},{<},\word{and},\word{or}\}
\end{array}
\end{displaymath}
\vspace{-.2cm}
\caption{Syntax of the language}
\label{fig:language}
\end{figure}
To reason about
the resource consumption of programs, we first define the operational 
cost semantics of the language. It is standard big-step semantics instrumented with
a non-negative resource counter
that is incremented or decremented by a constant at every step.
The semantics is parametric in the cost that is used at
each step and we call a particular set of such cost parameters a
\emph{cost model}. The constants can be used to indicate the costs of
storing or loading a value in the memory, evaluating a primitive
operation, binding of a value in the environment, or branching on a
Boolean value.
It is possible to further parameterize some constants to obtain a more
precise cost model. For example, the cost of calling a function may
vary according to the number of the arguments. In the following, we
will show that the soundness of type systems does not rely on
any specific values for these constants. In the examples, we use a
cost model in which the constants are $0$
for all steps except for calls to the \emph{tick} function where
\code{tick(q)} means that we have resource usage
$q \in \mathbb{Q}$.
A negative number specifies that resources (such as stack space)
become available.

The cost semantics is formulated
using an \emph{environment} $\ev: \vid \rightarrow \val$ that is a
finite mapping from a set of variable identifiers to values.
% We write $\ev[x \mapsto v]$ to denote a new environment that
% extends $\ev$ by adding a new binding $x \mapsto v$.
%
Evaluation judgements are of the form
$\ev \rtyps{q}{q'} e \Downarrow v$ where $q, q' \in \mathbb{Q}^+_0$.
The intuitive meaning is that under the environment $\ev$ and $q$
available resources, $e$ evaluates to the value $v$ without running
out of resources and $q'$ resources are available after the
evaluation. The evaluation consumes $\delta = q - q'$ resource
units. Fig.~\ref{fig:bosrules} presents some selected evaluation
rules. In the rule \textsc{E:Fun} for function applications, $e_g$ is an
expression defining the function's body and $x^g$ is the argument.

\begin{figure*}[!t]
\vspace{-.2cm}
\begin{mathpar}
%---bin op
\RuleToplabel{E:Bin}
{
	v = \ev(x_1) \diamond \ev(x_2)
}
{
	\ev \rtyps{q + K^{\word{op}}}{q}  \text{op}_{\diamond}(x_1,x_2) \Downarrow v
}

%---fun
\RuleToplabel{E:Fun}
{
	% \tstack(g) = T_1 \rightarrow T_2 \\
	E[x^g \mapsto \ev(x)] \rtyps{q}{q'} e_g \Downarrow v
}
{
	\ev \rtyps{q+K^{\word{app}}}{q'} \text{app}(g,x) \Downarrow v
}

%---let
\RuleToplabel{E:Let}
{
	\ev \rtyps{q - K^{\word{let}}}{q'_1} e_1 \Downarrow v_1 \\
	\ev[x \mapsto v_1] \rtyps{q'_1}{q'} e_2 \Downarrow v}
{
	\ev \rtyps{q}{q'} \text{let}(x,e_1,x.e_2) \Downarrow v
}

%---var
\RuleToplabel{E:Var}
{
	x \in \dom{\ev}
}
{
	\ev \rtyps{q + K^{\word{var}}}{q} x \Downarrow \ev(x)
}
\hfill
%---if-true
\RuleToplabel{E:If-True}
{
	\ev(x) = \word{true} \\
	\ev \rtyps{q - K^{\word{cond}}}{q'} e_t \Downarrow v
}
{
	\ev \rtyps{q}{q'} \text{if}(x,e_t,e_f) \Downarrow v
}
\hfill
%---match-l
\RuleToplabel{E:Match-L}
{
	\ev(x) = [v_1,...,v_n] \\\\
	\ev[x_h \mapsto v_1,x_t \mapsto [v_2,...,v_n]] \rtyps{q - K^{\word{matchL}}}{q'} e_2 \Downarrow v
}
{
	\ev \rtyps{q}{q'} \text{match}(x,e_1,(x_h,x_t).e_2) \Downarrow v
}
\end{mathpar}
\vspace{-.2cm}
\caption{Selected evaluation rules of the operational cost semantics}
\label{fig:bosrules}
\end{figure*}

\begin{figure}[!t]
\vspace{-.2cm}
\begin{lstlisting}
let rec p_compare(h,l) =
let rec aux(r,h,l) = match h with
| [] -> match l with | [] -> tick(1.0); r
        | y::ys -> tick(1.0); false
| x::xs -> match l with
  | [] -> tick(1.0); false
  | y::ys -> if (x = y) then
      tick(5.0); aux(r,xs,ys)
    else tick(5.0); aux(false,xs,ys)
in aux(true,h,l)
\end{lstlisting}
\vspace{-.2cm}
\caption{The manually padded function \code{p\_compare} is constant resource
\wrt{} $h$ and $l$. However, it is not constant resource \wrt{} only $h$.}
\label{fig:pcompare}
% \vspace{-.5cm}
\end{figure}

\paragraph{Constant-resource programs} Let $\context: \vid \rightarrow \bdatatypes$ be a context that maps variable
identifiers to base types $T$. We write $\models v: T$
to denote that $v$ is a well-formed value of type $T$.
The typing rules for values are standard~\cite{Jost03,HoffmannAH11,sidechanTR}
and we omit them here. An environment $\ev$
is \emph{well-formed} \wrt{} $\context$,
denoted $\models \ev:\context$, if $\forall x \in \dom{\context}. \models \ev(x): \context(x)$. 
Below we define the notation of \emph{size equivalence}, written
$\sizeval{v} \esim{} \sizeval{u}$, which is a binary relation relating two values of the same type.
%---size equivalence
{
\def \MathparLineskip {\lineskip=0.1cm}
\begin{mathpar}
\RuleNolabel
{ \models v: T \\ \models u: T \\
	T \in \{\type{unit}, \type{bool},\type{int}\}
}
{
	\sizeval{v} \esim{} \sizeval{u}
}

\RuleNolabel
{
	\sizeval{v_1} \esim{} \sizeval{u_1} \\
	\sizeval{v_2} \esim{} \sizeval{u_2}
}
{
	\sizeval{(v_1,v_2)} \esim{} \sizeval{(u_1,u_2)}
}

\RuleNolabel
{
	m = n \\
	\sizeval{v_i} \esim{} \sizeval{u_i}
}
{
	\sizeval{[v_1,...v_n]} \esim{} \sizeval{[u_1,...u_m]}
}
\end{mathpar}}
%---end size equivalence

Informally, a program is \emph{constant resource} if it has the same
quantitative resource consumption under all environments in which
values have the same sizes.
Let $X \subseteq \dom{\context}$ be a set of variables and $\ev_1$, $\ev_2$
be two well-formed environments. Then $\ev_1$ and $\ev_2$ are
\emph{size-equivalent} \wrt{} $X$,
denoted $\ev_1 \esim{X} \ev_2$,
when they agree on the sizes of the variables in $X$,
that is,
$\forall x \in X. \sizeval{\ev_1(x)} \esim{} \sizeval{\ev_2(x)}$.
\begin{definition}
\label{def:constantresource}
An expression $e$ is constant resource \wrt{} $X \subseteq \dom{\context}$, written $\type{const}_X(e)$, if for all well-formed environments $\ev_1$ and $\ev_2$ such that $\ev_1 \esim{X} \ev_2$, the following statement holds.
\begin{displaymath}
	\text{If } \ev_1 \rtyps{p_1}{p_1'} e \Downarrow v_1 \text{ and } \ev_2 \rtyps{p_2}{p_2'} e \Downarrow v_2 \text{ then } p_1 - p_1' = p_2 - p_2'
\end{displaymath}
\end{definition}

We say that a function $g(x_1,\ldots,x_n)$
is constant resource \wrt{} $X$ if $\type{const}_X(e_g)$ where $e_g$ is the expression
defining the function body. We have the following lemma.
\begin{lemma}
\label{lemm:equalpotential}
	For all $e$, $X$, and $Y \subseteq X$, if $\type{const}_Y(e)$ then $\type{const}_X(e)$.
  %If $\type{const}_X(e)$ then $\type{const}_{X\cup\{x\}}(e)$ for all $X$,$x$, and $e$.
\end{lemma}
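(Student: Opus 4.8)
The plan is to prove the statement directly, by unfolding the two instances of the constant-resource definition and exploiting the fact that a larger index set imposes a \emph{stronger} size-equivalence constraint on environments, and hence a \emph{weaker} universally-quantified obligation. No induction on $e$ or on the evaluation derivation is required.

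First I would isolate the only substantive observation, a monotonicity property of size-equivalence: whenever $Y \subseteq X$ and $\ev_1 \esim{X} \ev_2$, we also have $\ev_1 \esim{Y} \ev_2$. This is immediate from the definition of $\esim{X}$, which demands $\sizeval{\ev_1(x)} \esim{} \sizeval{\ev_2(x)}$ for every $x \in X$; since $Y \subseteq X$, the same equivalence holds in particular for every $x \in Y$, which is exactly the definition of $\ev_1 \esim{Y} \ev_2$. Well-formedness of the environments is preserved trivially, as it does not mention the index set.

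Next, assuming $\type{const}_Y(e)$, I would take arbitrary well-formed environments $\ev_1, \ev_2$ with $\ev_1 \esim{X} \ev_2$ together with derivations $\ev_1 \rtyps{p_1}{p_1'} e \Downarrow v_1$ and $\ev_2 \rtyps{p_2}{p_2'} e \Downarrow v_2$. By the monotonicity observation we obtain $\ev_1 \esim{Y} \ev_2$, so the hypothesis $\type{const}_Y(e)$ applies to this very pair and yields $p_1 - p_1' = p_2 - p_2'$. As $\ev_1$ and $\ev_2$ were arbitrary among the environments that are size-equivalent \wrt{} $X$, this establishes $\type{const}_X(e)$.

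There is essentially no hard step: the entire content is the weakening fact that agreeing on the sizes of all variables in the larger set $X$ entails agreeing on the sizes of the variables in the subset $Y$. The only point worth stating carefully is the direction of the inclusion---$\type{const}_X$ ranges over strictly fewer environment pairs than $\type{const}_Y$, so the stronger hypothesis (constancy over all $Y$-equivalent pairs) indeed implies the weaker conclusion (constancy over the $X$-equivalent pairs).
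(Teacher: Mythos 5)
Your proof is correct: the whole content is the monotonicity observation that $\ev_1 \esim{X} \ev_2$ and $Y \subseteq X$ give $\ev_1 \esim{Y} \ev_2$, after which $\type{const}_Y(e)$ applies directly to the given pair of evaluations. The paper states this lemma without proof (treating it as immediate), and your argument is exactly the intended one, including the correctly identified direction that the larger index set strengthens the hypothesis on environment pairs and hence weakens the universally quantified claim.
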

%---example of constant-resource program
\begin{example*}
The function \code{p\_compare} in Fig.~\ref{fig:pcompare} is a manually padded version of \code{compare},
in which the cost model is
defined using \code{tick} annotations. It is constant resource \wrt{} $h$ and
$l$. However, it is not constant resource \wrt{} $h$. For instance,
\code{p\_compare([1;2;3],[0;1;2])} has cost $16$ but
\code{p\_compare([1;2;1],[0;1])} has cost $12 \neq 16$. 
If the nil case of the second match on $l$ is padded with \code{tick(5.0); aux(false,xs,[])} then the function is constant resource \wrt{} $h$.
\vspace{-.2cm}
\end{example*}
Intuitively, this implementation is constant \wrt{} the given cost
model for fixed sizes 
%\jan{values or seizes?} 
of all public
parameters, e.g., the lengths of argument lists.  However, it might be
not constant resource at a lower level, e.g., machine code
on modern hardware, because the cost model does not precisely
capture the resource consumption of the instructions executed on the
hardware. Moreover, the compilation process can interfere with the
resource behavior. It may introduce a different type of leakage that
could reveal the secret data on the lower level. For instance, memory
accesses would allow an attacker with access to the full trace of
memory addresses accessed to infer the content of the password. This
leakage can be exploited via cache-timing
attacks~\cite{timingAES,Percival05}. In addition, in
some modern processors, execution time of arithmetic operations may
vary depending on the values of their operands and the execution time
of conditionals is affected by branch prediction.
% This can enable attacks that use
% take fine-grained measurements of the execution time to infer
% confidential data.

%%% Local Variables:
%%% mode: latex
%%% mode: flyspell
%%% TeX-master: "main"
%%% End:

%!TEX root=main.tex
%\vspace{-.2cm}
\section{A resource-aware security type system}
\label{sec:securitytype}
%\vspace{-.05cm}
In this section we introduce a new type system that enforces
\emph{resource-aware noninterference} to prevent the leakage of information in
\emph{high-security} variables through \emph{low-security} channels. In
addition to preventing leakage over the usual input/output information flow
channels, our system incorporates the constant-resource type system
discussed in Section~\ref{sec:resourcetype} to ensure that leakage does not
occur over resource side channels.

%---constant resource
% \subsection{Resource-aware noninterference}
The notion of security addressed by our type system considers an attacker who
wishes to learn information about secret data by
making observations of the program's public outputs and resource usage. We
assume an attacker who is able to control the value of any variable she is
capable of observing, and thus to influence the program's behavior and resource
consumption. However, in our model the attacker can only observe the program's
total resource usage upon termination, and cannot distinguish between
intermediate states or between terminating and non-terminating executions.
%\vspace{-.1cm}
\subsection{Security types}
%\vspace{-.05cm}
To distinguish parts of the program under the attacker's control from those
that remain secret, we annotate types with labels ranging over a lattice
$(\mathcal{L}, \sqsubseteq, \sqcup, \perp)$. The elements of $\mathcal{L}$
correspond to security levels partially-ordered by $\sqsubseteq$ with a unique
bottom element $\perp$. The corresponding basic security types take the
form:
\begin{displaymath}
\begin{array}{rll}
k & \in & \mathcal{L} \\
S & \define & (\type{unit},k) \mid (\type{bool},k) \mid (\type{int},k) \mid (L(S),k) \mid S * S
\end{array}
\end{displaymath}
A security context $\seccontext$ is a partial mapping from variable
identifiers and the program counter $\type{pc}$ to security types. The context
assigns a type $(\type{unit},k)$ to $\type{pc}$ to track information
that may propagate through control flow as a result of branching statements.
The security type for lists contains a label $L(S)$ for the elements, as well
as a label $k$ for the list's length.

As in other information flow type systems, the partial order $k \sqsubseteq
k'$ indicates that the class $k'$ is at least as restrictive as $k$, i.e., $k$
is allowed to flow to $k'$. We assume a non-trivial security lattice that
contains at least two labels: $\llow$ (low security) and $\lhigh$ (high
security), with $\llow \sqsubseteq \lhigh$. Following the convention defined
in FlowCaml~\cite{flowCaml}, we also make use of a \emph{guard relation} $k
\triangleleft S$, which denotes that all of the labels appearing in $S$ are at
least as restrictive as $k$. The definition is given in Figure~\ref{fig:secsubtype}
along with its dual notion $S \blacktriangleleft k$, called the
\emph{collecting relation}, and the standard subtyping relation $S_1 \leq S_2$.
% Fig.~\ref{fig:secsubtype} defines the notation of
% \emph{guard}, written $k \triangleleft S$, which relates a security label $k$
% and a type $S$ if all of the security labels that appear with in $S$ are at
% least as restrictive as $k$. The dual to the guard is the \emph{collecting
% security label}, written $S \blacktriangleleft k$. The subtyping relation,
% written $S_1 \leq S_2$, relates two types $S_1$ and $S_2$. The guard relation
% is transitive when it interacts with the subtyping relation: $k' \sqsubseteq k
% \wedge k \triangleleft S \wedge S \leq S'
% \Rightarrow k' \triangleleft S'$.
%To track flows that may arise from observing the length of a list, we
%introduce an additional relation similar to $\blacktriangleleft$, written $S
%\blacktriangleleft^* k$. Its axioms are identical to $\blacktriangleleft$,
%except for the rule which applies to lists, where the premise is simply $k'
%\sqsubseteq k$.

To refer to sets of variables by security class, we write
$\uppervars{\seccontext}{k}$ to denote the set of variable identifiers $x$ in
the domain of $\seccontext$ such that $\seccontext(x) \blacktriangleleft k$,
and define $\lowervars{\seccontext}{k}$ similarly. This gives us the set of
variables upper- and lower-bounded by $k$, respectively. Conversely, we define
$\hiddenvars{\seccontext}{k} = \{x \in \dom{\seccontext} : \seccontext(x)
\not\blacktriangleleft k\}$, the set of variables more restrictive than $k$.
To refer to the set of variables strictly bounded below by $k_1$ and above by
$k_2$, we write $\midvars{\seccontext}{k_1}{k_2}$. Given two well-formed
environments $\ev_1$ and $\ev_2$, we say that they are
\emph{k-equivalent} w.r.t $\seccontext$
%, written $\ev_1 \secesim{k} \ev_2$, 
if they agree on all variables with label at most $k$:
\begin{displaymath}
  \ev_1 \secesim{k} \ev_2 \Leftrightarrow
  \forall x \in \uppervars{\seccontext}{k}. \ev_1(x) = \ev_2(x)
\end{displaymath}
This relation captures the attacker's \emph{observational equivalence} between the two environments.
%
%---subtypeing relation
\begin{figure}[!th]
  \def \MathparLineskip {\lineskip=0.45cm}
\begin{mathpar}
%---unit
\RuleNolabel
{
  k \sqsubseteq k'\\
        \!\!\!T \in \text{Atoms}
}
{
  k \triangleleft (T,k')
}
%---list
\hfill
\RuleNolabel
{
k \sqsubseteq k' \\
\!\!\! k \triangleleft S
}
{
k \triangleleft (L(S),k')
}
%---pair
\hfill
\RuleNolabel
{
k \triangleleft S_1 \\
\!\!\! k \triangleleft S_2
}
{
k \triangleleft S_1 * S_2
}

%---unit
\RuleNolabel
{
  k' \sqsubseteq k\\
        \!\!\!T \in \text{Atoms}
}
{
  (T,k') \blacktriangleleft k
}
\hfill
%---list
\RuleNolabel
{
k' \sqsubseteq k \\
\!\!\! S \blacktriangleleft k
}
{
(L(S),k') \blacktriangleleft k
}
\hfill
%---pair
\RuleNolabel
{
S_1 \blacktriangleleft k \\
\!\!\! S_2 \blacktriangleleft k
}
{
S_1 * S_2 \blacktriangleleft k
}

%---unit
\RuleNolabel
{
{k}{\sqsubseteq}{ k'}  \\
\!\!\!\! {T}{\in}{\text{Atoms}}
}
{
  (\type{T},k) \leq (\type{T},k')
}
\hfill
%---list
\RuleNolabel
{
{k}{\sqsubseteq}{ k'} \\
\!\!\!\! {S}{\leq}{S'}
}
{
{(L(S),k)}{\leq}{(L(S'),k')}
}
\hfill
%---pair
\RuleNolabel
{
{S_1}{\leq}{ S_1'} \\
\!\!\!\! {S_2}{\leq}{S_2'}
}
{
{S_1}{ *}{ S_2}{\leq}{ S_1'}{*}{ S_2'}
}
\end{mathpar}
\caption{Guards, collecting security labels, and subtyping
  ($\text{Atoms} = \{\type{unit}, \type{int}, \type{bool}\}$)}
\label{fig:secsubtype}
\end{figure}
The first-order security types take the following form. 
The annotation $\type{pc}$ indicates the security level of the program
counter, i.e., a lower-bound on the label of any observer who is allowed to
learn that a given function has been invoked.
The $\type{const}$ annotation denotes that the function body respects
\emph{resource-aware noninterference}. 
\begin{displaymath}
\begin{array}{lcr}
\type{pc} \in \mathcal{L}  &&
F^s \define S_1 \xrightarrow{\type{pc}/\type{const}} S_2  \mid
S_1 \xrightarrow{\type{pc}} S_2
\end{array}
\end{displaymath}
A \emph{security
signature} $\sectstack: \fid \rightarrow \wp(\secfotypes) \setminus
\{\emptyset\}$ is a finite partial mapping from a set of function identifiers
to a \emph{non-empty sets} of first-order security types.%  The underlying

\subsection{Resource-aware noninterference}
%\vspace{-.05cm}
We consider an adversary associated with label $k_1 \in \mathcal{L}$, who can
observe and control variables in $\uppervars{\seccontext}{k_1}$. Intuitively,
we say that a program $P$ satisfies resource-aware noninterference at level
$(k_1, k_2)$ w.r.t $\seccontext$, where $k_1 \sqsubseteq k_2$, if
\emph{1)} the behavior of $P$ does not leak any information about the contents
of variables more sensitive than $k_1$, and
\emph{2)} does not leak any information about the contents \emph{or sizes} of
variables more sensitive than $k_2$. The definition follows.

%---secure programs
\vspace{-.1cm}
\begin{definition}
\label{def:secureprogram}

Let $\ev_1$ and $\ev_2$ be two well-formed environments and $\seccontext$ be a
security context sharing their domain. An expression $e$ satisfies
resource-aware noninterference at level $(k_1,k_2)$ for $k_1 \sqsubseteq k_2$,
if whenever $\ev_1$ and $\ev_2$ are:
\begin{enumerate}
\item observationally equivalent at $k_1$: $\ev_1 \secesim{k_1} \ev_2$,
\item size equivalent \wrt{} $\midvars{\seccontext}{k_1}{k_2}$: $\ev_1 \esim{\midvars{\seccontext}{k_1}{k_2}} \ev_2$
\end{enumerate}
then it follows from
$\ev_1 \rtyps{p_1}{p_1'} e \Downarrow v_1$ and $\ev_2 \rtyps{p_2}{p_2'} e \Downarrow v_2$ that
${v_1} = {v_2}$ and ${p_1}-{p_1'} = {p_2} - {p_2'}$.
% An expression $e$ satisfies the resource-aware non-interference property \wrt{} $X$ at security label $k$, if for all well-formed environments $\ev_1$ and $\ev_2$ such that $\ev_1 \secesim{k} \ev_2$, then it is constant \wrt{} $X$ and satisfies the noninterference property at security label $k$.
\end{definition}
\vspace{-.1cm}

\noindent
The final condition in Defintion~\ref{def:secureprogram} ensures two properties. First, requiring that $v_1 = v_2$ provides noninterference~\cite{Goguen82}, given that $E_1$ and $E_2$ are observationally equivalent. Second, the requirement $p_1 - p_1' = p_2 - p_2'$ ensures that the program's resource consumption will remain constant w.r.t changes in variables from the set $\hiddenvars{\seccontext}{k_1}.$ This establishes noninterference w.r.t the program's final resource consumption, and thus prevents the leakage of secret information.%through resource side-channels.

Before moving on, we point out an important subtlety in this definition. We require that all variables in $\midvars{\seccontext}{k_1}{k_2}$ begin with equivalent sizes, but not those in $\lowervars{\seccontext}{k_2}$. By fixing this quantity in the initial environments, we assume that an attacker is able to control and observe it, so it is not protected by the definition. This effectively establishes three classes of variables, i.e., those whose size and content are observable to the $k_1$-adversary, those whose size (but \emph{not} content) is observable, and those whose size and content remain secret. In the remainder of the text, we will simplify the technical development by assuming that the third and most-restrictive class is empty, and that all of the secret variables reside in $\midvars{\seccontext}{k_1}{k_2}.$

\subsection{Proving resource-aware noninterference}
%\vspace{-.05cm}
There are two extreme ways of proving resource-aware noninterference.
Assume we already have established classic noninterference, the first way is to
additionally prove constant-resource usage \emph{globally} by
forgetting the security labels and showing that the program has
constant-resource usage. This is a sound approach but it requires us
to reason about parts of the programs that are not affected by secret
data. It would therefore result in the rejection of programs that have the
resource-aware noninterference property but are not constant resource.
The second way is to prove constant resource usage
\emph{locally} by ensuring that every conditional that branches
on secret values is constant time. However, this local
approach is problematic because it is not compositional. Consider the
following examples where \code{rev} is the reverse function.
\vspace{-.1cm}
\begin{lstlisting}[label=lst:pccond]
let f1(b,x) =
  let z = if b then x else [] in rev z

let f2(b,x,y) =
  let z = if b then let _ = rev y in x
          else let _ = rev x in y in rev z
\end{lstlisting}
\vspace{-.1cm}
If we assume a cost model in which we count the number of function
calls then the cost of \code{rev(x)} is $|x|$.
So \code{rev} is constant resource \wrt{} its argument. Moreover, the
expression \code{if b then x else []} is constant resource. However,
\code{f1} is not constant resource. In contrast, the conditional in 
\code{f2} is not constant resource. But \code{f2} is
a constant-resource function. The function \code{f2} can be
automatically analyzed with the constant-resource type system from
Section~\ref{sec:resourcetype} while \code{f1} is correctly rejected.

The idea of our type system for resource-aware noninterference is to
allow both global and local reasoning about resource consumption as
well as arbitrary intermediate levels. \emph{We ensure that every
  expression that is typed in a high security context is part of a
  constant-resource expression.}  In this way, we get the benefits
of local reasoning without losing compositionality.

%---rules
%\vspace{-.1cm}
\subsection{Typing rules and soundness}
%\vspace{-.05cm}
We combine our type system for constant resource
usage with a standard information flow type system which based on
FlowCaml~\cite{Pottier02}. The interface between the two type systems
is relatively light and the idea is applicable to other cost-analysis methods 
%for proving constant resource use 
as well as other security type systems.

In the type judgement, an expression is typed under a type context
$\seccontext$ and a label \type{pc}. The \type{pc} label can be
considered an upper bound on the security labels of all values that
affect the control flow of the expression and a lower bound on the
labels of the function's effects~\cite{Pottier02}.
As mentioned earlier, we will simplify the technical development by assuming
that the third and most-restrictive class is empty, 
%and that all of the secret variables reside in $\midvars{\seccontext}{k_1}{k_2}$, say $X$, 
that is, the typing rules here
guarantee that well-typed expressions provably satisfy the resource-aware
noninterference property \wrt{} changes in variables from the set $\hiddenvars{\seccontext}{k_1}$, say $X$.
%Let $k$ be a predefined security label. The typing rules guarantee that well-typed expressions
%provably satisfy the resource-aware noninterference property w.r.t the set $X$ at the
%predefined security label $k$. We consider here the set $X$ consists of all high security data,
%that is, $\forall x \in X. \seccontext(x) \not \blacktriangleleft k$.
%In the following, we write $\type{const}_{X}(e)$ if there exists $\rescontext$ and $\reststack$
%such that $\reststack;\rescontext \rtyps{q}{q'} e:A$, $\share(A \mid A,A)$,
%$\forall x \in \dom{\rescontext} \setminus X. \share(\rescontext(x)
%\mid \rescontext(x),\rescontext(x))$. Whenever we say an
%expression is resource-aware noninterference we mean w.r.t $X$.
We define two type judgements of the form
%, and
%$\forall x \not \in \hiddenvars{\seccontext}{k_1}. \share(\rescontext(x) \mid \rescontext(x),\rescontext(x))$.
\begin{displaymath}
\begin{array}{lcl}
\type{pc};\sectstack;\seccontext \rtyps{\type{const}}{} e: S & \text{and} &
\type{pc};\sectstack;\seccontext \rtyps{}{} e: S \;.
\end{array}
\end{displaymath}
The judgement with the $\type{const}$
annotation states that under a security configuration given by
$\seccontext$
and the label $\type{pc}$,
$e$
has type $S$
and it satisfies resource-ware noninterference \wrt{} changes in variables from $X$.
%\jan{what does ``with type S'' mean??}
The second judgement indicates that $e$
satisfies the noninterference property but does not make any
guarantees about resource-based side channels.
Selected typing rules are given in Fig. \ref{fig:securityrules}. We
implicitly assume that the security types and the resource-annotated
counterparts have the same base types.
We write $[\type{const}]$ to denote it is optional, and $\type{const}_{X}(e)$
if $e$ is
%constant-resource expression \wrt{} the variables in the set $\hiddenvars{\seccontext}{k_1}$.
well-typed in the constant-resource type system \wrt{} $X$ (i.e., $\reststack;\rescontext \rtyps{q}{q'} e:A$, $\share(A \mid A,A)$, and $\forall x \in \dom{\rescontext} \setminus X. \share(\rescontext(x)
\mid \rescontext(x),\rescontext(x))$). We will discuss the
constant-resource type system in Section~\ref{sec:resourcetype}.

%---typing rules
\begin{figure*}[!htb]
\begin{mathpar}
\small
%---bin opt
%\RuleToplabel{SR:B-Op}
%{
%x_1:(\type{bool},k_{x_1}) \in \seccontext \\
%x_2:(\type{bool},k_{x_2}) \in \seccontext \\
%\type{pc} \sqsubseteq k_{x_1} \sqcup k_{x_2} \\
%\diamond \in \{\word{and},\word{or}\}
%}
%{
%\type{pc};\sectstack;\seccontext \rtyps{\type{const}}{} \text{op}_{\diamond}(x_1,x_2):(\type{bool},k_{x_1} \sqcup k_{x_2})
%}
%
%---general rule
\RuleToplabel{SR:Gen}
{
\type{pc};\sectstack;\seccontext \rtyps{\type{const}}{} e: S
}
{
\type{pc};\sectstack;\seccontext \vdash e: S
}

%---constant general rule
\RuleToplabel{SR:C-Gen}
{
\type{pc};\sectstack;\seccontext \vdash e: S \\
\type{const}_{X}(e)
}
{
\type{pc};\sectstack;\seccontext \rtyps{\type{const}}{} e: S
}

%----subtyping
\RuleToplabel{SR:SubTyping}
{
\type{pc};\sectstack;\seccontext \rtyps{[\type{const}]}{} e: S \\
S \leq S'
}
{
\type{pc};\sectstack;\seccontext \rtyps{[\type{const}]}{} e: S'
}

\RuleToplabel{SR:Fun}
{
\sectstack(f) = S_1 \xrightarrow{\type{pc}'} S_2 \\\\
x: S_1 \in \seccontext \\
\type{pc} \sqsubseteq \type{pc}'
}
{
\type{pc};\sectstack;\seccontext \vdash \text{app}(f,x): S_2
}

%---fun
\RuleToplabel{SR:L-Arg}
{
x: S_1 \in \seccontext \\
\sectstack(f) = S_1 \xrightarrow{\type{pc}'} S_2 \\\\
\type{pc} \sqsubseteq \type{pc}' \\
S_1 \blacktriangleleft k_1
}
{
\type{pc};\sectstack;\seccontext \rtyps{\type{const}}{} \text{app}(f,x): S_2
}

\RuleToplabel{SR:C-Fun}
{
\sectstack(f) = S_1 \xrightarrow{\type{pc}'/\type{const}} S_2 \\\\
x: S_1 \in \seccontext \\
\type{pc} \sqsubseteq \type{pc}'
}
{
\type{pc};\sectstack;\seccontext \rtyps{\type{const}}{} \text{app}(f,x): S_2
}

%---if
\RuleToplabel{SR:If}
{
x:(\type{bool},k_x) \in \seccontext \\
\type{pc} \sqcup k_x;\sectstack;\seccontext \vdash e_t: S  \\\\
\type{pc} \sqcup k_x;\sectstack;\seccontext \vdash e_f: S \\
\type{pc} \sqcup k_x \triangleleft S
}
{\type{pc};\sectstack;\seccontext \vdash \text{if}(x,e_t,e_f): S}

\RuleToplabel{SR:L-If}
{
\type{pc} \sqcup k_x;\sectstack;\seccontext \rtyps{\type{const}}{} e_t: S  \\
\type{pc} \sqcup k_x;\sectstack;\seccontext \rtyps{\type{const}}{} e_f: S \\\\
x:(\type{bool},k_x) \in \seccontext \\
\type{pc} \sqcup k_x \triangleleft S \\
k_x \sqsubseteq k_1
}
{\type{pc};\sectstack;\seccontext \rtyps{\type{const}}{} \text{if}(x,e_t,e_f): S}
%
%\RuleToplabel{SR:SubTyping}
%{
%\type{pc};\sectstack;\seccontext \vdash e: S \\
%S \leq S'
%}
%{
%\type{pc};\sectstack;\seccontext \vdash e: S'
%}

%---let
\RuleToplabel{SR:Let}
{
\type{pc};\sectstack;\seccontext \vdash e_1: S_1 \\
\type{pc};\sectstack;\seccontext,x: S_1 \vdash e_2: S_2
}
{
\type{pc};\sectstack;\seccontext \vdash \text{let}(x,e_1,x.e_2): S_2
}

\RuleToplabel{SR:L-Let}
{
\type{pc};\sectstack;\seccontext \rtyps{\type{const}}{} e_1: S_1 \\
S_1 \blacktriangleleft k_1 \\\\
\type{pc};\sectstack;\seccontext,x: S_1 \rtyps{\type{const}}{} e_2: S_2
}
{
\type{pc};\sectstack;\seccontext \rtyps{\type{const}}{} \text{let}(x,e_1,x.e_2): S_2
}
%---match-l
%\RuleToplabel{SR:Match-L}
%{
%x:(L(S),k_x) \in \seccontext \\
%\type{pc} \sqcup k_x;\sectstack;\seccontext \vdash e_1: S_1 \\\\
%\type{pc} \sqcup k_x;\sectstack;\seccontext,x_h: S,x_t:(L(S),k_x) \vdash e_2: S_1 \\
%\type{pc} \sqcup k_x \triangleleft S_1
%}
%{
%\type{pc};\sectstack;\seccontext \vdash \text{match}(x,e_1,(x_h,x_t).e_2): S_1
%}

\RuleToplabel{SR:C-Match-L}
{
x:(L(S),k_x) \in \seccontext \\
\type{pc} \sqcup k_x;\sectstack;\seccontext \rtyps{\type{const}}{} e_1: S_1 \\
k_x \sqsubseteq k_1 \\\\
\type{pc} \sqcup k_x;\sectstack;\seccontext,x_h: S,x_t:(L(S),k_x) \rtyps{\type{const}}{} e_2: S_1 \\
\type{pc} \sqcup k_x \triangleleft S_1
}
{
\type{pc};\sectstack;\seccontext \rtyps{\type{const}}{} \text{match}(x,e_1,(x_h,x_t).e_2): S_1
}
\end{mathpar}
\caption{Selected security typing rules}
\vspace{-.3cm}
\label{fig:securityrules}
\end{figure*}

Note that the standard
information flow typing rules~\cite{Heintze98,Pottier02} can be
obtained by removing the \type{const} annotations from all judgements.
Consider for instance the rule \textsc{SR:If} for conditional
expressions. By
executing the true or false branches, an adversary could gain
information about the conditional value whose security label is $k_x$.
Therefore, the
conditional expression must be type-checked under a security
assumption at least as restrictive as \type{pc} and $k_x$. This is a
standard requirement in any information flow type system.
In the following, we will focus on explaining how the rules restrict the
observable resource usage instead of these classic noninterference
aspects.

%---if
The most interesting rules are \textsc{SR:C-Gen} and the rules for 
\code{let} and \code{if} expressions, which block leakage over resource
usage when branching on high security data. \textsc{SR:C-Gen} allows
us to \emph{globally} reason about constant resource usage for an
arbitrary subexpression that has the noninterference property. For
example, we can apply \textsc{SR:If}, the standard %information flow
rule for conditionals, first and then \textsc{SR:C-Gen} to prove that 
its super-expression is constant resource. Alternatively, we can use
rules such as \textsc{SR:L-If} and \textsc{SR:L-Let} to \emph{locally} reason
about resource use.
The rule \textsc{SR:L-Let} reflects the fact that if both $e_1$
and $e_2$
have the resource-aware noninterference property and the size of $x$
only depends on low security data then $\text{let}(x,e_1,x.e_2)$
respects resource-aware noninterference. The reasoning is
similar for \textsc{SR:L-If} where we require that the variable
$x$ does not depend on high security data.

%---basic rules
Leaf expressions such as $\text{op}_{\diamond}(x_1,x_2)$
and $\text{cons}(x_h,x_t)$
have constant resource usage. Thus their judgements are always
associated with the qualifier $\type{const}$
as shown in the rule \textsc{SR:B-Op}.
%---function call
The rule \textsc{SR:C-Fun} states that if a function's body has
the resource-aware noninterference property then the function application
has the resource-aware noninterference property too. If the argument's
label is low security data, bounded below by $k_1$, then the function application
has the resource-aware noninterference property since the value of the
argument is always the same under any $k$-equivalent environments.
It is reflected by rule \textsc{SR:L-Arg}.

%---discussion here
%This type system guarantees the whole program is secure or \emph{global security} instead of \emph{local security} which means every sub-expression is secure. However, one can easily adapt the type system above for local security by changing the rules such that every expression must be associated with the qualifier \type{const}. For instance, in rule \textsc{SR:If}, $e_t$ and $e_f$ must be associated with \type{const} and $\text{if}(x,e_t,e_f)$ is well-typed in the constant resource type system.
%
\vspace{-.1cm}
\begin{example*}
Recall the functions \code{compare} and \code{p\_compare} in Fig.~\ref{fig:compare}.
Suppose the content of the first list is secret and the length is public. Thus it has type
$(L(\type{int},h),\ell)$. While the second list controlled by adversaries is public, hence it has type
$(L(\type{int},\ell),\ell)$. Assume that the \type{pc} label is $\ell$ and $\hiddenvars{\seccontext}{k_1}$ = $\hiddenvars{\seccontext}{\ell}$.
The return value's label depends on the content of the elements of the first list whose label is $h$.
Thus it must be assigned the label $h$ to make the functions well-typed.
\begin{displaymath}
\begin{array}{@{}r@{\;}l}
\word{compare}: & ((L(\type{int},h),\ell),(L(\type{int},\ell),\ell)) \xrightarrow{\ell} (\type{bool},h) \\
\word{p\_compare}: & ((L(\type{int},h),\ell),(L(\type{int},\ell),\ell)) \xrightarrow{\ell/\type{const}} (\type{bool},h)
\end{array}
\end{displaymath}
Here, both functions satisfy the noninterference property at security label $\ell$.
However, only \code{p\_compare} is a resource-aware noninterference function
\wrt{} $\hiddenvars{\seccontext}{\ell}$, or the secret list.
\end{example*}

\begin{example*}
Consider the following function \code{cond\_rev} in which \code{rev} is the standard reverse function.
\begin{lstlisting}
let cond_rev(l1,l2,b1,b2) = if b1 then
  let r = if b2 then rev l1; l2
  else rev l2; l1 in rev r; () else ()
\end{lstlisting}
Assume that $l_1$, $l_2$, $b_1$ and $b_2$ have types $(L(\type{int},h),\ell)$, $(L(\type{int},h),\ell)$, $(\type{bool},\ell)$, and
$(\type{bool},h)$, respectively. Given the \code{rev} function is constant \wrt{} the argument, the inner conditional
does not satisfy resource-aware noninterference.
However, the \code{let} expression satisfies resource-aware noninterference 
\wrt{} $\hiddenvars{\seccontext}{\ell}$ = $\{l_1,l_2,b_2\}$.
We can derive this by applying the rule \textsc{SR:C-Gen}.
By the rule \textsc{SR:L-If}, the outer conditional on low security data 
satisfies resource-aware noninterference \wrt{} $\{l_1,l_2,b_2\}$ at level $\ell$. We derive the following type.
\begin{displaymath}
\begin{array}{@{}r@{\;}l}
\word{cond\_rev}: & ((L(\type{int},h),\ell),(L(\type{int},h),\ell),(\type{bool},\ell),(\type{bool},h)) \\
						& \xrightarrow{\ell/\type{const}} (\type{unit},\ell)
\end{array}
\end{displaymath}
\end{example*}
\vspace{-.1cm}
We now prove the soundness of the type system \wrt{} the resource-aware noninterference property. 
It states that if $e$
is a well-typed expression with the \type{const} annotation then $e$ is a
resource-aware noninterference expression at level $k_1$.
%Therefore, any adversary with security label $k_{A}$
%such that $k_{A} \sqsubseteq k$
%can only learn the length of data with security label greater than $k$
%by observing the resource usage of a program.
%\jan{the previous should be moved to 4.1 (attacker model)}

The following two lemmas are needed in the soundness proof. The first lemma
states that the type system satisfies the standard \emph{simple
  security} property~\cite{Volpano96} and the second shows that the
type system prove classic noninterference.
\begin{lemma}
\label{lem:simplesecurity}
Let $\type{pc};\sectstack;\seccontext \vdash e: S$ or $\type{pc};\sectstack;\seccontext \rtyps{\type{const}}{} e: S$. For all variables $x$ in $e$, if $S \blacktriangleleft k_1$ then $\seccontext(x) \blacktriangleleft k_1$.
\end{lemma}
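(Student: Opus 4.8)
The plan is to prove Lemma~\ref{lem:simplesecurity} by induction on the derivation of the typing judgement $\type{pc};\sectstack;\seccontext \vdash e: S$ (or its $\type{const}$-annotated variant). The simple security property asserts that if the result type $S$ is observable to a $k_1$-adversary (i.e.\ $S \blacktriangleleft k_1$), then every free variable $x$ occurring in $e$ must itself be observable, i.e.\ $\seccontext(x) \blacktriangleleft k_1$. Intuitively, the statement records that an expression cannot produce a low-security result while reading high-security inputs, which is exactly what the information-flow discipline in the rules of Fig.~\ref{fig:securityrules} is designed to enforce. Since the two forms of judgement differ only by the optional $\type{const}$ annotation, and rules \textsc{SR:Gen} and \textsc{SR:C-Gen} let us move freely between them without changing $\seccontext$, $e$, or $S$, I expect these two cases to be handled immediately by the induction hypothesis.

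First I would treat the leaf cases. For a variable expression $x$, the typing rule forces $\seccontext(x) = S$, so $S \blacktriangleleft k_1$ gives $\seccontext(x) \blacktriangleleft k_1$ directly. For operator and constructor expressions such as $\text{op}_{\diamond}(x_1,x_2)$ and $\text{cons}(x_h,x_t)$, I would appeal to the shape of the corresponding rule: the result type $S$ collects the labels of its constituent arguments, so $S \blacktriangleleft k_1$ propagates down to each argument's type via the collecting relation $\blacktriangleleft$ defined in Fig.~\ref{fig:secsubtype}. The key structural fact I would use repeatedly is that $\blacktriangleleft$ descends into pairs, lists, and atoms componentwise, matching exactly how argument types are packaged into result types in each introduction rule.

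For the inductive cases I would use the congruence rules. In \textsc{SR:Subtyping}, from $S \leq S'$ and $S' \blacktriangleleft k_1$ I first derive $S \blacktriangleleft k_1$ (subtyping only lowers labels toward $\bot$, so it preserves the collecting relation downward), then apply the induction hypothesis. The branching rules \textsc{SR:If}, \textsc{SR:L-If}, and \textsc{SR:C-Match-L} are the most delicate: there the guard condition $\type{pc} \sqcup k_x \triangleleft S$ ensures that whenever $S$ is low ($S \blacktriangleleft k_1$), the scrutinee label $k_x$ is also low, so the matched or branched-on variable $x$ satisfies $\seccontext(x) \blacktriangleleft k_1$; the induction hypothesis then handles the free variables inside each branch, whose result types are again $S$. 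For \textsc{SR:Let}, the bound variable $x$ is removed from consideration by substitution, so I would apply the induction hypothesis to $e_2$ under the extended context and to $e_1$ with its own result type $S_1$, then combine.

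The main obstacle will be the function-application rules \textsc{SR:Fun}, \textsc{SR:C-Fun}, and \textsc{SR:L-Arg}. Here the result type $S_2$ comes from the signature $\sectstack(f)$ rather than being built syntactically from the argument, so I cannot read off the argument label from $S_2$ alone. The fix is to rely on a well-formedness invariant on security signatures: any signature type $S_1 \xrightarrow{\type{pc}'} S_2$ must satisfy $\type{pc}' \triangleleft S_2$, guaranteeing that a low result type forces a low $\type{pc}'$, which by the side condition $\type{pc} \sqsubseteq \type{pc}'$ and the guard $S_1 \blacktriangleleft k_1$ (available in \textsc{SR:L-Arg}) or an analogous invariant $S_2 \blacktriangleleft k_1 \Rightarrow S_1 \blacktriangleleft k_1$ bounds the argument type. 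I would make this signature well-formedness assumption explicit and verify that it is preserved by the function-typing discipline, since the whole case turns on it.
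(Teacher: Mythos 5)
Your proposal is correct and follows essentially the same route as the paper: induction on the typing derivation, with the variable/operator/constructor cases read off from the componentwise structure of $\blacktriangleleft$, the branching cases discharged by combining the guard premise $\type{pc}\sqcup k_x \triangleleft S$ with $S \blacktriangleleft k_1$ to force $k_x \sqsubseteq k_1$, subtyping handled by noting $S \leq S'$ and $S' \blacktriangleleft k_1$ imply $S \blacktriangleleft k_1$, and the let case obtained by first applying the induction hypothesis to $e_2$ to conclude $S_1 \blacktriangleleft k_1$. The one point of divergence is the application case: you propose an explicit well-formedness invariant on security signatures ($S_2 \blacktriangleleft k_1 \Rightarrow S_1 \blacktriangleleft k_1$), whereas the paper appeals to well-formedness of the program to obtain a typing $\type{pc}';\sectstack;\seccontext \vdash e_f : S_2$ of the function body and applies the induction hypothesis to that derivation; these are two packagings of the same fact, since your invariant is established precisely by applying the lemma to the body. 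Note only that your first candidate invariant, $\type{pc}' \triangleleft S_2$, does not by itself bound the argument type, so the case really does rest on the second formulation you give.
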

% \begin{proof}
% By induction on the structure of the typing derivation.
% \end{proof}
\begin{lemma}
\label{lem:noninterference}
Let $\type{pc};\sectstack;\seccontext \vdash e: S$ or $\type{pc};\sectstack;\seccontext \rtyps{\type{const}}{} e: S$, $\ev_1 \vdash e \Downarrow v_1$, $\ev_2 \vdash e \Downarrow v_2$, and $\ev_1 \secesim{k_1} \ev_2$. Then $v_1 = v_2$ if $S \blacktriangleleft k_1$.
\end{lemma}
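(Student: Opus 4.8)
The plan is to argue by induction on the big-step evaluation derivation of the first run, $\ev_1 \vdash e \Downarrow v_1$, with the second run $\ev_2 \vdash e \Downarrow v_2$ carried along, doing a case analysis on the typing rule that concludes the judgement for $e$. The conclusion $v_1 = v_2$ only has to be shown under the hypothesis $S \blacktriangleleft k_1$, so I assume it throughout; when it fails there is nothing to prove. Since the $\type{const}$ annotations carry only resource information and each $\type{const}$ rule shares the information-flow premises of its plain counterpart, the value-level argument is insensitive to them: I treat both judgement forms uniformly and appeal only to the information-flow premises (label orderings, the guard and collecting relations, and subtyping). The purely administrative rules \textsc{SR:Gen}, \textsc{SR:C-Gen}, and \textsc{SR:SubTyping} are peeled off first; for \textsc{SR:SubTyping} I use that $S \leq S'$ together with $S' \blacktriangleleft k_1$ forces $S \blacktriangleleft k_1$, after which the premise is handled by a side induction on the typing derivation.

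The workhorse of the whole proof is Lemma~\ref{lem:simplesecurity}: whenever $S \blacktriangleleft k_1$, every variable occurring in $e$ has a context type that is also $\blacktriangleleft k_1$, hence lies in $\uppervars{\seccontext}{k_1}$ and is assigned equal values by the two $k_1$-equivalent environments. This disposes of all leaf cases at once: for a variable, a constant, or a primitive $\text{op}_{\diamond}(x_1,x_2)$, the operands agree in $\ev_1$ and $\ev_2$, so the results coincide; $\text{nil}$, $\text{cons}$, and $\text{pair}$ close componentwise by the same observation.

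The branching and binding cases are where the information-flow discipline does its work. For $\text{if}(x,e_t,e_f)$ and for the list and pair $\text{match}$ forms with observable result type, simple security forces the scrutinee $x$ to be fully low, so $\ev_1(x) = \ev_2(x)$; both runs therefore take the \emph{same} branch and, in the $\text{cons}$ case, bind $x_h,x_t$ to identical values, keeping the extended environments $k_1$-equivalent, so the induction hypothesis applies to the selected branch (whose result type is still $S \blacktriangleleft k_1$). This is exactly where the guard side-condition $\type{pc} \sqcup k_x \triangleleft S$ of these rules is used: it is what makes simple security applicable and prevents the control flow from diverging on an observable output. For $\text{let}(x,e_1,x.e_2)$ I split on whether the type $S_1$ of $e_1$ satisfies $S_1 \blacktriangleleft k_1$: if it does, the induction hypothesis on $e_1$ equates the two bound values and $x$ stays observable; if it does not, $x$ is invisible to the adversary and the extended environments remain $k_1$-equivalent regardless. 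Either way the induction hypothesis on $e_2$ at type $S_2 = S$ closes the case.

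The hard part will be the function-application case $\text{app}(f,x)$, for two reasons. First, structural induction on the typing derivation breaks here, because the body $e_g$ is not a syntactic subterm of the application; this is precisely why I induct on the evaluation derivation, in which rule \textsc{E:Fun} makes the evaluation of $e_g$ a strict sub-derivation and so keeps the induction hypothesis available. Second, to use the induction hypothesis on $e_g$ I need a typing derivation for the body against its declared signature, which I take from the standing well-formedness assumption that every function body type-checks against one of the first-order security types in $\sectstack(f)$. Simple security applied to the application (whose only free variable is $x$) gives $\seccontext(x) \blacktriangleleft k_1$, hence $\ev_1(x) = \ev_2(x)$, so the two call environments agree on the parameter and are $k_1$-equivalent; the induction hypothesis on the body, whose result type is $S_2 = S \blacktriangleleft k_1$, then yields $v_1 = v_2$. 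The remaining bookkeeping---verifying that in each non-branching case the second run really does proceed by the matching evaluation rule, so that the two sub-derivations can be paired---is routine once simple security has ruled out observable divergence.
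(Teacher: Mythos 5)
Your proposal matches the paper's proof essentially step for step: the same lexicographic induction on the evaluation derivation with the typing derivation secondary (motivated, as you note, by the \textsc{E:Fun} case), the same reliance on Lemma~\ref{lem:simplesecurity} to force scrutinees and operands low whenever the result type is $\blacktriangleleft k_1$, and the same treatment of \textsc{SR:Gen}, \textsc{SR:C-Gen}, and \textsc{SR:SubTyping} as administrative peel-offs. The only divergence is cosmetic: in the let case the paper derives $S_1 \blacktriangleleft k_1$ directly from simple security applied to $e_2$, whereas you case-split on whether $S_1 \blacktriangleleft k_1$ holds --- a slightly more careful rendering of the same argument.
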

% \begin{proof}
% The proof is done by induction on the structure of the evaluation derivation and the typing derivation. We show one case of the conditional expression to provide some flavour of the whole proof. Suppose $e$ is of the form $\text{if}(x,e_t,e_f)$, the evaluation derivation ends with an application of the rule \textsc{E:If-True} or the rule \textsc{E:If-False}. The typing derivation ends with an application of the following rules.
% \begin{itemize}
% 	\item Case \textsc{SR:L-If}. By the hypothesis we have $k_x \sqsubseteq k$, thus $\ev_1(x) = \ev_2(x)$. Assume that $\ev_1(x) = \word{true}$, then $\ev_1 \vdash e_t \Downarrow v_1$ and $\ev_2 \vdash e_t \Downarrow v_2$. By induction for $e_t$ we have $v_1 = v_2$ if $S \blacktriangleleft k$. It is similar for $\ev_1(x) = \word{false}$.
% 	\item Case \textsc{SR:If}. If $k_x \sqsubseteq k$ the proof is similar to the case \textsc{SR:L-If}. Otherwise, $k_x \not \sqsubseteq k$, thus by the simple security lemma we have $S \not \blacktriangleleft k$.
% 	\item Case \textsc{SR:Gen} and \textsc{SR:C-Gen}. By induction for $e$ in the premise, it follows that if $S \blacktriangleleft k$ then $v_1 = v_2$.
% \end{itemize}
% \end{proof}
%---secure theorem
\begin{theorem}
\label{theo:securitysoundness}
If $\models \ev: \seccontext$, $\ev \vdash e \Downarrow v$, and $\type{pc};\sectstack;\seccontext \rtyps{\type{const}}{} e: S$
then $e$ is a resource-aware noninterference expression at $k_1$.
\end{theorem}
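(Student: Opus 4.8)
The plan is to prove the strengthened statement that the two hypotheses of Definition~\ref{def:secureprogram} form an inductive invariant: for every pair of well-formed environments $\ev_1,\ev_2$ with $\ev_1 \secesim{k_1} \ev_2$ and $\ev_1 \esim{\midvars{\seccontext}{k_1}{k_2}} \ev_2$, any two terminating evaluations $\ev_1 \rtyps{p_1}{p_1'} e \Downarrow v_1$ and $\ev_2 \rtyps{p_2}{p_2'} e \Downarrow v_2$ satisfy $p_1-p_1' = p_2-p_2'$, and moreover $v_1 = v_2$ whenever $S \blacktriangleleft k_1$. Under the standing simplification that the most-restrictive class is empty we have $\midvars{\seccontext}{k_1}{k_2} = \hiddenvars{\seccontext}{k_1}$, so the set fixed by hypothesis (2) is exactly the secret set $X$. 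I would argue by induction on the height of the derivation of $\ev_1 \rtyps{p_1}{p_1'} e \Downarrow v_1$, with an inner case analysis on the last rule of the const-typing derivation; induction on evaluation height rather than on the expression is what lets recursive calls---whose bodies appear as strictly shorter sub-derivations---be handled, and I assume as usual that the signature $\sectstack$ is valid. Throughout, value agreement is inherited verbatim from classical noninterference (Lemma~\ref{lem:noninterference}), so the real work is the resource equality.

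For the local rules the branching or argument variable is forced to be public. In \textsc{SR:L-If} and \textsc{SR:C-Match-L} the side condition $k_x \sqsubseteq k_1$ together with $\ev_1 \secesim{k_1} \ev_2$ makes both runs select the \emph{same} branch (the guard, resp.\ the list's emptiness, is observable), so the executions proceed in lockstep; I apply the induction hypothesis to the common sub-derivation and add the fixed per-construct cost. The one point needing care is descending into the cons-branch of \textsc{SR:C-Match-L}, which extends the environments with a secret head $x_h{:}S$ and tail $x_t$: here I must check that size-equivalence of the matched list (invariant (2)) propagates to size-equivalence of $x_h$ and $x_t$, which it does by the definition of $\esim{}$ on lists, so the invariant survives the binding. \textsc{SR:L-Let} is the compositional heart of the local fragment: its premise $S_1 \blacktriangleleft k_1$ forces the result of $e_1$ to be observable, so Lemma~\ref{lem:noninterference} gives that the value bound to $x$ is \emph{identical} in both runs, which restores $\secesim{k_1}$ for the extended environments and licenses the induction hypothesis on $e_2$. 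This is precisely the step that accepts \code{f2} yet rejects \code{f1}. For \textsc{SR:L-Arg} the argument is public ($S_1 \blacktriangleleft k_1$), so the two callee environments coincide and the two body evaluations are literally equal.

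The rule \textsc{SR:C-Gen} is where local reasoning is abandoned in favour of the global guarantee, and I would not recurse there. Its premises are a plain noninterference derivation and a constant-resource derivation $\type{const}_X(e)$ with $X = \hiddenvars{\seccontext}{k_1}$. I invoke the soundness of the constant-resource type system from Section~\ref{sec:resourcetype}, whose conclusion is exactly the semantic property of Definition~\ref{def:constantresource}: any two environments that are size-equivalent on $X$ incur the same cost. Invariant (2), $\ev_1 \esim{X} \ev_2$, supplies precisely this precondition, so $p_1-p_1'=p_2-p_2'$ follows at once---crucially \emph{without} any lockstep assumption, so that secret-dependent control flow inside $e$ is tolerated---while Lemma~\ref{lem:noninterference} on the plain premise yields $v_1=v_2$ when $S \blacktriangleleft k_1$. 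If the constant-resource derivation is obtained for a subset of $X$, Lemma~\ref{lemm:equalpotential} relaxes it to $X$. For \textsc{SR:C-Fun} the declared $\type{const}$ type of $f$, under signature validity, says the body is itself a resource-aware noninterference expression; I transfer invariants (1)--(2) from the argument $x$ to the singleton callee environment and apply the body's property on the shorter evaluation.

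The main obstacle I expect is the \textsc{SR:C-Gen} case in combination with the preservation of invariant (2) through secret bindings. The delicate point is alignment: I must keep the \emph{semantic} notion of constant resource (Definition~\ref{def:constantresource}, used as a black box via constant-resource soundness) indexed by the \emph{same} set $X$ that hypothesis (2) of Definition~\ref{def:secureprogram} fixes, and I must ensure that this single size-equivalence invariant is never lost as the derivation descends---in particular that matches revealing a public length still leave the secret heads and tails size-equivalent. Because the global case permits the two runs to diverge, cost agreement there rests entirely on the linearity (all-potential-is-consumed) guarantee of the constant-resource system rather than on any structural correspondence between the executions; making that hand-off rigorous, and justifying the evaluation-height induction across (mutually) recursive calls under a valid signature, are the steps I expect to demand the most attention.
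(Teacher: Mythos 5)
Your proposal is correct and follows essentially the same route as the paper: a nested induction over the evaluation and typing derivations, with value agreement delegated to Lemma~\ref{lem:noninterference}, lockstep reasoning for the local rules (\textsc{SR:L-If}, \textsc{SR:L-Let}, \textsc{SR:L-Arg}, \textsc{SR:C-Match-L}) justified by the publicity side conditions, and the \textsc{SR:C-Gen} case discharged by handing the size-equivalence hypothesis on $X=\hiddenvars{\seccontext}{k_1}$ to the constant-resource soundness result (Theorem~\ref{theo:constant}). The points you flag as delicate---propagating size-equivalence through secret list bindings and transferring $\esim{X}$ from $\seccontext$ to $\rescontext$ before invoking the global guarantee---are exactly the steps the paper's proof also attends to.
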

\begin{proof}
The proof is done by induction on the structure of the typing derivation and the evaluation derivation. Let $X$ be the set of variables $\hiddenvars{\seccontext}{k_1}$. For all environments $\ev_1$, $\ev_2$ such that $\ev_1 \esim{X} \ev_2$ and $\ev_1 \secesim{k_1} \ev_2$, if $\ev_1 \rtyps{p_1}{p_1'} e \Downarrow v_1$ and $\ev_2 \rtyps{p_2}{p_2'} e \Downarrow v_2$. We then show that $p_1 - p_1' = p_2 - p_2'$ and $v_1 = v_2$ if $S \blacktriangleleft k_1$. We illustrate one case of the conditional expression. Suppose $e$ is of the form $\text{if}(x,e_t,e_f)$, thus the typing derivation ends with an application of either the rule \textsc{SR:L-If} or \textsc{SR:C-Gen}. By Lemma \ref{lem:noninterference}, if $S \blacktriangleleft k_1$ then $v_1 = v_2$.
\vspace{-.1cm}
\begin{itemize}
	\item Case \textsc{SR:L-If}. By the hypothesis we have $\ev_1(x) = \ev_2(x)$. Assume that $\ev_1(x) = \ev_2(x) = \word{true}$, by the evaluation rule \textsc{E:If-True}, $\ev_1 \rtyps{p_1-K^{\word{cond}}}{p_1'} e_t \Downarrow v_1$ and $\ev_2 \rtyps{p_2-K^{\word{cond}}}{p_2'} e_t \Downarrow v_2$. By induction for $e_t$ we have $p_1 - p_1' = p_2 - p_2'$. It is similar for $\ev_1(x) = \ev_2(x) = \word{false}$.
	\item Case \textsc{SR:C-Gen}. Since $\ev_1 \esim{X} \ev_2$ \wrt{} $\seccontext$, we have $\ev_1 \esim{X} \ev_2$ \wrt{} $\rescontext$. By the hypothesis we have $\type{const}_{X}(e)$. Thus by Theorem \ref{theo:constant}, it follows $p_1 - p_1' = p_2 - p_2'$.
\end{itemize}
\vspace{-.45cm}
\end{proof}
%%% Local Variables:
%%% mode: latex
%%% mode: flyspell
%%% TeX-master: "main.tex"
%%% End:

%\vspace{-.2cm}
\section{Type systems for lower bounds and constant resource usage}
\label{sec:resourcetype}
%\vspace{-.05cm}
We now discuss how to automatically and statically verify constant
resource usage, upper bounds, and lower bounds. For upper bounds we
rely on existing work on automatic amortized resource
analysis~\cite{Jost03,HoffmannW15}. This technique is based on an
affine type system. For constant resource usage and lower bounds we
introduce two new sub-structural resource-annotated type systems: The
type system for constant resource usage is linear and the one for
lower bounds is relevant.
%\vspace{-.2cm}
\subsection{Background}
%\vspace{-.05cm}
%---amortized analysis
\paragraph{Amortized analysis}
%The potential method of amortized analysis has been introduced~\cite{tarjan85} to bound the \emph{worst-case resource usage} of a sequence of data structure operations. The key idea is to incorporate a non-negative potential into the analysis that can be used to pay (costly) operations.
To statically analyze a program with the potential method~\cite{tarjan85}, a mapping from program points to potentials must be established. One has to show that the potential at every program point suffices to cover the cost of any possible evaluation step and the % required
potential of the next program point. The initial potential is then an upper bound on the resource usage of the program.
%---linear upper bound

\paragraph{Linear potential for upper bounds}
To automate amortized analysis, we fix a format of the potential
functions and use LP solving to find the optimal coefficients.
To infer linear potential functions, inductive data types are
annotated with a non-negative rational numbers $q$~\cite{Jost03}. For
example, the type $L^q(\word{bool})$ of Boolean lists with potential
$q$ defines potential $q{\cdot}n$, where $n$ is the number of list's elements.
% Type rules statically verify that the initial potential is
% sufficient to cover the operations over the data structure for any
% possible evaluation of the program.

This idea is best explained by example. Consider the function
\code{filter\_succ} below that filters out positive numbers and
increments non-positive numbers. As in RAML, we use OCaml syntax and
\code{tick} commands to specify resource usage. If we filter out a
number then we have a high cost ($8$ resource units) since \code{x}
is, e.g., sent to an external device. If \code{x} is incremented we
have a lower cost of $3$ resource units. As a result, the worst-case
resource consumption of \code{filter\_succ($\ell$)} is $8|\ell| + 1$
(where $1$ is for the cost that occurs in the nil case of the match).
The function \code{fs\_twice($\ell$)} applies \code{filter\_succ}
twice, to $\ell$ and to the result of \code{filter\_succ($\ell$)}. The
worst-case behavior appears if no list element is filtered out in the
first call and all elements are filtered out in the second call. The
worst-case behavior is thus $11|\ell| + 2$.
\begin{figure}[!t]
\vspace{-.2cm}
\begin{lstlisting}
let rec filter_succ(l) =
match l with
| [] -> tick(1.0); []
| x::xs -> if x > 0 then
    tick(8.0); filter_succ(xs)
  else tick(3.0); (x+1)::filter_succ(xs)

let fs_twice(l) = 
  filter_succ(filter_succ(l))
\end{lstlisting}
\vspace{-.2cm}
  \caption{Two OCaml functions with linear resource usage. The
    \emph{worst-case} number of \emph{ticks} executed by
    \code{fitler\_succ($\ell$)} and \code{fs\_twice($\ell$)} is
    $8|\ell|+1$ and $11|\ell|+2$ respectively. In the \emph{best-case}
    the functions execute $3|\ell|+1$ and $6|\ell|+2$ ticks,
    respectively. The resource consumption is not constant.}
\label{fig:filter}
\vspace{-.5cm}
\end{figure}
These upper bounds can be expressed with the following annotated
function types, which can be derived using local type rules in Fig.~\ref{fig:constantrules}.
\begin{displaymath}
\begin{array}{rl}
\word{filter\_succ}: & L^{8}(\type{int}) \xrightarrow{1/0} L^{0}(\type{int)} \\
\word{fs\_twice}: & L^{11}(\type{int}) \xrightarrow{2/0} L^{0}(\type{int)}
\end{array}
\end{displaymath}
Intuitively, the first function type states that an initial potential
of $8|\ell| + 1$ is sufficient to cover the cost of
\code{filter\_succ($\ell$)} and there is $0|\ell'| + 0$ potential left
where $\ell'$ is the result of the computation. This is just one possible
potential annotation of many. The right choice of the potential
annotation depends on the use of the function result. For example, for
the inner call of \code{filter\_succ} in \code{fs\_twice} we need the
following annotation.
\begin{displaymath}
\begin{array}{rl}
\word{filter\_succ}: & L^{11}(\type{int}) \xrightarrow{2/1} L^{8}(\type{int)}
\end{array}
\end{displaymath}
It states that the initial potential of $11|\ell| + 2$ is sufficient
to cover the cost of \code{filter\_succ($\ell$)} and there is
$8|\ell'| + 1$ potential left to be assigned to the returned list
$\ell'$. The potential of the result can then be used with the
previous type of \code{filter\_succ} to pay for the cost of the outer
call.
\begin{displaymath}
\begin{array}{rll}
\word{filter\_succ}: & L^{p}(\type{int}) \xrightarrow{q/q'} L^{r}(\type{int)}
\mid&
q \geq q' {+} 1 \, \land  \,
p \geq 8 \, \\ && \land \,
p \geq 3 {+} r
\end{array}
\end{displaymath}
We can summarize all possible types of \code{filter\_succ} with a
linear constraint system. In the type inference, we generate such
a constraint system and solve it with an off-the-shelf LP solver. To obtain tight bounds, we perform a
whole-program analysis and minimize the coefficients in the input
potential.

Surprisingly, this approach---as well as the new concepts we introduce
here---can be extended to polynomial bounds~\cite{HoffmannAH12},
higher-order functions~\cite{Jost10,HoffmannW15}, polymorphism~\cite{Jost09}, and
user-defined inductive types~\cite{Jost09,HoffmannW15}.
%---rev example
% We use the example in Listing~\ref{lst:listcomparison} written in ML-like syntax, $\word{list\_compare}: L(\word{bool}) \rightarrow \word{bool}$, to illustrate the analysis. To compare the input list $l$ to the confidential list $h$ whose length is
% known in public, the function loops through elements in $l$ and compares it with the elements with the same index $i$ in $h$. If the elements under comparison are not equal it return \word{false}. Otherwise, it continues and returns $\word{true}$ if two lists are pairwise equivalent. We suppose that the length equivalence has checked before.
% \begin{lstlisting}[label=lst:listcomparison,caption=List comparison with timing leakage]
% list_compare l = let list_compare_aux i l1 =
%   match l1 with | [] -> true
%   | x::xs -> let b = compare_at i x in
%     if b then list_compare_aux (i + 1) xs else false
%   in list_compare_aux 0 l
% \end{lstlisting}
% Suppose we are interested in the execution time and the costs for setting up pattern matching, conditional checking, let-binding, setting up for function call, and returning a variable is $1$ time unit, while \word{compare\_at} consumes $2$ units. Then the execution time is at most $2 + 6|l|$ where $|l|$ is the number of elements. To infer this bound, we associate inductive types and function types with unknown numbers in $\mathbb{Q}^+_0$.

%---annotations
%\vspace{-.1cm}
\subsection{Resource annotations}
%\vspace{-.05cm}
% To explain the new ideas of this work, we focus on linear potential
% annotations for lists in a first-order language. However, the results
% extend to and have been implement with multivariate polynomial
% potential, user-defined data types and higher-order functions.

The resource-annotated types are base types in which the inductive
data types are annotated with non-negative rational numbers, called
\emph{resource annotations}. % These numbers represent a potential per
% element in the list that can be used to ``pay'' for resource
% consumption during the evaluation of a program. The annotated
% data types of the language are given as follows.
\begin{displaymath}
\begin{array}{ll}
A \define \type{unit} \mid \type{bool} \mid \type{int} \mid L^p(A) \mid A * A & (\text{for } p \in \mathbb{Q}^{+}_{0})
\end{array}
\end{displaymath}
% Let $\datatypes$
% be the set of resource-annotated data types.
A type context,
$\rescontext: \vid \rightarrow \datatypes$,
is a partial mapping from variable identifiers to resource-annotated
types. The \emph{underlying} base type and context denoted
by $\atype{A}$,
and $\atype{\rescontext}$
respectively can be obtained by removing the annotations.  We extend
all definitions such as $\sizeval{v}$,
$\models \ev:\context$
and $\esim{}$
for base data types to resource-annotated data types by ignoring the
annotations.

We now formally define the notation of \emph{potential}. The potential of a value $v$ of type $A$, written $\poten{}(v:A)$, is defined by the function $\poten{}: \val \rightarrow \mathbb{Q}^+_0$ as follows.
\begin{displaymath}
\begin{array}{l}
\poten{}(\word{()}:\type{unit}) = \poten{}(b:\type{bool}) = \poten{}(n:\type{int}) = 0 \\
\poten{}((v_1,v_2):A_1 * A_2) = \poten{}(v_1:A_1) + \poten{}(v_2:A_2) \\
% \poten{}(\word{nil}:L^{p}(A)) = 0 \\
\poten{}([v_1, \cdots ,v_n]:L^{p}(A)) = n {\cdot} p + \Sigma^{n}_{i = 1}\poten{}(v_i:A)
\end{array}
\end{displaymath}
%---example
\begin{example*}
The potential of a list $v = [b_1,\cdots,b_n]$
of type $L^{p}(\type{bool})$
is $n{\cdot}p$.
Similarly, a list of lists of Booleans $v = [v_1,\cdots,v_n]$
of type $L^{p}(L^{q}(\type{bool}))$,
where $v_i = [b_{i1}, \cdots ,b_{im_i}]$,
has the potential $n{\cdot}p + (m_1 + \cdots + m_n) {\cdot}q$.
\end{example*}

Let $\rescontext$
be a context and $\ev$
be a well-formed environment \wrt{} $\rescontext$.
The potential of $X \subseteq \dom{\rescontext}$
under $\ev$
is defined as
$\poten{\ev}(X:\rescontext) = \Sigma_{x \in
  X}\poten{}(\ev(x):\rescontext(x))$. The potential of
$\rescontext$
is $\poten{\ev}(\rescontext) = \poten{\ev}(\dom{\rescontext}:\rescontext)$.
Note that if $x \not \in X$
then $\poten{\ev}(X:\rescontext) = \poten{\ev[x \mapsto v]}(X:\rescontext)$.
% i.e., unused or junk variable in the environment is not counted for
% potential.
The following lemma states that the potential is the same
under two well-formed size-equivalent environments.
\begin{lemma}
\label{lem:environmentpotential}
If $\ev_1 \esim{X} \ev_2$ then $\Phi_{\ev_1}(X:\rescontext) = \Phi_{\ev_2}(X:\rescontext)$.
\end{lemma}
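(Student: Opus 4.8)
The plan is to reduce the statement about environments to a pointwise statement about individual values, and then prove that statement by structural induction. Concretely, I would first isolate the auxiliary claim that size-equivalent values of the same type carry identical potential: if $\models v : A$, $\models u : A$, and $\sizeval{v} \esim{} \sizeval{u}$, then $\poten{}(v:A) = \poten{}(u:A)$. The lemma then follows immediately by unfolding the definition $\poten{\ev_1}(X:\rescontext) = \Sigma_{x \in X}\poten{}(\ev_1(x):\rescontext(x))$: since $\ev_1 \esim{X} \ev_2$ gives $\sizeval{\ev_1(x)} \esim{} \sizeval{\ev_2(x)}$ for every $x \in X$, and well-formedness of the two environments ensures that both $\ev_1(x)$ and $\ev_2(x)$ have type $\rescontext(x)$, the auxiliary claim makes every summand agree, and hence the two sums are equal.

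For the auxiliary claim, I would proceed by induction on the derivation of $\sizeval{v} \esim{} \sizeval{u}$ (equivalently, on the structure of the common type $A$), mirroring the three rules that define size equivalence and the three clauses that define potential. In the base case, where $A \in \{\type{unit},\type{bool},\type{int}\}$, both potentials are $0$ by definition, so they agree regardless of the actual values --- this is exactly where it matters that integer and Boolean potentials do not depend on content. For the product case $A = A_1 * A_2$, size equivalence provides $\sizeval{v_1} \esim{} \sizeval{u_1}$ and $\sizeval{v_2} \esim{} \sizeval{u_2}$; the induction hypothesis gives $\poten{}(v_i:A_i) = \poten{}(u_i:A_i)$ for $i \in \{1,2\}$, and additivity of potential over pairs closes the case. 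For the list case $A = L^{p}(A')$, size equivalence supplies both $m = n$ and componentwise $\sizeval{v_i} \esim{} \sizeval{u_i}$; the induction hypothesis yields $\poten{}(v_i:A') = \poten{}(u_i:A')$, so the element-potential sums coincide, and because the lengths match ($n = m$) the bulk terms $n{\cdot}p$ and $m{\cdot}p$ are equal as well.

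I do not expect a genuine obstacle here; the argument is a direct structural induction. The only point that deserves care is conceptual rather than technical: the proof works precisely because the potential function is defined to read off only the \emph{size} information of a value (list lengths and the shape of nested data), never the scalar content of integers or Booleans, which is exactly the information that $\esim{}$ equates. Making this alignment explicit --- matching each potential clause to the corresponding size-equivalence rule --- is what drives the induction, and it is worth stating the auxiliary value-level claim as a named sub-lemma so that the final summation over $X$ becomes a one-line consequence.
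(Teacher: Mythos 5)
Your proposal is correct and follows essentially the same route as the paper's proof, which likewise establishes the pointwise claim that $\sizeval{\ev_1(x)} \esim{} \sizeval{\ev_2(x)}$ implies $\poten{}(\ev_1(x):\rescontext(x)) = \poten{}(\ev_2(x):\rescontext(x))$ by induction on the definitions of potential and size equivalence, and then sums over $X$. Your write-up is simply a more explicit elaboration of the same argument, with the auxiliary value-level claim spelled out case by case.
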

% \begin{proof}
% The claim is proved by induction on the definitions of potential and size-equivalence, in which $\sizeval{\ev_1(x)} \esim{} \sizeval{\ev_2(x)}$ implies $\poten{}(\ev_1(x):\rescontext(x)) = \poten{}(\ev_2(x):\rescontext(x))$.
% \end{proof}

Annotated first-order data types are given as follows, where $q$ and $q'$ are rational numbers.
\begin{displaymath}
	F ::= A_1 \xrightarrow{q/q'} A_2
\end{displaymath}
% It states that given the potential of the argument and the constant potential $q$. After the function call the potential left over is in the result and the constant potential $q'$.

% Let $\fotypes$
% be the set of the annotated first-order types.
A resource-annotated signature
$\reststack: \fid \rightarrow \wp(\fotypes) \setminus \{\emptyset\}$
is a partial mapping from function identifiers to a non-empty sets of
annotated first-order types. That means a function can have different
resource annotations depending on the context. The \emph{underlying}
base types are denoted by
$\atype{F}.$ %  = \atype{A_1} \rightarrow \atype{A_2}
% if $F = A_1 \xrightarrow{q/q'} A_2$.
And the underlying base signature is denoted by $\atype{\reststack}$
where $\atype{\reststack}(f) = \atype{\reststack(f)}$.
% for all $f \in \dom{\reststack}$.

%---typing rules
\begin{figure*}[!htb]
\begin{mathpar}
\small
%---var
\RuleToplabel{A:Var}
{
}
{
	\reststack;x: A \rtyps{K^{\word{var}}}{0} x: A
}
\and
%---bin bool
\RuleToplabel{A:B-Op}
{
	\diamond \in \{\word{and},\word{or}\}
}
{
	\reststack;x_1:\type{bool},x_2:\type{bool} \rtyps{K^{\word{op}}}{0} \text{op}_{\diamond}(x_1,x_2):\type{bool}
}
\and
%---fun
\RuleToplabel{A:Fun}
{
	\reststack(f) = A_1 \xrightarrow{q/q'} A_2
}
{
	\reststack;x: A_1 \rtyps{q {+} K^{\word{app}}}{q'} \text{app}(f,x): A_2
}
\and
%---let
\RuleToplabel{A:Let}
{
	\!\!\! \reststack;\rescontext_1 \rtyps{q {-} K^{\word{let}}}{q'_1} e_1: A_1 \\
	\reststack;\rescontext_2,x: A_1 \rtyps{q'_1}{q'} e_2: A_2
}
{
	\!\! \reststack;\rescontext_1,\rescontext_2 \rtyps{q}{q'} \text{let}(x,e_1,x.e_2): A_2
}
%---match-p
% \RuleToplabel{A:Match-P}
% {
% 	\tstack;\context,x_1:A_1,x_2:A_2 \rtyps{q - K^{\word{matchP}}}{q'} e: A
% }
% {
% 	\tstack;\context,x:A_1 * A_2 \rtyps{q}{q'} \text{match}(x,(x_1,x_2).e): A
% }
\and
%---if
\RuleToplabel{A:If}
{
	\reststack;\rescontext \rtyps{q - K^{\word{cond}}}{q'} e_t: A \!\! \\
	\reststack;\rescontext \rtyps{q - K^{\word{cond}}}{q'} e_f: A \!\!\!
}
{
	\reststack;\rescontext,x:\type{bool} \rtyps{q}{q'} \text{if}(x,e_t,e_f): A
}
\and
%---match-l
\RuleToplabel{A:Match-L}
{
	\reststack;\rescontext \rtyps{q - K^{\word{matchN}}}{q'} e_1: A_1 \\
	\reststack;\rescontext,x_h:A,x_t:L^{p}(A) \rtyps{q + p - K^{\word{matchL}}}{q'} e_2:A_1
}
{
	\reststack;\rescontext,x:L^{p}(A) \rtyps{q}{q'} \text{match}(x,e_1,(x_h,x_t).e_2):A_1
}
\and
%---cons
\RuleToplabel{A:Cons}
{
}
{
	\reststack;x_h:A,x_t:L^{p}(A) \rtyps{p + K^{\word{cons}}}{0} \text{cons}(x_h,x_t):L^{p}(A)
}
\and
\RuleToplabel{A:Share}
{
	\reststack;\rescontext,x_1:A_1,x_2:A_2 \rtyps{q}{q'} e:B \\
	\share(A \mid A_1,A_2)
}
{
	\reststack;\rescontext,x:A \rtyps{q}{q'} \text{share}(x,(x_1,x_2).e):B
}
\end{mathpar}
\caption{Selected syntax-directed rules of the resource type systems. They are shared among all type systems.}
\vspace{-.1cm}
\label{fig:constantrules}
\end{figure*}
%---typying rules
%\vspace{-.1cm}
\subsection{Type system for constant resource consumption}
%\vspace{-.05cm}
The typing rules of the constant-resource type system define judgements
of the form:
\begin{displaymath}
	\reststack;\rescontext \rtyps{q}{q'} e: A
\end{displaymath}
where $e$
is an expression and $q, q' \in \mathbb{Q}^+_0$.
The intended meaning is that in the environment $\ev$,
$q + \poten{\ev}(\rescontext)$
resource units are sufficient to evaluate $e$
to a value $v$
with type $A$
and there are \emph{exactly} $q' + \poten{}(v: A)$
resource units left over.

The typing rules form a \emph{linear} type system. It ensures that
every variable is used exactly once by allowing exchange but
not weakening or contraction~\cite{Walker02}. The rules can
be organized into syntax directed and structural rules.
%---basic rules
%\vspace{-.1cm}
\paragraph{Syntax-directed rules} The syntax-directed rules are shared
among all type systems and selected rules are listed in
Fig.~\ref{fig:constantrules}. Rules like \textsc{A:Var} and
\textsc{A:B-Op} for leaf expressions (e.g., variable, binary
operations, pairs) have fixed costs as specified by the constants
$K^x$.
Note that we require all available potential to be spent.
%---function call
%\jan{can we add and describe the rule for fun calls?}
The cost of the function call is represented by the constant
$K^{\word{app}}$ in the rule \textsc{A:Fun} and the argument
carries the potential to pay for the function execution.
%
%---let
In the rule \textsc{A:Let}, the cost of binding is represented by the
constant $K^{\word{let}}$.
The potentials carried by the contexts $\rescontext_1$
and $\rescontext_2$
are passed sequentially through the sub derivations. Note that the
contexts are disjoint since our type system is linear. Multiple uses
of variables must be introduced through the rule \textsc{A:Share}. Thus, the 
context split is deterministic.
%---if
The rule \textsc{A:If} is the key rule for ensuring
constant resource usage. By using the same context $\rescontext$
for typing both $e_t$
and $e_f$,
we ensure that the conditional expression has the same resource usage
in size-equivalent environments independent of the value of the
Boolean variable $x$.
%---list
The rules for inductive data types are crucial for the interaction of
the linear potential annotations with the constant potential, in which 
\textsc{A:Cons} shows how constant potential can be associated with a
new data structure. The dual is \textsc{A:Match-L}, which
shows how potential associated with data can be released. It is
important that these transitions are made in a linear fashion:
potential is neither lost or gained.
%---sharing
%\vspace{-.1cm}
\paragraph{Sharing relation}
%\vspace{-.15cm}
{\def \MathparLineskip {\lineskip=0.1cm}
\small
\begin{mathpar}
	%---unit, bool, and int
	\RuleNolabel
	{
		A \in \{\type{unit},\type{bool},\type{int}\}
	}
	{
		\share(A \mid A,A)
	}

  %---pair
  \RuleNolabel
  {
    \share(A \mid A_1,A_2) \\
    \share(B \mid B_1,B_2)
  }
  {
    \share(A * B \mid A_1 * B_1,A_2 * B_2)
  }

	%---list
	\RuleNolabel
	{
		\share(A \mid A_1,A_2) \\
		p = p_1 + p_2
	}
	{
		\share(L^{p}(A) \mid L^{p_1}(A_1),L^{p_2}(A_2))
	}
\end{mathpar}}
%\vspace{-.1cm}
The \emph{share expression} makes multiple uses of a variable
explicit. While multiple uses of a variable seem to be in conflict
with the linear type discipline, the \emph{sharing relation}
$\share(A \mid A_1,A_2)$
ensures that potential is treated in a linear way. It apportions
potential to ensure that the total potential associated with all uses
is equal to the potential initially associated with the variable. This
relation is only defined for structurally-identical types which differ
in at most the resource annotations.
%---shared type
% \begin{lemma}
% \label{lem:sharedtype}
% If $\share(A \mid A_1,A_2)$ then $\atype{A} = \atype{A_1} = \atype{A_2}$ and $\forall v$. $\Phi(v: A) = \Phi(v: A_1) + \Phi(v: A_2)$
% \end{lemma}
% \begin{proof}
% 	By induction on the definition of the sharing relation.
% \end{proof}

%\vspace{-.1cm}
\paragraph{Structural rules}
To allow more programs to be typed we add two structural rules to the
type system which can be applied to every expression. These rules are
specific to the the constant-resource type system.
\vspace{-.1cm}
\begin{mathpar}
\small
%----weakening
\RuleToplabel{C:Weakening}
{
	\reststack;\rescontext \rtyps{q}{q'} e:B \\
	\share(A \mid A,A)
}
{
	\reststack;\rescontext,x:A \rtyps{q}{q'} e:B
}
\hfill
%----relax
\RuleToplabel{C:Relax}
{
  \reststack;\rescontext \rtyps{p}{p'} e:A \\\\
  q \geq p \\
  q - p = q' - p'
}
{
	\reststack;\rescontext \rtyps{q}{q'} e:A
}
\end{mathpar}
\vspace{-.1cm}

%---relax and weakening
The rule \textsc{C:Relax} reflects the fact that if it is sufficient to
evaluate $e$
with $p$
available resource units and there are $p'$
resource units left over then $e$ can be evaluated with $p + c$
resource units and there are exactly $p' + c$
resource units left over, where $c \in \mathbb{Q}^+_0$.
Rule \textsc{C:Weakening} states that an extra variable can be added
into the given context if its potential is zero. The condition is
enforced by $\share(A \mid A,A)$
since $\poten{}(v:A) = \poten{}(v:A) + \poten{}(v:A)$
or $\poten{}(v:A) = 0$.
These rules can be used in branchings such as the conditional or the
pattern match to ensure that subexpressions are typed using the same
contexts and potential annotations.
\vspace{-.1cm}
\begin{example*}
  Consider again the function \code{p\_compare} in Fig.~\ref{fig:pcompare} in which
  the nil case of the second matching on $l$ is padded with \code{tick(5.0); aux(false,xs,[])}
  and the resource consumption is defined using \emph{tick} annotations. The
  resource usage of \code{p\_compare($h,\ell$)} is constant \wrt{} $h$, that is, it is exactly
  $5|h| + 1$. This is reflected by
  the following type.
\begin{displaymath}
\begin{array}{rl}
\word{p\_compare}: & (L^{5}(\type{int}),L^{0}(\type{int})) \xrightarrow{1/0} \type{bool}
\end{array}
\end{displaymath}
It can be understood as follows. If the input list $h$ carries $5$
potential units per element then it is sufficient to cover the cost
of \code{p\_compare($h,\ell$)}, no potential is wasted, and $0$ potential is left.
\end{example*}

%---soundness theorems
%\vspace{-.1cm}
\paragraph{Soundness}
That soundness theorem states that if $e$
is well-typed in the resource type system and evaluates to a value
$v$
then the difference between the initial and the final potential is the
net resources usage. Moreover, if the potential annotations of the
return value and all variables not belonging to a set
$X \subseteq \dom{\rescontext}$ are zero then $e$ is constant-resource \wrt{} $X$.
\vspace{-.1cm}
\begin{theorem}
\label{theo:soundness}
If $\models \ev : \rescontext$, $\ev \vdash e \Downarrow v$, and $\reststack;\rescontext \rtyps{q}{q'} e:A$, then for all $p, r \in \mathbb{Q}^+_0$ such that $p = q + \poten{\ev}(\rescontext) + r$, there exists $p' \in \mathbb{Q}^+_0$ satisfying $\ev \rtyps{p}{p'} e \Downarrow v$ and $p' = q' + \poten{}(v:A) + r$.
\end{theorem}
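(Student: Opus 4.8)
The plan is to prove the statement by induction following the structure of the evaluation derivation $\ev \vdash e \Downarrow v$, using the parameter $r \in \mathbb{Q}^+_0$ as a \emph{frame} that is threaded unchanged through the derivation and that lets the potential of ``the rest of the context'' be carried along in the compositional rules. The feature to establish is that the bookkeeping is \emph{exact}: the identity $p' = q' + \poten{}(v:A) + r$ must hold with equality, reflecting that the linear type system never wastes potential. Before the induction I would record three elementary facts: (i) potential is additive over disjoint context union, $\poten{\ev}(\rescontext_1,\rescontext_2) = \poten{\ev}(\rescontext_1) + \poten{\ev}(\rescontext_2)$, and is unaffected by binding a variable outside the context's domain; (ii) the sharing relation preserves potential, i.e.\ $\share(A \mid A_1,A_2)$ implies $\poten{}(w:A) = \poten{}(w:A_1) + \poten{}(w:A_2)$ for every value $w$, proved by induction on the derivation of $\share$; and (iii) evaluation preserves base-type well-formedness, so that whenever a sub-expression evaluates to a value the extended environment remains well-formed against the extended context. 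I also assume the standing \emph{signature validity} condition: for each $f$ with $A_1 \xrightarrow{q/q'} A_2 \in \reststack(f)$ the body $e_f$ satisfies $\reststack; x^f{:}A_1 \rtyps{q}{q'} e_f : A_2$.

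Because the type system contains the structural rules \textsc{C:Weakening} and \textsc{C:Relax}, which leave $e$ (and hence the evaluation derivation) unchanged, a plain induction on evaluation does not suffice. I would therefore use a lexicographic induction on the pair (height of the evaluation derivation, height of the typing derivation) and case-split on the last typing rule. When that rule is structural, the evaluation derivation is unchanged and the typing premise is strictly smaller, so the inner hypothesis applies: for \textsc{C:Weakening} the side condition $\share(A \mid A,A)$ forces $\poten{}(\ev(x):A)=0$ by fact (ii), so the added variable contributes nothing and the claim transfers verbatim; for \textsc{C:Relax}, whose premise types $e$ with annotations $a$ and $a'$ satisfying $a \le q$ and $q - a = q' - a'$, I set the nonnegative constant $c = q - a = q' - a'$, absorb it into the frame by applying the hypothesis with frame $r + c$, and re-associate $a' + c = q'$. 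For the syntax-directed leaf rules (\textsc{A:Var}, \textsc{A:B-Op}, \textsc{A:Cons}, and the nil/pair rules) the conclusion is a direct arithmetic check: the declared cost $K^{x}$ coincides with the step cost in the matching evaluation rule, and the only potential movement---e.g.\ the $+p$ contributed by a fresh cons cell in \textsc{A:Cons}---matches the change in $\poten{}(\cdot)$ of the produced value.

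For the compositional and binding rules the evaluation premises are strictly smaller, so the outer hypothesis applies and the work is to check that the accounting lines up. In \textsc{A:Let} I would apply the hypothesis to $e_1$ with context $\rescontext_1$ and frame $\poten{\ev}(\rescontext_2) + r$, obtaining an intermediate leftover $p_1'$, and then to $e_2$ with context $\rescontext_2, x{:}A_1$ and frame $r$, using additivity (i) together with $\poten{\ev[x \mapsto v_1]}(\rescontext_2) = \poten{\ev}(\rescontext_2)$ to see that $p_1'$ is exactly the required starting budget for $e_2$. Rule \textsc{A:If} is the heart of the constant-resource guarantee: both branches are typed with the \emph{same} context and the \emph{same} pre/post annotations $q - K^{\word{cond}}/q'$, so applying the hypothesis to whichever branch the semantics selects discharges the case with those shared annotations; it is precisely this sharing of $q,q'$ across branches that later yields constant usage once the result potential is forced to zero. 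In \textsc{A:Match-L} the cons case is where linearity is visible: the head/tail split moves exactly $p$ units out of $\poten{}(\ev(x):L^p(A))$ and into the $+p$ appearing in the premise budget $q + p - K^{\word{matchL}}$, so the release is neither lossy nor creative; the nil case is handled by the $e_1$ premise. \textsc{A:Share} binds $x_1,x_2$ to the same value and closes by fact (ii), and \textsc{A:Fun} applies the outer hypothesis to the body evaluation (a strictly smaller derivation) using the signature-validity typing. Nonnegativity of the constructed $p'$ is automatic since $p' = q' + \poten{}(v:A) + r$ is a sum of nonnegative quantities, and the same observation at each intermediate step shows the annotated evaluation never runs out of resources.

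I expect the main obstacle to be organizing the induction so that the non-syntax-directed rules \textsc{C:Weakening} and \textsc{C:Relax} are absorbed cleanly---getting the lexicographic measure right so the structural cases recurse on the typing derivation while the syntax-directed cases recurse on the evaluation derivation---and, relatedly, phrasing the frame $r$ together with the additivity and sharing lemmas strongly enough that the compositional rules (\textsc{A:Let}, \textsc{A:Match-L}, \textsc{A:Share}) go through with \emph{exact} equalities rather than inequalities. The arithmetic in each individual case is routine once this scaffolding is in place.
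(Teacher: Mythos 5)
Your proposal is correct and follows essentially the same route as the paper's proof: a lexicographic induction on (evaluation derivation, typing derivation) with evaluation taking priority, absorbing the structural rules \textsc{C:Weakening} and \textsc{C:Relax} by recursing on the typing derivation, threading the frame $r$ exactly as the paper does in the let case (instantiating it with $\poten{\ev}(\rescontext_2)+r$ for $e_1$), and using additivity of potential and the potential-preservation property of the sharing relation. The case analyses you sketch (leaves, \textsc{A:Let}, \textsc{A:If}, \textsc{A:Match-L}, \textsc{A:Share}, \textsc{A:Fun} via signature well-formedness) match the paper's appendix proof point for point.
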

\vspace{-.2cm}
\begin{proof}
  The proof proceeds by a nested induction on the derivation of
  the evaluation judgement and the typing judgement, in which the
  derivation of the evaluation judgement takes priority over the typing
  derivation. We need to induct on both, evaluation
  and typing derivation. An induction on only the
  typing derivation would fail for the case of function application,
  which increases the size of the typing derivation, while the
  size of the evaluation derivation does not increase. An induction on only the
  evaluation judgement would fail because of structural rules such as \textsc{C:Weakening}.
  If such a rule is the final step in the
  derivation then the size of typing derivation decreases while
  the length of evaluation derivation is unchanged. The additional
  constant $r$
  is needed to make the induction case for the let rule
  work. \vspace{-.2cm}

  % We show one case of pattern matching to provide a flavour of the
  % overall proof.

  % The evaluation derivation applies one of the rules
  % \textsc{E:Match-N} and \textsc{E:Match-L} depending on the value of
  % $x$.
  % Assume that $x$
  % was assigned the value $[v_1,....,v_n]$.
  % Then, the evaluation derivation ends with an application of the rule
  % \textsc{E:Match-L}. Let
  % $\ev_1 = \ev[x_h \mapsto v_1,x_t \mapsto [v_2,...,v_n]]$
  % and $\rescontext_1 = \rescontext,x_h:A,x_t:L^{p_1}(A)$,
  % the typing derivation ends with an application of the rule
  % \textsc{A:Match-L}, thus
  % $\reststack;\rescontext_1 \rtyps{q + p_1 - K^{\word{matchL}}}{q'}
  % e_2:A_1$.  Since $\models [v_1,...,v_n]: L^{p_1}(A)$,
  % we have $\models v_i: A, \forall i = 1,...,n$.
  % Hence, $\models v_1: A$,
  % $\models [v_2,...,v_n]: L^{p_1}(A)$ and $\models \ev_1: \context_1$.

  % For all $p, r \in \mathbb{Q}^+_0$
  % such that $p = q + \poten{\ev}(\rescontext,x:L^{p_1}(A)) + r$,
  % since
  % $\poten{\ev_1}(\rescontext_1) =
  % \poten{\ev}(\rescontext,x:L^{p_1}(A)) - p_1$, we have
  % $p_2 = p - K^{\word{matchL}} = q + p_1 - K^{\word{matchL}} +
  % \poten{\ev_1}(\rescontext_1) + r$. By induction for $e_2$,
  % there exists $p_2' \in \mathbb{Q}^+_0$
  % satisfying $\ev_1 \rtyps{p_2}{p_2'} e_2 \Downarrow v$
  % and $p_2' = q' + \poten{}(v:A_1)$.
  % Hence, there exists $p' = p_2'$
  % such that $\ev \rtyps{p}{p'} e \Downarrow v$.
  % If $\ev(x) = \word{nil}$
  % then it is similar to the case \textsc{A:Match-P}.
\end{proof}
%---Constant Resource
\begin{theorem}
\label{theo:constant}
If $\models \ev:\rescontext$,
$\ev \vdash e \Downarrow v$,
$\reststack;\rescontext \rtyps{q}{q'} e:A$,
$\share(A \mid A,A)$,
and
$\forall x \in \dom{\rescontext} \setminus X. \share(\rescontext(x)
\mid \rescontext(x),\rescontext(x))$ then $e$
is constant resource \wrt{} $X \subseteq \dom{\rescontext}$.
%For all expressions $e$ and well-formed environments $\ev$ \wrt{} $\rescontext$, if $\ev \vdash e \Downarrow v$, $\reststack;\rescontext \rtyps{q}{q'} e:A$, and $\share(A \mid A,A)$, then $e$ is constant resource.
\end{theorem}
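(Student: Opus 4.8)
The plan is to unfold Definition~\ref{def:constantresource} and reduce the claim to an exact accounting of net resource usage that is then shown to be invariant under size-equivalence on $X$. Concretely, I would fix arbitrary well-formed environments $\ev_1,\ev_2$ with $\ev_1 \esim{X} \ev_2$ together with evaluations $\ev_1 \rtyps{p_1}{p_1'} e \Downarrow v_1$ and $\ev_2 \rtyps{p_2}{p_2'} e \Downarrow v_2$, and aim to prove $p_1 - p_1' = p_2 - p_2'$. Since $\rescontext$ and its underlying base context agree on base types and $\models{-}{:}{-}$ ignores annotations, both $\ev_i$ are well-formed \wrt{} $\rescontext$, so Theorem~\ref{theo:soundness} applies to each (the erased judgement $\ev_i \vdash e \Downarrow v_i$ is obtained from the given resource-annotated evaluation).

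The first key step is to turn the existential soundness statement into an exact value for the net cost of the \emph{given} evaluations. Theorem~\ref{theo:soundness}, instantiated with $\ev := \ev_i$ and $r := 0$, produces an evaluation of $e$ under $\ev_i$ to $v_i$ whose net cost is $q - q' + \poten{\ev_i}(\rescontext) - \poten{}(v_i{:}A)$. I would then invoke a frame/determinism property of the cost semantics---the net cost $p - p'$ of a successful evaluation depends only on $\ev$ and $e$, because each rule increments the counter by a fixed amount independent of the ambient resource level and the underlying language is deterministic---to transfer this to the given data, yielding
\[
p_i - p_i' = q - q' + \poten{\ev_i}(\rescontext) - \poten{}(v_i{:}A), \qquad i \in \{1,2\}.
\]

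It remains to show that the two right-hand sides coincide. The hypothesis $\share(A \mid A,A)$ forces every annotation in $A$ to vanish (a list annotation $p$ would have to satisfy $p = p + p$), so $\poten{}(v_1{:}A) = \poten{}(v_2{:}A) = 0$. The hypothesis $\forall x \in \dom{\rescontext}\setminus X.\,\share(\rescontext(x) \mid \rescontext(x),\rescontext(x))$ likewise forces the potential of every out-of-$X$ variable to be zero, so by additivity of $\Phi$ over disjoint variable sets $\poten{\ev_i}(\rescontext) = \poten{\ev_i}(X{:}\rescontext)$. Finally, Lemma~\ref{lem:environmentpotential} applied to $\ev_1 \esim{X} \ev_2$ gives $\poten{\ev_1}(X{:}\rescontext) = \poten{\ev_2}(X{:}\rescontext)$. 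Substituting these facts, $p_1 - p_1' = q - q' + \poten{\ev_1}(X{:}\rescontext) = q - q' + \poten{\ev_2}(X{:}\rescontext) = p_2 - p_2'$, which is exactly $\type{const}_X(e)$.

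The main obstacle I anticipate is the first step: Theorem~\ref{theo:soundness} is phrased existentially (``there exists $p'$\,''), so bridging from it to the exact net cost of the given evaluations genuinely requires the frame/determinism lemma for the cost semantics. If that lemma is not already available, I would establish it by a routine induction on the evaluation derivation, observing that each rule contributes a constant to $p - p'$ regardless of the starting counter. Everything downstream is bookkeeping with the definition of $\Phi$ and the sharing relation.
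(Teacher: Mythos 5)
Your proposal is correct and follows essentially the same route as the paper's proof: both first pin down the exact net cost of any successful evaluation as $q - q' + \poten{\ev}(\rescontext) - \poten{}(v{:}A)$ via Theorem~\ref{theo:soundness} together with the frame/determinism property of the cost semantics (the paper does this by contradiction, shifting the given evaluation by $r_1$; you do it by instantiating $r:=0$ and transferring), and then both conclude by using the sharing hypotheses to zero out $\poten{}(v_i{:}A)$ and the potential of variables outside $X$, finishing with Lemma~\ref{lem:environmentpotential}. You are in fact slightly more careful than the paper in flagging that the determinism of the final resource counter is a genuine auxiliary lemma rather than a freebie.
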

\subsection{Type system for upper bounds}
%\vspace{-.05cm}
If we treat potential as an \emph{affine} resource then we arrive at
the original amortized analysis for upper bounds~\cite{Jost03}. To
this end, we allow unrestricted weakening and a relax rule in which we
can waste potential.
\begin{mathpar}
\small
%----relax
\RuleToplabel{U:Relax}
{
	\reststack;\rescontext \rtyps{p}{p'} e:A \\
	q \geq p \\\\
	q - p \geq q' - p'
}
{
	\reststack;\rescontext \rtyps{q}{q'} e:A
}

%----weakening
\RuleToplabel{U:Weakening}
{
	\reststack;\rescontext \rtyps{q}{q'} e:B
}
{
	\reststack;\rescontext,x:A \rtyps{q}{q'} e:B
}
\end{mathpar}
Additionally, we can use subtyping to waste linear
potential~\cite{Jost03}. (See the dual definition for subtyping for
lower bounds below.) Similarly to Theorem~\ref{theo:soundness}, we can
prove the following theorem.

\vspace{-.1cm}
\begin{theorem}
\label{theo:soundness_upper}
If $\models \ev : \rescontext$, $\ev \vdash e \Downarrow v$, and $\reststack;\rescontext \rtyps{q}{q'} e:A$, then for all $p, r \in \mathbb{Q}^+_0$ such that $p \geq q + \poten{\ev}(\rescontext) + r$, there exists $p' \in \mathbb{Q}^+_0$ satisfying $\ev \rtyps{p}{p'} e \Downarrow v$ and $p' \geq q' + \poten{}(v:A) + r$.
\end{theorem}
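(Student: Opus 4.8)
The plan is to mirror the proof of Theorem~\ref{theo:soundness} almost verbatim, since the upper-bound system differs from the constant-resource system only in its structural rules (\textsc{U:Relax}, \textsc{U:Weakening}, and a potential-wasting subtyping rule, in place of \textsc{C:Relax} and \textsc{C:Weakening}) and in that every equality in the invariant is relaxed to the inequalities $p \geq q + \poten{\ev}(\rescontext) + r$ and $p' \geq q' + \poten{}(v:A) + r$. First I would set up the same nested induction: the outer induction is on the derivation of $\ev \vdash e \Downarrow v$ and the inner induction is on the typing derivation $\reststack;\rescontext \rtyps{q}{q'} e:A$, with the evaluation derivation taking priority. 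As in the linear case, the outer priority is needed for \textsc{A:Fun}, where invoking the induction hypothesis on the function body shrinks the evaluation derivation but enlarges the typing derivation, and the inner induction is needed for the structural rules, which leave the evaluation derivation unchanged while shrinking the typing derivation.

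For the syntax-directed rules of Fig.~\ref{fig:constantrules} (shared by all three systems) I would replay the linear argument with every equality weakened to $\geq$. The bookkeeping is identical: in \textsc{A:Let} I split the context as $\rescontext_1,\rescontext_2$, apply the induction hypothesis to $e_1$ with the enlarged slack constant $r + \poten{\ev}(\rescontext_2)$ so that the unused potential of $\rescontext_2$ is carried across, then apply it to $e_2$ with slack $r$; the monotone composition of the two inequalities yields the required bound, which is exactly why the additive constant $r$ is threaded through the statement. The branching rules \textsc{A:If} and \textsc{A:Match-L} are handled by applying the hypothesis to whichever branch the evaluation actually takes (both branches carry the same $q, q'$), and \textsc{A:Share} uses the defining property of $\share(A \mid A_1, A_2)$, namely $\poten{}(v:A) = \poten{}(v:A_1) + \poten{}(v:A_2)$, so potential is preserved through the inequality. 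Establishing that the reconstructed evaluation $\ev \rtyps{p}{p'} e \Downarrow v$ produces the same $v$ and that intermediate values are well-typed, so that every $\poten{}(\cdot)$ is well-defined, is routine type preservation.

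The genuinely new cases are the two affine structural rules and subtyping, and these are where the upper-bound slack is consumed. For \textsc{U:Weakening} the premise is $\reststack;\rescontext \rtyps{q}{q'} e:B$ and we add $x:A$ with arbitrary potential; since $\poten{\ev}(\rescontext,x:A) = \poten{\ev}(\rescontext) + \poten{}(\ev(x):A)$, the hypothesis $p \geq q + \poten{\ev}(\rescontext,x:A) + r$ supplies at least $q + \poten{\ev}(\rescontext) + r$, so I apply the induction hypothesis to the premise (absorbing $\poten{}(\ev(x):A)$ into the slack) and simply discard the surplus on the right via the final inequality. This is precisely the step for which the linear proof needed $\share(A \mid A,A)$ to force zero potential, and which the affine system can omit. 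For \textsc{U:Relax}, with premise annotations $\tilde p, \tilde p'$ satisfying $q \geq \tilde p$ and $q - \tilde p \geq q' - \tilde p'$, I would apply the induction hypothesis with the enlarged slack $r + (q - \tilde p) \in \mathbb{Q}^+_0$; the hypothesis then yields $p' \geq \tilde p' + \poten{}(v:A) + r + (q - \tilde p)$, and the side condition rearranged as $\tilde p' + (q - \tilde p) \geq q'$ delivers $p' \geq q' + \poten{}(v:A) + r$. Subtyping is dispatched by the standard lemma that the resource subtyping used for upper bounds can only lower potential, i.e. $A \subtype A'$ implies $\poten{}(v:A) \geq \poten{}(v:A')$, so a leftover bound for the subtype immediately implies the bound for the supertype.

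The main obstacle I anticipate is not any single case but keeping the slack constant $r$ correctly synchronized through the two wasting rules: one must resist applying the induction hypothesis with the same $r$ in \textsc{U:Relax}, which fails because $\tilde p' \geq q'$ need not hold, and instead route the discarded potential $q - \tilde p$ through $r$ so that the rule's side condition can recover the output bound. Once \textsc{U:Relax} and \textsc{U:Weakening} are arranged to push all wasted potential into $r$, the remaining cases are a direct transliteration of the proof of Theorem~\ref{theo:soundness}.
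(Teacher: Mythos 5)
Your proof is correct and follows exactly the route the paper intends: the paper gives no separate proof of this theorem, stating only that it is proved ``similarly to Theorem~\ref{theo:soundness}'', and your adaptation---the same nested induction with priority on the evaluation derivation, every equality relaxed to an inequality, and the slack constant $r$ absorbing the potential discarded by \textsc{U:Weakening} and \textsc{U:Relax}---is precisely that adaptation. (One cosmetic nit: under the paper's convention $A \subtype B$ implies $\poten{}(v:A) \le \poten{}(v:B)$, so the direction you state for the subtyping lemma is reversed, although the substance of your case---that the conclusion type of \textsc{U:Subtype} carries no more potential than the premise type, so the leftover bound is preserved---is right.)
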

\vspace{-.2cm}
%---rules for lower bounds
\subsection{Type system for lower bounds}
The type judgements for lower bounds have the same form and data types
as the type judgements for constant resource usage and upper bounds.
However, the intended meaning of the judgement
$\reststack;\rescontext \rtyps{q}{q'} e:A$ is the following. Under
given environment $\ev$, less than $q + \poten{\ev}(\context)$
resource units are not sufficient to evaluate $e$ to a value $v$ so
that more than $q' + \poten{}(v: A)$ resource units are left over.

The syntax-directed typing rules are the same as the rules in the
constant-resource type system as given in
Fig.~\ref{fig:constantrules}. In addition, we have the structural
rules in Fig.~\ref{fig:lowerboundrules}. The rule \textsc{L:Relax} is
dual to \textsc{U:Relax}. In \textsc{L:Relax}, potential is treated as
a \emph{relevant} resource: We are not allowed to waste potential but
we can create potential out of the blue if we ensure that we either
use it or pass it to the result. The same idea is formalized for the
linear potential with the sub-typing rules \textsc{L:Subtype} and
\textsc{L:Supertype}. The sub-typing relation is defined as follows.
\begin{mathpar}
%---unit, bool, and int
\small
\RuleNolabel
{
	A  {\in} \{\type{unit},\type{bool},\type{int}\}
}
{
	A \subtype A
}
\hfill
%---list
\RuleNolabel
{
	A_1 {\subtype} A_2 \!\!\! \\
	p_1 {\leq} p_2
}
{
	L^{p_1}\!(A_1) \subtype L^{p_2}\!(A_2)
}
%---pair
\hfill
\RuleNolabel
{
	A_1 {\subtype} A_2 \!\!\!\!\!\! \\
	B_1 {\subtype} B_2 \!
}
{
	A_1 {*} A_2 \subtype B_1 {*} B_2
}
\end{mathpar}
It holds that if $A \subtype B$ then $\atype{A} = \atype{B}$ and
$\poten{}(v:A) \leq \poten{}(v:B)$.
Suppose that it is not sufficient to evaluate $e$ with $p$ available
resource units to get $p'$ resource units left over.
\textsc{L:Subtype} reflects the fact that we also cannot evaluate $e$
with $p$ resources get more than $p'$ resource units after the
evaluation. \textsc{L:Supertype} says that we also cannot evaluate $e$
with less than $p$ and get $p'$ resource units
afterwards. % \textsc{A:Relax} reflects the fact that it is impossible
% to evaluate $e$ with $p + c$ and gets more than $p' + c$ where
% $c \in \mathbb{Q}^+_0$.
\vspace{-.1cm}
\begin{example*}
  Consider again the functions \code{filter\_succ} and
  \code{fs\_twice} given in Fig.~\ref{fig:filter} in which the
  resource consumption is defined using \emph{tick} annotations.  The
  best-case resource usage of \code{filter\_succ($\ell$)} is
  $3|\ell| + 1$ and best-case resource usage of
  \code{fs\_twice($\ell$)} is $6|\ell| + 2$. This is reflected by
  the following function types for lower bounds.
\begin{displaymath}
\begin{array}{rl}
\word{filter\_succ}: & L^{3}(\type{int}) \xrightarrow{1/0} L^{0}(\type{int)} \\
\word{fs\_twice}: & L^{6}(\type{int}) \xrightarrow{2/0} L^{0}(\type{int)}
\end{array}
\end{displaymath}
To derive the lower bound for \code{fs\_twice}, we need the same
compositional reasoning as for the derivation of the upper bound.
For the inner call of \code{filter\_succ} we use the type
\begin{displaymath}
\begin{array}{rl}
\word{filter\_succ}: & L^{6}(\type{int}) \xrightarrow{2/1} L^{3}(\type{int)} \; .
\end{array}
\end{displaymath}
It can be understood as follows. If the input list carries $6$
potential units per element then, for each element, we can either use
all $6$ (\code{if} case) or we can use $3$ and assign $3$ to
the output (\code{else} case).
\end{example*}
\vspace{-.1cm}
The type system for lower bounds is a \emph{relevant} type
system~\cite{Walker02}. That means every variable is used at least
once by allowing \emph{exchange} and \emph{contraction properties},
but not \emph{weakening}. However, as in the constant-time type system we allow a restricted form of weakening if the potential
annotations are zero using the rule \textsc{L:Weakening}. The
following lemma states formally the contraction property which is
derived in Fig.~\ref{fig:contraction}.
%---contraction
\begin{lemma}
If $\reststack;\rescontext,x_1:A,x_2:A \rtyps{q}{q'} e:B$ then $\reststack;\rescontext,x:A \rtyps{q}{q'} \text{share}(x,(x_1,x_2).e):B$
\end{lemma}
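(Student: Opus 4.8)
The plan is to exhibit the conclusion as a derived rule, building a derivation of $\reststack;\rescontext,x:A \rtyps{q}{q'} \text{share}(x,(x_1,x_2).e):B$ from the given derivation of $\reststack;\rescontext,x_1:A,x_2:A \rtyps{q}{q'} e:B$ using only the syntax-directed rule \textsc{A:Share} together with the context subtyping rule of the lower-bound system. First I would instantiate \textsc{A:Share} with $A_1 = A_2 = A$. By the definition of the sharing relation there is a type $A''$ with $\share(A'' \mid A,A)$, obtained from $A$ by doubling every list annotation: atoms are shared with themselves, pairs componentwise, and on lists the constraint $p = p + p$ forces the annotation $2p$. Feeding the hypothesis into \textsc{A:Share} then yields $\reststack;\rescontext,x:A'' \rtyps{q}{q'} \text{share}(x,(x_1,x_2).e):B$, with $q$ and $q'$ carried through unchanged.

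The gap to close is that the shared variable now carries $A''$ with doubled potential, whereas the goal asks for $x:A$. Since doubling nonnegative annotations can only increase them, a routine structural induction on $A$ gives $A \subtype A''$: the base cases are reflexivity on atoms, and the inductive cases use $p \le 2p$ for lists and componentwise subtyping for pairs. The second and final step is to lower the potential on $x$ from $A''$ back to $A$ by applying the context subtyping rule \textsc{L:Supertype} (using $A \subtype A''$ on $x$ and reflexivity on the remainder of $\rescontext$), which produces exactly $\reststack;\rescontext,x:A \rtyps{q}{q'} \text{share}(x,(x_1,x_2).e):B$.

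The crux, and the reason this contraction lemma belongs specifically to the relevant lower-bound system rather than the linear constant-resource system, is the soundness of that last potential-lowering step. Under the lower-bound reading of the judgement, the premise asserts that fewer than $q + \poten{\ev}(\rescontext,x:A'')$ resource units are not sufficient to reach the stated final potential; since $\poten{}(v:A) \le \poten{}(v:A'')$ for every $v$, the budgets strictly below $q + \poten{\ev}(\rescontext,x:A)$ form a subset of those strictly below $q + \poten{\ev}(\rescontext,x:A'')$, so insufficiency is inherited and replacing $A''$ by its subtype $A$ in the context is sound. This is precisely the direction that \textsc{L:Supertype} licenses, and it is the step I expect to require the most care to state correctly, since the dual move of raising context potential would be unsound and the linear constant-resource system deliberately omits it (which is exactly why contraction fails there). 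Everything else is a purely mechanical composition of two rule applications with $q$, $q'$, and $\rescontext$ unchanged.
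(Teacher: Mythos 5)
Your proof is correct, and it uses exactly the same two ingredients as the paper's derivation (Fig.~\ref{fig:contraction}): one application of \textsc{A:Share} and the context-lowering rule \textsc{L:Supertype}, with the latter correctly identified as the step that is only available in the relevant system. The difference is purely in the order and the intermediate types. The paper first applies \textsc{L:Supertype} twice to the premise, lowering $x_1:A, x_2:A$ to $x_1:A_1, x_2:A_2$ for any split satisfying $\share(A \mid A_1,A_2)$ (such a split automatically gives $A_1 \subtype A$ and $A_2 \subtype A$ since the annotations of $A_1$ and $A_2$ sum to those of $A$), and then applies \textsc{A:Share} to land directly on $x:A$. You instead apply \textsc{A:Share} first with $A_1 = A_2 = A$, which forces the shared variable to carry the doubled type $A''$ with $\share(A'' \mid A,A)$, and then use $A \subtype A''$ and a single application of \textsc{L:Supertype} to bring $x$ back down to $A$. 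Both derivations are sound and equally short; the paper's version avoids introducing the auxiliary doubled type, while yours avoids having to choose a split of $A$. Your closing discussion of why the potential-lowering direction is the sound one under the lower-bound reading of the judgement, and why the linear system's lack of this rule blocks contraction there, matches the paper's intent.
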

% \begin{proof}
% The proof is done by applying the share and supertype rules. Given an annotated data type $A$, there exists always data types $A_1$ and $A_2$ such that $\share(A \mid A_1,A_2)$, $A_1 \subtype A$, and $A_2 \subtype A$. The typing derivation is given in Fig. \ref{fig:lowerboundrules}.
%\begin{mathpar}
%\RuleNolabel
%{
%  \RuleNolabel
%  {
%    \RuleNolabel
%    {
%      \tstack;\context,x_1:A,x_2:A \rtyps{q}{q'} e:B
%      \\  A_2 \subtype A
%    }
%    {
%      \tstack;\context,x_1:A,x_2:A_2 \rtyps{q}{q'} e:B \\
%      A_1 \subtype A
%    }
%  }
%  {
%    \tstack;\context,x_1:A_1,x_2:A_2 \rtyps{q}{q'} e:B
%  }
%  \share(A \mid A_1,A_2)
%}
%{
%  \tstack;\context,x:A \rtyps{q}{q'} \text{share}(x,(x_1,x_2).e):B
%}
%\end{mathpar}
% \end{proof}
%---typing rules
\begin{figure}[!t]
\begin{mathpar}
\small
%----relax
\RuleToplabel{L:Relax}
{
	\reststack;\rescontext \rtyps{p}{p'} e:A \\\\
	q \geq p\!\!\! \\
	q {-} p \leq q' {-} p'
}
{
	\reststack;\rescontext \rtyps{q}{q'} e:A
}
\and
\RuleToplabel{L:Weakening}
{
	\reststack;\rescontext \rtyps{q}{q'} e:B \\
	\share(A \mid A,A)
}
{
	\reststack;\rescontext,x:A \rtyps{q}{q'} e:B
}
\and
%---subtype
\RuleToplabel{L:Subtype}
{
	\reststack;\rescontext \! \rtyps{q}{q'} \! e{:}A \!\!\!\!\\
	A {\subtype} B
}
{
	\reststack;\rescontext \rtyps{q}{q'} e:B
}
\hfill
%---supertype
\RuleToplabel{L:Supertype}
{
	\reststack;\rescontext,x{:}B \! \rtyps{q}{q'} e:C \!\!\!\! \\
	A {\subtype} B
}
{
	\reststack;\rescontext,x:A \rtyps{q}{q'} e:C
}
\end{mathpar}
\caption{Structural rules for lower bounds.}
\label{fig:lowerboundrules}
\end{figure}

\begin{figure}
\begin{mathpar}
\small
\vspace{-.5cm}
\raisebox{-.5cm}[0cm][0cm]{\textsc{\hspace{6.0cm}\small{(L:Contr)}}}\\
\mbox{}
\hspace{-1cm}
\RuleNolabel
{
  \RuleNolabel
  {
    \RuleNolabel[rightskip=1.2cm]
    {
      \reststack;\rescontext,x_1:A,x_2:A \rtyps{q}{q'} e:B
      \\  A_2 \subtype A
    }
    {
      \reststack;\rescontext,x_1:A,x_2:A_2 \rtyps{q}{q'} e:B \\
      A_1 \subtype A
    }
  }
  {
    \reststack;\rescontext,x_1:A_1,x_2:A_2 \rtyps{q}{q'} e:B
  }
  \share(A \mid A_1,A_2)
}
{
  \reststack;\rescontext,x:A \rtyps{q}{q'} \text{share}(x,(x_1,x_2).e):B
}
\end{mathpar}
\caption{Derivation of the contraction rule for lower-bounds.}
\label{fig:contraction}
\end{figure}

The following theorems establish the soundness of the analysis. The
proofs can be found in the TR~\cite{sidechanTR}.
Theorem~\ref{theo:lowerbound} is proved by induction and
Theorem~\ref{theo:lowersoundness} follows by contradiction.
%---soundness
\vspace{-.1cm}
\begin{theorem}
\label{theo:lowersoundness}
Let $\models \ev :\rescontext$, $\ev \vdash e \Downarrow v$, and
$\reststack;\rescontext \rtyps{q}{q'} e:A$. Then for all
$p, r \in \mathbb{Q}^+_0$ such that
$p < q + \poten{\ev}(\rescontext) + r$, there exists no
$p' \in \mathbb{Q}^+_0$ satisfying $\ev \rtyps{p}{p'} e \Downarrow v$
and $p' \geq q' + \poten{}(v:A) + r$.
\end{theorem}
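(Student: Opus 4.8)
The plan is to obtain this statement as the non-existence (contrapositive) form of the positive lower-bound result, Theorem~\ref{theo:lowerbound}, which is the part proved by induction; here I only have to repackage that result as the stated impossibility claim by contradiction. Concretely, I rely on the fact that Theorem~\ref{theo:lowerbound} guarantees that any \emph{actual} successful evaluation $\ev \rtyps{p}{p'} e \Downarrow v$ of a well-typed $e$ consumes at least the potential difference predicted by its type, i.e. $p - p' \geq (q + \poten{\ev}(\rescontext)) - (q' + \poten{}(v:A))$ (equivalently, for every $p \geq q + \poten{\ev}(\rescontext) + r$ there is a run leaving no more than $q' + \poten{}(v:A) + r$ units). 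The remaining work is to turn this positive guarantee into the negative statement, using determinism and monotonicity of the resource-counting semantics of Fig.~\ref{fig:bosrules}.

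First I would assume, toward a contradiction, that the conclusion fails: there are $p, r \in \mathbb{Q}^+_0$ with $p < q + \poten{\ev}(\rescontext) + r$ together with some $p' \in \mathbb{Q}^+_0$ such that $\ev \rtyps{p}{p'} e \Downarrow v$ and $p' \geq q' + \poten{}(v:A) + r$. I would record two structural properties of the cost semantics: evaluation is deterministic (the value $v$ and the net cost $p-p'$ depend only on $\ev$ and $e$), and it is frame-monotone, meaning $\ev \rtyps{p}{p'} e \Downarrow v$ implies $\ev \rtyps{p+c}{p'+c} e \Downarrow v$ for every $c \in \mathbb{Q}^+_0$. Applying Theorem~\ref{theo:lowerbound} to the evaluation in hand yields $p - p' \geq (q + \poten{\ev}(\rescontext)) - (q' + \poten{}(v:A))$, whence $p \geq (q + \poten{\ev}(\rescontext)) + (p' - q' - \poten{}(v:A)) \geq q + \poten{\ev}(\rescontext) + r$, directly contradicting $p < q + \poten{\ev}(\rescontext) + r$. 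If Theorem~\ref{theo:lowerbound} is only available in the existential $p \geq q + \poten{\ev}(\rescontext) + r$ form, I would instead lift the run to initial resources $p + c$ with $c$ large enough to exceed the threshold, apply the theorem there, and use determinism to identify the resulting final count as $p' + c \leq q' + \poten{}(v:A) + r$, which contradicts $p' \geq q' + \poten{}(v:A) + r$ since $c>0$.

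The genuine content therefore lives in Theorem~\ref{theo:lowerbound}, whose proof I would carry out by the nested induction used for Theorem~\ref{theo:soundness}: a primary induction on the evaluation derivation and a secondary induction on the typing derivation, with the evaluation taking priority. The syntax-directed rules of Fig.~\ref{fig:constantrules} are shared with the other systems, so those cases mirror the earlier soundness proofs once all inequalities are flipped to treat potential as a \emph{relevant} quantity (never wasted, possibly created). The distinctive cases are the structural rules of Fig.~\ref{fig:lowerboundrules}: \textsc{L:Relax} lets potential be created provided it is used or passed on, \textsc{L:Subtype} and \textsc{L:Supertype} let the annotated potential of a value or a context variable shrink (using $\poten{}(v:A) \leq \poten{}(v:B)$ when $A \subtype B$), and \textsc{L:Weakening} discards only zero-potential variables; each must preserve the lower-bound invariant, and \textsc{L:Relax} is where the secondary typing induction is essential since it leaves the evaluation derivation unchanged. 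I expect the main obstacle to be the bookkeeping of the frame constant $r$ through \textsc{A:Let} and \textsc{A:Fun}, where the residual potential handed from $e_1$ to $e_2$ (respectively the caller's frame across a function body whose typing derivation grows while the evaluation derivation does not) must be threaded so that the relevant-resource inequality composes in the correct direction; getting this direction right, rather than the upper-bound direction, is the crux that distinguishes this argument from Theorem~\ref{theo:soundness_upper}.
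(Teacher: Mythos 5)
Your proposal is correct and follows essentially the same route as the paper: the paper also proves Theorem~\ref{theo:lowersoundness} by contradiction directly from Theorem~\ref{theo:lowerbound}, deriving $p - p' < q + \poten{\ev}(\rescontext) - (q' + \poten{}(v:A))$ from the assumed counterexample and contrasting it with the inequality $q + \poten{\ev}(\rescontext) - (q' + \poten{}(v:A)) \leq p - p'$ supplied by the inductively proved lower-bound theorem. The auxiliary appeals to determinism and frame-monotonicity in your write-up are not needed for this direct argument, and your sketch of the induction behind Theorem~\ref{theo:lowerbound} likewise matches the paper's nested induction on the evaluation and typing derivations.
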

\vspace{-.3cm}
\begin{theorem}
\label{theo:lowerbound}
Let $\models \ev :\rescontext$, $\ev \vdash e \Downarrow v$, and
$\reststack;\rescontext \rtyps{q}{q'} e:A$. Then for all
$p, p' \in \mathbb{Q}^+_0$ such that
$\ev \rtyps{p}{p'} e \Downarrow v$ we have
$q + \poten{\ev}(\rescontext) - (q' + \poten{}(v:A)) \leq p - p'$.
\end{theorem}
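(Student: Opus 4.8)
The plan is to derive this theorem from the already-established lower-soundness result, Theorem~\ref{theo:lowersoundness}, by contradiction, so that essentially no new induction on the typing derivation is required. Write $\phi := \poten{\ev}(\rescontext)$ for the initial potential and $\psi := \poten{}(v{:}A)$ for the potential of the result. The claim $q + \phi - (q' + \psi) \le p - p'$ asserts that the net type-level potential drop is a lower bound on the actual net resource consumption $p - p'$ of the observed evaluation $\ev \rtyps{p}{p'} e \Downarrow v$. Read contrapositively, Theorem~\ref{theo:lowersoundness} states that whenever an evaluation starting with $p$ units leaves at least $q' + \psi + r$ units, it must in fact have started with at least $q + \phi + r$ units. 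The idea is to pick the frame constant $r$ so that the observed evaluation itself witnesses the ``leftover $\ge q' + \psi + r$'' hypothesis, at which point the lower bound falls out.

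Concretely, I would suppose toward a contradiction that $p - p' < q + \phi - (q' + \psi)$ and then exhibit an $r \in \mathbb{Q}^+_0$ together with an evaluation of $e$ that jointly violate Theorem~\ref{theo:lowersoundness}. The natural choice is $r := p' - (q' + \psi)$, which makes the leftover of the observed evaluation exactly $q' + \psi + r = p'$, so the ``leaves $\ge q' + \psi + r$'' premise holds with equality, while the ``starts with $< q + \phi + r$'' premise unfolds to $p < q + \phi + r$, i.e.\ precisely the assumed inequality $p - p' < (q + \phi) - (q' + \psi)$. With such an $r$, Theorem~\ref{theo:lowersoundness} forbids the very evaluation $\ev \rtyps{p}{p'} e \Downarrow v$ that we have in hand, which is the contradiction we seek.

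The one step that is not purely arithmetic, and which I expect to be the main obstacle, is that Theorem~\ref{theo:lowersoundness} quantifies only over $r \in \mathbb{Q}^+_0$, whereas the candidate $r = p' - (q' + \psi)$ is negative exactly when $p' < q' + \psi$; and this case genuinely occurs, since for a lower-bound judgement the actual leftover may well be smaller than the predicted one. To handle it I would first establish (or invoke from the development) a standard resource-framing lemma for the cost semantics: if $\ev \rtyps{p}{p'} e \Downarrow v$ then $\ev \rtyps{p+c}{p'+c} e \Downarrow v$ for every $c \in \mathbb{Q}^+_0$, with the net cost $p - p'$ unchanged. This is proved by a routine induction on the evaluation derivation, using that every resource counter stays non-negative, so adding $c$ to all counters in the derivation preserves validity. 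Choosing $c$ large enough that $p' + c \ge q' + \psi$ lands us in the regime where $r := (p' + c) - (q' + \psi) \ge 0$, and re-running the argument above on the shifted evaluation $\ev \rtyps{p+c}{p'+c} e \Downarrow v$ (whose consumption is still $p - p'$, so the contradiction hypothesis is untouched) produces the desired clash with Theorem~\ref{theo:lowersoundness}. Since we contradict the explicit ``there exists no $p'$'' clause of that theorem directly, no determinacy assumption on evaluation is needed.
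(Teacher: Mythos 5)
Your reduction is arithmetically sound: the choice $r = p' - (q' + \poten{}(v:A))$ does make the negation of the claim line up exactly with the two premises of Theorem~\ref{theo:lowersoundness}, and your use of a resource-framing lemma ($\ev \rtyps{p}{p'} e \Downarrow v$ implies $\ev \rtyps{p+c}{p'+c} e \Downarrow v$) to handle the case $p' < q' + \poten{}(v:A)$ is careful and matches a fact the paper itself uses without comment (e.g.\ in the proof of Theorem~\ref{theo:constant}). What you have really established, however, is only that Theorem~\ref{theo:lowerbound} and Theorem~\ref{theo:lowersoundness} are equivalent modulo framing.

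The genuine gap is that this makes your argument circular in the context of this development. The paper states that Theorem~\ref{theo:lowerbound} is proved by induction and that Theorem~\ref{theo:lowersoundness} \emph{follows from it} by contradiction; the appendix proof of Theorem~\ref{theo:lowersoundness} begins by assuming some $p' \ge q' + \poten{}(v:A) + r$ with $p < q + \poten{\ev}(\rescontext) + r$ and refutes it by citing exactly the inequality $q + \poten{\ev}(\rescontext) - (q' + \poten{}(v:A)) \le p - p'$ that you are trying to prove. So Theorem~\ref{theo:lowersoundness} is not available independently of Theorem~\ref{theo:lowerbound}, and your proposal presupposes (an equivalent form of) the statement it is meant to establish. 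The real mathematical content --- a nested induction on the evaluation and typing derivations in lexicographic order, with the evaluation derivation taking priority (needed because \textsc{A:Fun} grows the typing derivation while the structural rules \textsc{A:Share}, \textsc{L:Weakening}, \textsc{L:Relax}, \textsc{L:Subtype}, \textsc{L:Supertype} shrink it without shrinking the evaluation), verifying rule by rule that the net potential drop lower-bounds the net cost --- is entirely absent from your proposal and cannot be bypassed.
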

%\vspace{-.1cm}

%\vspace{-.2cm}
\subsection{Mechanization}
%\vspace{-.05cm}
We mechanized the soundness proofs for both the two new type systems
as well as the classic AARA type system using the proof assistant
Agda. The development is roughly 4000 lines of code, which includes
the inference rules, the operational cost semantics,
a proof of type preservation, and the soundness
theorems for each type system. % Each of these components are
% implemented within their own module each of which take a cost metric
% as an argument.

% By using an environment in our evaluation semantics rather than substitution,
% we were able to circumvent the need for De Brujin indices or the need to develop a
% substitution calculus. This detail not only greatly simplified our embedding, but
% allows our implementation to closely resemble the inductive definitions presented
% here, save for a few technical differences.

One notable difference is our implementation of the typing contexts. In Agda our
contexts are implemented as lists of pairs of variables and their types. Moreover,
in our typing rules whenever a variable is added to the context we require the variable is fresh with respect to the existing context. This
requirement is important as it allows us to preserve the invariant that the
context is well formed with respect to the environment as we induct over typing
and evaluation judgements in our soundness proofs. Furthermore, as our typing
contexts are ordered lists we added an \emph{exchange} rule to our typing
rules.

Another important detail is in the implementation of potential. Potential
$\poten{}(v: A)$ for a value only is defined for well formed
inputs. Inputs such as $\poten{}(\word{nil}: \type{bool})$ are not defined. Agda
is total language and as such prohibits users from implementing partial functions.
Thus we require in our Agda implementation that when
calculating the potential of a value of a given type the user provide a derivation
that the value is well formed with respect to that type. Similarly when calculating
the potential of a context, $\poten{\ev}(\rescontext)$, with respect to an
environment we require that the user provide a derivation that the context is
well formed with respect to that environment. % Adding these extra parameters
% enabled us to restrict the domain of the potential functions to only well
% formed inputs, thus allowing us to implement the potential as a total function.

Lastly, whereas the type systems and proofs presented here used positive
rational numbers, in the Agda implementation we use natural numbers. This
deviation was simply due to the lacking support for rationals in the Agda
standard library. By replacing a number of trivial lemmas, mostly related to
associativity and commutativity, the proofs and embeddings could be transformed to
use rational numbers instead.

%---discussion about global checking (how powerfull the analyzer with the example of let expression), inference

%%% Local Variables:
%%% mode: latex
%%% mode: flyspell
%%% TeX-master: "main.tex"
%%% End:

%!TEX root=main.tex
%\vspace{-.2cm}
\section{Quantifying and transforming out leakages}
\label{sec:trans_quantify}
%\vspace{-.05cm}
We present techniques to quantify the amount of information leakage
through resource usage and transform leaky programs into constant
resource programs. The quantification relies on the lower and upper
bounds inferred by our resource type systems. The transformation pads
the programs with dummy computations so that the evaluations consume
the same amount of resource usage and the outputs are identical with
the original programs. In the current implementation, these dummy
computations are added into programs by users and the padding
parameters are automatically added by our analyzer to obtain the
optimal values. It would be straightforward to make the process fully
automatic but the interactive flavor of our approach helps to get
a better understanding of the system.

%---quantification
%\vspace{-.1cm}
\subsection{Quantification}
%\vspace{-.05cm}

Recall from Section~\ref{sec:securitytype} that we assume an adversary at
level $k_1$ who is always able to observe \emph{1)} the values of variables in
$\uppervars{\seccontext}{k_1}$, and \emph{2)} the final resource consumption
of the program. For many programs, it may be the case that changes to the
secret variables $\hiddenvars{\seccontext}{k_1}$ effect observable differences
in the program's final resource consumption, but only allow the attacker to
learn partial information about the corresponding secrets. In this section, we
show that the upper and lower-bound information provided by our type systems
allow us to derive bounds on the amount of partial information that is leaked.

To quantify the amount of leaked information, we measure the number of
distinct environments that the attacker could deduce as having produced a
given resource consumption observation. However, because there may be an
unbounded number of such environments, we parameterize this quantity on the
size of the values contained in each environment. Let $\envs{\sizes}$ denote
the space of environments with values of size characterized by $\sizes$.
Given an environment
$\ev$ and expression $e$, define $\usage(\ev,e) = p_\delta$ such that $\ev
\rtyps{p}{p'} e \Downarrow v$ and $p_\delta = p - p'$. Then for an expression
$e$ and resource observation $p_\delta$, we define the set
$\runcertain_{\sizes}(e, p_\delta)$ which captures the attacker's uncertainty
about the environment which produced $p_\delta$:
\[
\runcertain_{\sizes}(e, p_\delta) =
\{
\ev' \in \envs{\sizes} : \usage(\ev,e) = p_\delta
\}
\]
Notice that when $|\runcertain_{\sizes}(e, p)| = 1$, the attacker can deduce
exactly which environment was used, whereas when this quantity is large little
additional information is learned from $p_\delta$. This gives us a natural
definition of leakage, which is obtained by aggregating the inverse of the
cardinality of $\runcertain_{\sizes}$ over the possible initial environments
of $e$:
\[
\qleakage_{\sizes}(e) =
\left(
\sum_{\ev \in \envs{\sizes}} \frac{1}{|\runcertain_{\sizes}(e, \usage(\ev,e))|}
\right)
-1
\]
$\qleakage_{\sizes}(e)$ corresponds to our intuition about leakage. When $e$
leaks no information through resource consumption, then each term in the
summation will be $1/|\envs{sizes}|$ giving $\qleakage_{\sizes}(e) = 0$,
whereas if $e$ leaks perfect information about its starting environment then
each term will be $1$, leading to $\qleakage_{\sizes}(e) = |\envs{\sizes}| -
1$.
\vspace{-.1cm}
\begin{theorem}
\label{thm:leakage}
Let $P^e_{\sizes}$ be the complete set of resource observations producible by
expression $e$ under environments of size $\sizes$, i.e.,
\[
P^e_{\sizes} = \{p : \exists \ev \in \envs{\sizes} . \usage(\ev,e) = p \}
\]
Then $|P^e_{\sizes}| = \qleakage_{\sizes}(e) + 1$.
\end{theorem}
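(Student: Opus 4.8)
The plan is to recognize Theorem~\ref{thm:leakage} as a purely combinatorial re-indexing of the sum that defines $\qleakage_{\sizes}(e)$, driven by the observation that the resource-usage map partitions the environment space into the uncertainty sets $\runcertain_{\sizes}$. First I would note that, since the big-step cost semantics is deterministic, for each $\ev \in \envs{\sizes}$ under which $e$ terminates there is a unique observation $\usage(\ev,e)$; I therefore regard $\usage(\cdot,e)$ as a well-defined function from $\envs{\sizes}$ to $P^e_{\sizes}$, which is surjective by the very definition of $P^e_{\sizes}$.

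The central structural fact is that the fibers of this map are exactly the uncertainty sets. For every $p \in P^e_{\sizes}$ the fiber $\{\ev' \in \envs{\sizes} : \usage(\ev',e) = p\}$ coincides with $\runcertain_{\sizes}(e,p)$, and these fibers form a partition of $\envs{\sizes}$ indexed by $P^e_{\sizes}$. Surjectivity guarantees that every such fiber is nonempty, so $|\runcertain_{\sizes}(e,p)| \geq 1$ and no term in the sum divides by zero.

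With the partition in hand, I would re-index the sum defining $\qleakage_{\sizes}(e)+1$ by grouping environments according to the observation they produce:
\[
\sum_{\ev \in \envs{\sizes}} \frac{1}{|\runcertain_{\sizes}(e, \usage(\ev,e))|}
= \sum_{p \in P^e_{\sizes}} \; \sum_{\ev \in \runcertain_{\sizes}(e,p)} \frac{1}{|\runcertain_{\sizes}(e,p)|}.
\]
For a fixed $p$ the inner sum ranges over $|\runcertain_{\sizes}(e,p)|$ environments, each contributing the identical term $1/|\runcertain_{\sizes}(e,p)|$, so the inner sum collapses to $1$. Hence the whole expression equals $\sum_{p \in P^e_{\sizes}} 1 = |P^e_{\sizes}|$, which gives $\qleakage_{\sizes}(e)+1 = |P^e_{\sizes}|$ as required.

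The argument itself is elementary, so the real care lies in the well-definedness side conditions rather than in any calculation. The main obstacle I anticipate is ensuring that the quantities are finite and that the map is total: one must either assume $\envs{\sizes}$ is finite (as the extremal computation $\qleakage_{\sizes}(e) = |\envs{\sizes}| - 1$ already presumes) or, more carefully, restrict $\envs{\sizes}$ to those environments under which $e$ terminates, so that $\usage(\cdot,e)$ is a total, deterministic function and every uncertainty set has a well-defined finite cardinality bounded below by $1$. Making this domain restriction explicit, and reading $\runcertain_{\sizes}(e,p)$ as the set of environments $\ev'$ with $\usage(\ev',e) = p$, is the step that deserves the most attention.
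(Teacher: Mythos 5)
The paper states Theorem~\ref{thm:leakage} without giving a proof (neither in the body nor in the appendix), so there is nothing to compare against line by line; your argument is, however, exactly the one the authors evidently have in mind, and it is correct. Viewing $\usage(\cdot,e)$ as a surjection onto $P^e_{\sizes}$ whose fibers are the sets $\runcertain_{\sizes}(e,p)$, and regrouping the sum defining $\qleakage_{\sizes}(e)+1$ fiber by fiber so that each nonempty fiber contributes exactly $1$, immediately yields $|P^e_{\sizes}|$. Your attention to the side conditions is also well placed: the paper's definition of $\runcertain_{\sizes}(e,p_\delta)$ contains a typo (the bound variable $\ev'$ versus the condition on $\ev$) that must be read as you do, and the identity only makes sense when $\envs{\sizes}$ is finite and $e$ terminates on all of it so that $\usage(\cdot,e)$ is a total function with well-defined value $p-p'$ independent of the particular pair $(p,p')$ witnessing the evaluation judgement --- assumptions the paper leaves implicit but clearly relies on (e.g., in the extremal computation $\qleakage_{\sizes}(e)=|\envs{\sizes}|-1$).
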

\vspace{-.1cm}
% \begin{proof}
% Observe that $P^e_{\sizes}$ partitions $\envs{\sizes}$ into sets defined by
% $\runcertain_{\sizes}(e, p)$ for each $p \in P^e_{\sizes}$. This relationship
% allows us to swap the index of the summation from the definition of
% $\qleakage_{\sizes}(e)$, obtaining an equivalent sum over $P^e_{\sizes}$:
% \begin{align*}
% \qleakage_{\sizes}(e) + 1 &= \sum_{\ev \in \envs{\sizes}} \frac{1}{|\runcertain_{\sizes}(e, \usage(\ev,e))|}
% = \sum_{\{p : \exists \ev \in \envs{\sizes} . \usage(\ev,e) = p\}} 1
% \\
% &= |P^e_{\sizes}|
% \end{align*}
% \end{proof}
\begin{lemma}
\label{lem:leakagebound}
Let $\lbound_e(\sizes)$ and $\ubound_e(\sizes)$ be lower and upper-bounds on
the resource consumption of $e$ for inputs of size $\sizes$. If $\usage(\ev,e)
\in \mathbb{Z}$ for all environments $\ev$, then $\qleakage_{\sizes}(e) \le
\ubound_e(\sizes) - \lbound_e(\sizes)$.
\end{lemma}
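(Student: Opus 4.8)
The plan is to reduce the claimed inequality to a counting argument over the set $P^e_{\sizes}$ of distinct resource observations, using Theorem~\ref{thm:leakage} as the bridge. That theorem supplies the identity $\qleakage_{\sizes}(e) = |P^e_{\sizes}| - 1$, so it suffices to establish the cardinality bound $|P^e_{\sizes}| \le \ubound_e(\sizes) - \lbound_e(\sizes) + 1$; subtracting $1$ then yields the lemma immediately.

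First I would use the defining property of the bounds. Since $\lbound_e(\sizes)$ and $\ubound_e(\sizes)$ are a lower and an upper bound on the resource consumption of $e$ over all environments $\ev \in \envs{\sizes}$, each realized observation $p = \usage(\ev,e)$ satisfies $\lbound_e(\sizes) \le p \le \ubound_e(\sizes)$. Hence every element of $P^e_{\sizes}$ lies in the closed interval $[\lbound_e(\sizes), \ubound_e(\sizes)]$. Next I would invoke the integrality hypothesis: because $\usage(\ev,e) \in \mathbb{Z}$ for all $\ev$, the set $P^e_{\sizes}$ consists entirely of integers, and a set of integers contained in $[\lbound_e(\sizes), \ubound_e(\sizes)]$ has at most $\lfloor \ubound_e(\sizes) \rfloor - \lceil \lbound_e(\sizes) \rceil + 1 \le \ubound_e(\sizes) - \lbound_e(\sizes) + 1$ elements. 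This gives the desired cardinality bound, and combining it with Theorem~\ref{thm:leakage} finishes the argument.

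The proof is short, and there is no serious obstacle once Theorem~\ref{thm:leakage} is in hand; the one place demanding care is the role of the integrality assumption. Without it, the number of distinct observations inside a fixed-width interval could be arbitrarily large (indeed unbounded for rational-valued costs), so the hypothesis $\usage(\ev,e) \in \mathbb{Z}$ is precisely what converts an interval-width bound into a cardinality bound. I would also emphasize that the bounds need only hold uniformly across all environments of the fixed size $\sizes$, which is exactly the form in which $\lbound_e$ and $\ubound_e$ are delivered by the lower- and upper-bound type systems of Section~\ref{sec:resourcetype}.
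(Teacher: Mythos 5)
Your proposal is correct and follows exactly the route the paper intends: apply Theorem~\ref{thm:leakage} to reduce the claim to bounding $|P^e_{\sizes}|$, observe that $P^e_{\sizes} \subseteq [\lbound_e(\sizes), \ubound_e(\sizes)]$, and use the integrality hypothesis to bound the number of integers in that interval by $\ubound_e(\sizes) - \lbound_e(\sizes) + 1$. The paper only sketches this in a sentence after the lemma statement, and your write-up fills in the same argument with the appropriate care about why integrality is indispensable.
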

\vspace{-.1cm}
\begin{lemma}
\label{lem:leakageentropy}
Assume that environments are sampled uniform-randomly from $\envs{\sizes}$.
Then the Shannon entropy of $P^e_{\sizes}$ is given by $\qleakage_{\sizes}(e)$:
$H(P^e_{\sizes}) \le \log_2(\qleakage_{\sizes}(e)+1)$

\end{lemma}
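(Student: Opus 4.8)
The plan is to recognize that Lemma~\ref{lem:leakageentropy} is an instance of the standard maximum-entropy bound, once we have identified the probability distribution that uniform sampling of environments induces on the resource observations. First I would make precise the random variable whose entropy $H(P^e_{\sizes})$ denotes: drawing $\ev$ uniformly from $\envs{\sizes}$ and recording $\usage(\ev,e)$ yields a random element of $P^e_{\sizes}$ with probability mass function
\[
  \Pr[\,\usage(\ev,e) = p\,] = \frac{|\runcertain_{\sizes}(e,p)|}{|\envs{\sizes}|}, \qquad p \in P^e_{\sizes}.
\]
These quantities are well defined and sum to one, because the sets $\runcertain_{\sizes}(e,p)$ for $p \in P^e_{\sizes}$ partition $\envs{\sizes}$: every environment produces exactly one observation, and every $p \in P^e_{\sizes}$ is produced by at least one environment by definition of $P^e_{\sizes}$. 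In particular the support of this distribution is exactly $P^e_{\sizes}$.

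Next I would invoke the concavity of $\log_2$. By Jensen's inequality,
\[
  H(P^e_{\sizes}) = \sum_{p \in P^e_{\sizes}} \Pr[p]\,\log_2 \frac{1}{\Pr[p]} \;\le\; \log_2 \left( \sum_{p \in P^e_{\sizes}} \Pr[p]\cdot\frac{1}{\Pr[p]} \right) = \log_2 |P^e_{\sizes}|,
\]
where the sums range over the observations of positive probability, which are precisely the elements of $P^e_{\sizes}$. This is the familiar fact that a discrete distribution supported on a set of size $N$ has entropy at most $\log_2 N$, with equality exactly when it is uniform.

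Finally I would substitute the cardinality of $P^e_{\sizes}$. Theorem~\ref{thm:leakage} gives $|P^e_{\sizes}| = \qleakage_{\sizes}(e) + 1$, and combining this with the displayed inequality yields $H(P^e_{\sizes}) \le \log_2(\qleakage_{\sizes}(e)+1)$, as required. I do not anticipate a genuine obstacle: the only subtlety is notational, namely pinning down that $H(P^e_{\sizes})$ refers to the entropy of the observation under the induced uniform-sampling distribution rather than some entropy of a bare set, and checking that the support of that distribution is exactly $P^e_{\sizes}$ so that the maximum-entropy bound is controlled by $|P^e_{\sizes}|$. Once Theorem~\ref{thm:leakage} is applied, the conclusion is immediate.
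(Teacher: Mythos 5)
Your proof is correct: the induced distribution $\Pr[p] = |\runcertain_{\sizes}(e,p)|/|\envs{\sizes}|$ is exactly the right object, the max-entropy bound $H \le \log_2|P^e_{\sizes}|$ via Jensen is standard, and Theorem~\ref{thm:leakage} closes the argument. The paper states this lemma without including a proof in the provided source (it is deferred to the technical report), but your route is the evident intended one; the only caveats worth recording are the implicit assumptions that $\envs{\sizes}$ is finite and that $\usage(\ev,e)$ is defined (i.e., $e$ terminates) for every $\ev \in \envs{\sizes}$, both of which the lemma's uniform-sampling hypothesis and the surrounding definitions already presuppose.
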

\vspace{-.1cm}
\noindent
Lemma~\ref{lem:leakagebound} leverages Theorem~\ref{thm:leakage} to derive an
upper-bound on leakage from upper and lower resource bounds. This
result only holds when the resource observations of $e$ are integral,
which ensures the interval $[\lbound_e(\sizes), \ubound_e(\sizes)]
\supseteq P^e_{\sizes}$ is finite. Lemma~\ref{lem:leakageentropy} relates
$\qleakage_{\sizes}(e)$ to Shannon entropy, which is commonly used to
characterize information leakage~\cite{Zhang12,KopfMO12,KopfB2007}.

%---trasformation
%\vspace{-.1cm}
\subsection{Transformation}
%\vspace{-.05cm}

To transform programs into constant resource programs we extend the type
system for constant resource use from Section~\ref{sec:resourcetype}. Recall
that the type system treats potential in a linear fashion to ensure that
potential is not wasted. We will now add \emph{sinks} for potential which will
be able to absorb excess potential. At runtime the sinks will consume the
exact amount of resources that have been statically-absorbed to ensure that
potential is still treated in a linear way. The advantage of this approach is
that the worst-case resource consumption is often not affected by the
transformation. Additionally, we do not need to keep track of resource usage
at runtime to pad the resource usage at the sinks, because the amount of
resource that must be discarded is statically-determined by the type system.
Finally, we automatically obtain a type derivation that serves as a proof that
the transformation is constant resource.

More precisely, the sinks are represented by the syntactic form:
$\consume{(\type{A},p)}{x}$. Here, $\type{A}$ is a resource-annotated type and
$p \in \Qplus$ is a non-negative rational number. The idea is that $\type{A}$
and $p$ define the resource consumption of the expression.  In the
implementation, the user only has to write $\consume{}{x}$, and the
annotations are added via automatic syntax elaboration during the resource
type inference.

Let $\ev$ be a well-formed environment \wrt{} $\rescontext$. For every $x \in
\dom{\context}$ with $\rescontext(x) = A$, the expression
$\consume{(\type{A},p)}{x}$ consumes $\poten{}(\ev(x):A) + p$ resource units
and evaluates to $\word{()}$. The evaluation and typing rules for sinks are:
\begin{mathpar}
% \RuleToplabel{T:Consume}
% {
% }
% {\tstack;x: T_x \vdash \consume{}{x}: \type{unit}}
%
\RuleToplabel{E:Consume}
{
   q = q' + \poten{}(\ev(x):A) + p
}
{
  \ev \rtyps{q}{q'} \consume{(\type{A},p)}{x} \Downarrow ()
}
\and
\RuleToplabel{A:Consume}
{
}
{
  \reststack;x{:} A \rtyps{p}{0} \consume{(A,p)}{x}: \type{unit}
}
\hfill
%
% \RuleToplabel{SR:Consume}
% {
%   x: S_x \in \seccontext \\
%   \type{pc} \sqsubseteq k \\
%   S_x \blacktriangleleft k
% }
% {
%   \type{pc};\sectstack;\seccontext \vdash \consume{}{x}: (\type{unit},k)
% }
\end{mathpar}
The extension of the proof of Theorem~\ref{theo:soundness} to consume
expressions is straightforward.

%---preservation of adding consume expressions
%\vspace{-.1cm}
\paragraph{Adding consume expressions}
Let $e_i$
be a subexpression of $e$
and let $e_i'$
be the expression
$\text{let}(z,\consume{}{x_1,\cdots,x_n},z.e_i)$
for some variables $x_i$.
Let $e'$ be the expression
obtained from $e$
by replacing $e_i$
with $e_i'$. We write $e \transform e'$ for such a transformation.
Note that additional \code{share} and \code{let} expressions have to be added
to convert $e_i'$ into share-let normal form.
% The following lemma states that the transformed expression preserves
% the base type and evaluated value.
\vspace{-.1cm}
\begin{lemma}
\label{lem:transformpreservation}
If $\tstack;\context \vdash e: T$, $\ev \vdash e \Downarrow v$, and $e \transform e'$ then $\tstack;\context \vdash e': T$ and $\ev \vdash e' \Downarrow v$.
\end{lemma}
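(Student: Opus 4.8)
The plan is to reduce the claim to two \emph{local} facts about the inserted subexpression, together with a routine \emph{congruence} argument. Write $e = C[e_i]$, where $C$ is the one-hole context marking the position at which the transformation operates, so that $e' = C[e_i']$ with $e_i' = \text{let}(z, \consume{}{x_1,\cdots,x_n}, z.e_i)$ and $z$ chosen fresh. Since $e \transform e'$ is a well-formed transformation, each $x_j$ is in scope at the hole: it appears in the context $\context_i$ under which $e_i$ is typed in the derivation of $\tstack;\context \vdash e : T$, and in the environment $\ev_i$ under which $e_i$ is evaluated as a subderivation of $\ev \vdash e \Downarrow v$. The goal then becomes showing that $e_i'$ has the same type and computes the same value as $e_i$, and that this equivalence propagates through $C$.

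First I would establish the two local facts. For typing, in the base type system the sink is typed simply as $\tstack;\context_i \vdash \consume{}{x_1,\cdots,x_n} : \unit$, since it only discards resources and carries base type $\unit$ whenever its arguments are in scope; binding $z:\unit$ and typing $e_i$ under $\context_i, z:\unit$ — which succeeds because $z$ is fresh, so $z \notin \fv{e_i}$ — the let rule yields $e_i' : T_i$, matching the type of $e_i$. For evaluation, by \textsc{E:Consume} the guard evaluates to $\word{()}$, consuming only resources; by \textsc{E:Let} we then evaluate $e_i$ under $\ev_i[z \mapsto \word{()}]$, and since $z \notin \fv{e_i}$, a standard freshness (weakening) property of the cost semantics gives that $e_i$ still evaluates to $v_i$. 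Hence $\ev_i \vdash e_i' \Downarrow v_i$. Crucially, the judgement $\Downarrow$ here tracks only the computed value, not resource usage, so the extra resources spent at the sink are immaterial to the conclusion.

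Next I would lift these facts from the hole to the whole expression by induction on the structure of $C$. For every syntactic form, the typing and evaluation rules are compositional in the required sense: replacing an immediate subexpression by one of the same type that evaluates to the same value preserves both the type and the value of the enclosing expression. The base case $C = [\cdot]$ is exactly the two local facts above, and each inductive case merely reassembles the derivation using the induction hypothesis on the relevant subexpression.

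The main obstacle is the bookkeeping around binding and scope rather than any deep semantic argument: I must ensure the consumed variables $x_j$ remain visible at the insertion point and that the fresh binder $z$ never captures a free variable of $e_i$, both when typing and when evaluating. A secondary technicality is the share-let normalization noted after the definition of $\transform$: the multi-variable sink $\consume{}{x_1,\cdots,x_n}$ must first be expanded into a share-let normal sequence of single-variable consumes. I would observe that this expansion is purely syntactic and value-preserving, so it composes with the argument above without affecting either conclusion.
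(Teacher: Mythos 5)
The paper itself contains no proof of Lemma~\ref{lem:transformpreservation}: it is stated bare, with the surrounding text only remarking that the extension of Theorem~\ref{theo:soundness} to \code{consume} expressions is straightforward, so there is no official argument to compare yours against. On its merits, your decomposition into two local facts about $e_i'$ plus a congruence induction on the one-hole context $C$ is the natural proof and is sound: \textsc{E:Consume} always returns $\word{()}$, the sink types at $\type{unit}$, the fresh binder $z$ is not free in $e_i$, and every base typing and evaluation rule is compositional in the sense you need, so both conclusions lift through $C$. Two points deserve more care than you give them. First, the share-let normalization is not merely about splitting the multi-variable sink into single-variable consumes: the $x_j$ are typically also free in $e_i$ (or elsewhere in $e$), and under the context-splitting rule \textsc{T:Let} the sink would otherwise consume them, so \textsc{share} must be inserted to duplicate them; this renames occurrences of $x_j$ inside $e_i$, meaning the hole is filled with a renamed variant of $e_i$ evaluated under a rebound environment. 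You then need the (easy) observation that \textsc{T:Share} preserves base types and \textsc{E:Share} binds both copies to the same value, so the renamed $e_i$ still evaluates to $v_i$. Second, even the unannotated judgement $\ev \vdash e' \Downarrow v$ must be witnessed by some resource thread, and the sink strictly consumes resources, so you should appeal to monotonicity of the cost semantics in the initial counter (the role played by $r$ in Theorem~\ref{theo:soundness}) to supply enough initial resources for $e'$. Neither issue breaks the argument; both are repairable with the machinery already in the paper.
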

\vspace{-.1cm}
% \begin{proof}
% The proof is done by showing that if $\tstack;\context \vdash e_i: T_i$ and $\ev \vdash e_i \Downarrow v_i$  then $\tstack;\context \vdash e_i': T_i$ and $\ev \vdash e_i' \Downarrow v_i$. It is easy to prove based on the form the sub-expression $e_i$. For instance, we show the case of conditional expression. Suppose $e_i$ of the form $\text{if}(x,e_t,e_f)$, thus $e_i' := \text{let}(z,\consume{}{x_1,\cdots,x_n},z.\text{if}(x,e_t,e_f))$. By the weakening rule, $\tstack;\context,z: \type{unit} \vdash \text{if}(x,e_t,e_f): T_i$. Then by the typing rules for \code{let} and \code{share} expressions, $\tstack;\context \vdash e_i': T_i$. Since $\ev \vdash e_i \Downarrow v_i$, thus $\ev[z \mapsto \type{()}] \vdash e_i \Downarrow v_i$. By the evaluation rules for \code{let} and \code{share} expressions, $\ev \vdash e_i' \Downarrow v_i$.
% \end{proof}
% %---resource and security typing rules
To transform an expression $e$
into a constant resource expressions we perform multiple
transformations $e \transform e'$
which do not affect the type and semantics of $e$.
This can be done automatically but in our implementation it works in
an interactive fashion, meaning that users are responsible for the
locations where consume expressions are put. The analyzer will infer
the annotations $A$
and constants $p$
of the given consume expressions during type inference. If the
inference is successful then we have $\type{const}_{X}(e')$ for the transformed program $e'$.
% Finally, by rule \textsc{SR:C-Gen}, it holds that
% $\sectstack;\seccontext \rtyps{\type{const}}{} e': S$.

%---example
\vspace{-.1cm}
\begin{example*}
  Recall the function \code{compare} form Fig.~\ref{fig:compare}. To
  turn \code{compare} into a constant resource function. We insert
  \code{consume} expressions as shown below. Users can insert many
  \code{consume} expressions and the analyzer will determine which
  \code{consumes} are actually needed.
\vspace{-.1cm}
\begin{lstlisting}
let rec c_compare(h,l) = match h with
| [] -> match l with
  | [] -> tick(1.0); true
  | y::ys -> tick(1.0); false
| x::xs -> match l with
  | [] -> tick(1.0); consume(xs); false
  | y::ys -> if (x = y) then
      tick(5.0); c_compare(xs,ys)
    else tick(5.0); consume(xs); false
\end{lstlisting}
\vspace{-.1cm}
  We automatically obtain the following typing of the transformed
  function and the consume expressions:
\begin{displaymath}
\begin{array}{rll}
\word{c\_compare} : & (L^{5}(\type{int}),L^{0}(\type{int})) \xrightarrow{1/0} \type{bool} & \\
\word{consume} : & L^{5}(\type{int}) \xrightarrow{5/0} \type{unit} & (\text{at line } 6) \\
\word{consume} : & L^{5}(\type{int}) \xrightarrow{1/0} \type{unit} & (\text{at line } 9) \\
\end{array}
\end{displaymath}
The worst-case resource consumption of the unmodified function \code{c\_compare} is
$1 + 5|h|$.
Thus the consumption of the first \code{consume} must be
$5 + 5(|h| - 1 - |\ell|)$
when $h$
is longer than $l$.
Otherwise, the consumption is zero. The second one consumes
$1 + 5(|h_1| - 1)$,
where $h_1$
is the sub-list of $h$
from the first node which is different from the corresponding node in
$l$.
\end{example*}
%%% Local Variables:
%%% mode: latex
%%% TeX-master: "main.tex"
%%% End:

%!TEX root=main.tex
\begin{table*}
\begin{center}
{\footnotesize
\begin{tabular*}{\textwidth}{@{\extracolsep{\fill}} l l l l l}
%---header 1
\hline
\noalign{\vskip 1mm}
Constant Function	&	LOC & Metric & Resource Usage & Time\\
\noalign{\vskip 1mm}
\hline
\noalign{\vskip 1mm}
%---cond_rev
$\word{cond\_rev}: (L(\type{int}),L(\type{int}),\type{bool}) \rightarrow \type{unit}$
& 20
& steps
& $13n {+} 13x {+} 35$
& 0.03s\\
%---trunc_rev
$\word{trunc\_rev}: (L(\type{int}),\type{int}) \rightarrow L(\type{int})$
& 28
& function calls
& $1n$
& 0.06s\\
%---ipquery
$\word{ipquery}: L(\type{logline}) \rightarrow (L(\type{int}),L(\type{int}))$
& 86
& steps
& $86n {+} 99$
& 0.86s\\
%---kmeans
$\word{kmeans}: L(\type{float},\type{float}) \rightarrow L(\type{float},\type{float})$
& 170
& steps
& $1246n {+} 3784$
& 8.18s\\
%---tea encryption
$\word{tea\_enc}: (L(\type{int}),L(\type{int}),\type{nat}) \rightarrow L(\type{int})$
& 306
& ticks
& $128n^2z {+} 32nxz {+} 1184nz {+} 96n {+} 128z {+} 96$
& 13.73s\\
%---tea decryption
$\word{tea\_dec}: (L(\type{int}),L(\type{int}),\type{nat}) \rightarrow L(\type{int})$
& 306
& ticks
& $128n^2z {+} 32nxz {+} 1184nz {+} 96n {+} 96z {+} 96$
& 14.34s\\
\end{tabular*}}
\end{center}
\vspace{-1mm}
\end{table*}
%---table 2
\begin{table*}
\begin{center}
{\footnotesize
\begin{tabular*}{\textwidth}{@{\extracolsep{\fill}} l l l l l l l}
%---header 2
\hline
\noalign{\vskip 1mm}
Function	&	LOC	&  Metric & 	Lower Bound & 	Time & Upper Bound & Time \\
\noalign{\vskip 1mm}
\hline
\noalign{\vskip 1mm}
%---compare
$\word{compare}: (L(\type{int}),L(\type{int})) \rightarrow \type{bool}$
& 60
& steps
& $7$
& 0.05s
& $16n {+} 7$
& 0.09s \\
%---find
$\word{find}: (L(\type{int}),\type{int}) \rightarrow \type{bool}$
& 40
& steps
& $5$
& 0.04s
& $14n {+} 5$
& 0.02s \\
%---rsa
$\word{rsa}: (L(\type{bool}),\type{int},\type{int}) \rightarrow \type{int}$
& 42
& multiplications
& $1n$
& 0.07s
& $2n$
& 0.05s\\
%---filter
$\word{filter}: L(\type{int}) \rightarrow L(\type{int})$
& 30
& steps
& $13n {+} 5$
& 0.05s
& $20n {+} 5$
& 0.04s\\
%---isort
$\word{isortlist}: L(L(\type{int})) \rightarrow L(L(\type{int}))$
& 60
& steps
& $21n {+} 5$
& 0.13s
& $12n^2 {+} 9n {+} 10n^2m {-} 10nm {+} 5$
& 0.43s\\
%---bfs
$\word{bfs\_tree}: (\type{btree},\type{int}) \rightarrow \type{btree option}$
& 116
& steps
& $15$
& 0.30s
& $92n {+} 24$
& 0.32s\\
\end{tabular*}}
\vspace*{-1.25em}
\end{center}
\caption{Automatically-Derived Bounds with Resource Aware ML}
\vspace{-.4cm}
\label{tab:evalation}
\end{table*}
%\vspace{-.2cm}
\section{Implementation and Evaluation}
\label{sec:evaluation}
%\vspace{-.05cm}
%---inference
\paragraph{Type Inference}
Type inference for the constant resource and lower bound systems are
implemented in RAML~\cite{HoffmannW15}. RAML is integrated into Inria's OCaml
compiler and supports polynomial bounds, user-defined inductive types,
higher-order functions, polymorphism, arrays, and references. All features are
implemented for the new type systems, as they are straightforward extensions
of the simplified rules presented in this paper. The implementation is
publicly available in an easy-to-use web interface~\cite{ramlWebAnon}.

The type inference is technically similar to the inference of upper
bounds~\cite{Jost03}. We first integrate the structural rules of the
respective type system in the syntax directed rules. For example,
weakening and relaxation is applied at branching points such as
conditional. We then compute a type derivation in
which all resource annotations are replace by (yet unknown) variables.
For each type rule we produce a set of linear constraints that specify
the properties of valid annotations. These linear constraints are then
solved by the LP solver CLP to obtain a type derivation in which the
annotations are rational numbers.

An interesting challenge lies in finding a solution for the linear
constraints that leads to the best bound for a given function. For
upper bounds, we simply disregard the potential of the result type and
provide an objective function that minimizes the annotations of the
arguments. The same strategy works the constant-time type
systems. An interesting property is that the solution to the linear
program is unique if we require that the potential of the result type
is zero. To obtain the optimal lower bound we want to maximize the
potential of the arguments and minimize the potential of the result.
We currently simply maximize the potential of the arguments while
requiring the potential of the result to be zero. Another approach
would be to first minimize the output potential and then maximize the
input potential.

%\vspace{-.1cm}
\paragraph{Resource-aware noninterference}
We are currently integrating our constant-time type system with
FlowCaml~\cite{flowCaml}. The combined inference is based on the
typing rules in Fig.~\ref{fig:securityrules}. It is possible to derive
a set of type inference rules in the same way as for
FlowCaml~\cite{Sulzmann99,Pottier02}.
%
% from a set of typing rules similar to ours and
% the constraint solving is decidable, in which the subtyping relation
% is atomic. As shown in~\cite{Rehof97}, the constraint solving for
% atomic subtyping is decidable.
%
One of the challenges in the integration is interfacing FlowCaml's type
inference with our constant-time type system in rule
\textsc{SR:C-Gen}. In the implementation, we intend for each application
of \textsc{SR:C-Gen} to generate an intermediate representation of the
expression in RAML for the expression under consideration, in which
all types are annotated with fresh resource annotations along with the
set of variables $X$.  The expression is marked with the qualifier
$\type{const}$ if RAML can prove that it is constant time. The type
inference algorithm always tries to apply the syntax-directed rules
first before using \textsc{SR:C-Gen}.

%---examples
%\vspace{-.1cm}
\paragraph{Evaluation}
Table~\ref{tab:evalation} shows the verification and computation of
constant resource usage, lower, and upper bounds for different
functions, together with the lines of code (LOC) of the analyzed
function and the run time of the analysis in seconds. Note that lower
and upper bounds are identical when a function is constant. In the
computed bounds, $n$ is the size of the first argument,
$m = \text{max}_{1 \leq i \leq n} m_i$ where $m_i$ are the sizes of
the first argument's elements, $x$ is the size of the second argument,
and $z$ is the value of the third argument.

The cost models are specified by different cost metrics that are
appropriate for the respective application, e.g., number of evaluation
steps or number of multiplication operations. Note that the computed
upper bounds are also the resource usages of functions which are
padded using \code{consume} expressions. The experiments were run on a
machine with Intel Core i5 2.4 GHz processor and 8GB RAM under OS
X 10.11.5. The run time of the analysis varies from 0.02 to 14.34
seconds depending on the function's code complexity.  The example
programs that we analyzed consist of commonly-used primitives
(\word{cond\_rev}, \word{trunc\_rev}, \word{compare}, \word{find},
\word{filter}), functions related to cryptography (\word{tea\_enc},
\word{tea\_dec}, \word{rsa}), and examples taken from Haeberlan et
al.~\cite{HaeberlenPA11} (\word{ipquery}, \word{kmeans}). The full
source code of the examples can be found in the technical
report~\cite{sidechanTR}.

The encryption functions \word{tea\_enc} and \word{tea\_dec} correspond to the
encryption and decryption routines of the Corrected Block Tiny Encryption
Algorithm~\cite{Yarrkov10}, a block cipher presented by Needham and Wheeler in
an unpublished technical report in 1998. Our implementation correctly
identifies these operations as constant-time in the number of primitive
operations performed. We applied this cost model for the \word{tea} examples
due to the presence of bitwise operations in the original algorithm, which are
not currently supported in RAML. In order to derive a more meaningful bound,
we implemented bitwise operations in the example source and counted them as
single operations.

The two examples taken from Haeberlen et al.~\cite{HaeberlenPA11} were
originally created in a study of timing attacks in differentially-private data
processing systems. \word{ipquery} applies pattern matching to a database
derived from Apache server logs, counting the number of matches and
non-matches. \word{kmeans} implements the k-means clustering
algorithm~\cite{MacQueen67}, which partitions a set of geometric points into
$k$ clusters that minimize the total inter-cluster distance between points.
Haeberlen et al. demonstrated that when a query applied to a dataset
introduces attacker-observable timing variations, then the privacy guarantees
provided by differential privacy are negated. To address this, they proposed a
mitigation approach that enforces constant-time behavior by aborting or
padding the query's runtime. Our implementation is able to determine that the
queries, as we implemented them, were constant-time to begin with, and thus
did not need black-box mitigation.

\section{Related work}
\label{sec:relatedwork}
%\vspace{-.05cm}
\paragraph{Resource bounds}
Our work builds on past research on automatic amortized resource
analysis (AARA). AARA has been introduced by Hofmann and Jost for a
strict first-order functional language with built-in data types to
derive linear heap-memory bounds~\cite{Jost03}. It has then been
extended to polynomial
bounds~\cite{HoffmannH10,HoffmannAH12,HoffmannS13,HofmannM14,HofmannM15} for strict and
higher-order~\cite{Jost10,HoffmannW15} functions.
AARA has also been used to derive linear bounds for lazy functional
programs~\cite{SimoesVFJH12,VasconcelosJFH15} and object-oriented
programs~\cite{Jost06,HofmannR13}. In another line of work, the
technique has been integrated into separation logic~\cite{Atkey10} to
derive bounds that depend on mutable data-structures, and into Hoare
logic to derive linear bounds that depend on
integers~\cite{veristack14,CarbonneauxHZ15}.
Amortized analysis has also been used to
manually verify the complexity of algorithms and data-structures using
proof assistants~\cite{Nipkow15,ChargueraudP15}.
In contrast to
our work, these techniques can only derive upper bounds and cannot prove
constant resource consumption.

The focus on upper bounds is shared with automatic resource analysis
techniques that are based on sized
types~\cite{Vasconcelos03,Vasconcelos08}, linear dependent
types~\cite{LagoG11,LagoP13}, and other type
systems~\cite{Crary00,Danielsson08,CicekGA15}. Similarly,
semiautomatic
analyses~\cite{Grobauer01,Benzinger04,DannerLR15,AvanziniLM15} focus
on upper bounds too.

Automatic resource bound analysis is also actively studied for
imperative languages using recurrence
relations~\cite{AlonsoG12,FloresH14,AlbertFR15} and abstract
interpretation~\cite{GulwaniMC09,BlancHHK10,Zuleger11,SinnZV14,CernyHKRZ15}.
While these techniques focus on worst-case bounds, it is possible to
use similar techniques for deriving lower
bounds~\cite{AlbertGM12}. The advantage of our method is that it is
compositional, deals well with amortization effects, and works for
language features such as user-defined data types and higher-order
functions. Another approach to (worst-case) bound analysis is based on
techniques from term
rewriting~\cite{AvanziniM13,NoschinskiEG13,BrockschmidtEFFG14}, which
mainly focus on upper bounds. One line of work~\cite{FrohnNHBG16} derives lower
bounds on the \emph{worst-case} behavior of programs which is
different from our lower bounds on the best-case behavior.

%\vspace{-.1cm}
\paragraph{Side channels}
Analyzing and mitigating potential sources of side channel leakage is an
increasingly well-studied area. Several groups have proposed using type
systems or other program analyses to transform programs into constant-time
versions by padding 
%out branches and loops 
with ``dummy''
commands~\cite{Agat00,HedinS05,CoppensVBK09,Zhang12,BartheRW06,MolnarPSW}. Because these
systems do not account for timing explicitly, as is the case for our work, this
approach will in nearly all cases introduce an unnecessary performance
penalty. The most recent system by Zhang et al.~\cite{Zhang12}
describes an approach for mitigating side channels using a combination of
security types, hardware assistance, and predictive
mitigation~\cite{ZhangAM11}. Unlike the type system in
Section~\ref{sec:securitytype}, they do not guarantee that information is
not leaked through timing. Rather, they show that the amount of this leakage
is bounded by the variation of the mitigation commands.

K\"opf and Basin~\cite{KopfB2007} presented an information-theoretic model for
adaptive side-channel attacks that occur over multiple runs of a program, and an automated analysis for measuring the corresponding leakage. Because
their analysis is doubly-exponential in the number of steps taken by the
attacker, they describe an approximate version based on a greedy heuristic.
Mardziel et al.~\cite{MardzielAHC14} later generalized this model to probabilistic
systems, secrets that change over time, and wait-adaptive
adversaries. Pasareanu et al.~\cite{PasareanuPM16} proposed a symbolic
approach for the multi-run setting based on MaxSAT and model counting. Doychev
et al.~\cite{DoychevFKMR13} and K\"opf et al.~\cite{KopfMO12} consider cache
side channels, and present analyses that over-approximate leakage using
model-counting techniques. While these analyses are sometimes able to derive
useful bounds on the leakage produced by binaries on real hardware, they do not
incorporate security labels to distinguish between different sources, and were
not applied to verifying constant-time behavior.

FlowTracker~\cite{RodriguesQFA16} and ct-verif~\cite{AlmeidaBBDE16} are both
constant-time analyses built on top of LLVM which reason about timing and
other side-channel behavior indirectly through control and address-dependence
on secret inputs. VirtualCert~\cite{BartheBCLP14} instruments CompCert with a
constant-time analysis based on similar reasoning about control and
address-dependence. These approaches are intended for code that has been
written in ``constant-time style'', and thus impose effective restrictions on
the expressiveness of the programs that they will work on. Because our
approach reasons about resources explicitly, it imposes no a priori
restrictions on program expressiveness.

%\vspace{-.1cm}
\paragraph{Information flow}
A long line of prior work looks at preventing undesired information flows using type
systems. Sabelfeld and Myers~\cite{Sabelfeld03} present an excellent overview
of much of the early work in this area. The work most closely related to our
security type system is FlowCaml~\cite{Pottier02}, which provides a type
system that enforces noninterference for a core of ML with references,
exceptions, and let-polymorphism. The portion of our type system that applies
to traditional noninterference coincides with the rules used in FlowCaml. 
However, the rules in our type system are not only designed to track flows 
of information, but they are also used to incorporate the information flow 
and resource usage behavior such as the rules \textsc{SR:L-If} and 
\textsc{SR:L-Let}. Moreover, our type system constructs a flexible interface between 
FlowCaml and the resource type system, which means the rules can be easily adapted to integrate into 
any information-flow type system.   
%\matt{Chan, describe any differences in the info. flow part of our type system from FlowCaml.}

The primary difference between our work and the prior work on information-flow
type systems is best summarized in terms of our attacker model. Whereas prior
work assumes an attacker that can manipulate low-security inputs and observe
low-security outputs, our type system enhances this attacker by granting the
ability to observe the program's final resource consumption. This broadens the
relevant class of attacks to include resource side channels, which we prevent
by extending a traditional information flow type system with explicit
reasoning about the resource behavior of the program. %using AARA.

%%% Local Variables:
%%% mode: latex
%%% mode: flyspell
%%% TeX-master: "main.tex"
%%% End:

\section{Discussion}
\label{sec:discussion}
%---discussion here
% \paragraph{Assumptions and limitations}
The definition of resource-aware noninterference given in
Definition~\ref{def:secureprogram} assumes an adversary whose observations of
resource consumption match the cost semantics with respect to the cost model
given in Section~\ref{sec:resourcetype}. Depending on how the costs are
parameterized, this may not match the actual resource use in a physical
environment on modern hardware. Architectural features such as caching and
variable-duration instructions need to be accounted for in the cost semantics,
or the guarantees might not hold in
practice~\cite{Brumley05,GullaschBK11,Osvik06}. Moreover, additional artifacts
of the compilation process can affect the constant-resource guarantees
established by the type system. Certain optimization passes and garbage
collectors might affect timing properties in ways that lead to
vulnerabilities if not accounted for by the cost semantics.

% For example, if the processor's instruction
% cache is not accounted for then this may introduce an exploitable discrepancy
% between the guarantees provided by the type system and the real-world
% attacker's observations~\cite{Brumley05,GullaschBK11,Osvik06}. Furthermore, in
% some modern processors, arithmetic operations whose execution time may vary
% depending on the values of their operands, and the execution time of branching
% statements depends on whether their conditions have a predictable pattern.
% This might therefore allow an attacker that can take fine-grained measurements
% of the execution time to infer confidential data. As a result, a program which
% respects the resource-aware noninterference might still be insecure.

The cost semantics used in this work is conceptually straightforward, and
corresponds to the resource model encapsulated by the high-level programming
language. Accordingly, our verifier is oblivious to the machine instructions
and operand values that are eventually executed after the high-level code is
compiled. In particular, the fact that our cost model effectively counts the
total number of language primitives that are executed, and not the
corresponding processor instructions with caching and other micro-architectural
effects accounted for, means that compiled programs may not satisfy resource-aware noninterference in practice despite being provable within our type system.

% Therefore, the preservation of resource-aware noninterference at the machine code level which is dependent on the precision of the provided cost model might be broken. For example, our cost model for counting the total number of language constructs executed by a program may not hold for the actual processor instructions that are selected to implement the language constructs in the executable code, e.g., a language construct can be implemented by one or several processor instructions whose execution time may vary depending on the value of their operands.

Although architectural timing channels are nominally invisible at the
source-language level, it may be possible to incorporate these aspects into
the cost semantics with specific assumptions about the target platform and
compiler toolchain. Doing so with a high degree of precision is challenging,
as the semantics may need to track extensive state to accurately reflect the
timing behavior of the underlying platform. Another approach is to incorporate
dependence on these features indirectly, as in Zhang et al.~\cite{Zhang12}
where security labels are associated with hardware states to track information
flow dependencies throughout the hardware environment. This approach is
compatible with our resource-aware noninterference type system, but is less
flexible for the programmer as it is subject to the same types of imprecision
present in information-flow type systems. We leave as future work developing
more precise models that remain faithful to the resource-consumption
subtleties of hardware platforms.

Another limitation of this work follows from the imprecision of the
information-flow type system that is integrated with our constant-resource
type system to verify resource-aware noninterference. It is well-known that
such type systems are more conservative than the semantics of allowed noninterference~\cite{Myers98,Myers00,Li05}, and this applies to our work as well. In
particular, a variable conservatively identified as high-security could influence resource usage, leading our
verifier to conclude that a program which is constant-resource in practice is
not. 
Our approach mitigates this issue since imprecise information-flow
tracking does not directly lead to rejections of secure programs but
only increases the burden on constant-resource analysis. 
Another potential mitigation that applies in some cases is to simply prove
that the program is constant-resource with respect to all variables. Another
approach that we leave to future work is to incorporate declassification
mechanisms into our system.

% It remains to discuss the conservativeness and imprecision of information flow type system that integrate with our constant-resource type system for verification of resource-aware noninterference. While the conservativeness of information flow type systems-- secure programs may be rejected-- can be addressed by relaxing the noninterference property or using the notion of \emph{declassification}~\cite{Myers98,Myers00,Li05}. We note that their imprecision-- secrets are treated as public data-- can be overcome in our verification by proving that the program is constant-resource w.r.t all variables.

% %---integrate with existing hardware model
% \paragraph{Hardware models integration}
% Hardware may make the side channel information depends on its states, e.g., the data caching mechanism or branching statement prediction, can effect the execution time. Such timing channels which are invisible at the source-language level are not considered in this work. However, it is possible that one can integrate a mechanism representing indirect timing dependencies abstractly into our type systems. As the one described in by Zhang et al.~\cite{Zhang12}, the security labels are associated with hardware states in order to track how information flows into the machine environment. Thus, the indirect timing dependencies caused by hardware can be checked by an information flow type system. 
% We leave as future work the development of more precise models that are faithful to the subtleties of hardware platforms.

\section{Conclusion}
\label{sec:conclusion}
We have introduced new sub-structural type systems for automatically
deriving lower bounds and proving constant resource usage. The
evaluation with the implementation in RAML shows that the technique
extends beyond the core language that we study in this paper and works
for realistic example programs. We have shown how the new type systems
can interact with information-flow type systems to prove
resource-aware noninterference. Moreover, the type system for constant
resource can be used to automatically remove side-channel
vulnerabilities from programs.

There are many interesting connections between security and (automatic)
quantitative resource analysis that we plan to study in the
future. Two concrete projects that we already started are the
integration of the type systems for upper and lower bounds with
information-flow type systems to precisely quantify the resource-based
information leakage at certain security levels. Another direction is
to more precisely characterize the amount of information that can be
obtained about secrets by making one particular resource-usage
observation.

\section*{Acknowledgments}

This article is based on research that has been supported, in part, by AFRL
under DARPA STAC award FA8750-15-C-0082 and DARPA Brandeis award
FA8750-15-2-028, by NSF under grant 1319671 (VeriQ), and by a Google Research
Award.
Any opinions, findings, and conclusions contained in this document are
those of the authors and do not necessarily reflect the views of the
sponsoring organizations.

%%% Local Variables:
%%% mode: latex
%%% mode: flyspell
%%% TeX-master: "main.tex"
%%% End:

% use section* for acknowledgement
%\section*{Acknowledgment}

% trigger a \newpage just before the given reference
% number - used to balance the columns on the last page
% adjust value as needed - may need to be readjusted if
% the document is modified later
%\IEEEtriggeratref{8}
% The "triggered" command can be changed if desired:
%\IEEEtriggercmd{\enlargethispage{-5in}}

% references section

% can use a bibliography generated by BibTeX as a .bbl file
% BibTeX documentation can be easily obtained at:
% http://www.ctan.org/tex-archive/biblio/bibtex/contrib/doc/
% The IEEEtran BibTeX style support page is at:
% http://www.michaelshell.org/tex/ieeetran/bibtex/
%\bibliographystyle{IEEEtran}
% argument is your BibTeX string definitions and bibliography database(s)
%\bibliography{IEEEabrv,../bib/paper}
%
% <OR> manually copy in the resultant .bbl file
% set second argument of \begin to the number of references
% (used to reserve space for the reference number labels box)

%\bibliographystyle{ieeetr}
%\bibliography{lit,publications}

% appendix
\newpage
\clearpage
\appendix%[Language Semantics, Typing Rules, and Proofs]
%---language syntax
\section*{The language semantics}
The equivalent expressions in OCaml syntax of the language are given as follows.
\begin{displaymath}
\begin{array}{rl}
e & ::= \word{()} \mid \word{true} \mid \word{false} \mid n \mid x\\
  & \mid x_1 \diamond x_2\\
  & \mid f(x)\\
  & \mid \text{let } x = e_1 \text{ in } e_2\\
  & \mid \text{if } x \text{ then } e_t \text{ else } e_f\\
  & \mid (x_1,x_2)\\
  & \mid \text{match } x \text{ with } (x_1,x_2) \rightarrow e\\
  & \mid []\\
  & \mid x_1::x_2\\
  & \mid \text{match } x \text{ with } \mid [] \rightarrow e_1 \mid x_1::x_2 \rightarrow e_2\\
  & \mid \text{share } x = (x_1,x_2) \text{ in } e \\
\end{array}
\end{displaymath}
\begin{displaymath}
\diamond \in \{+,-,*,\word{div},\word{mod},=,<>,>,<,<=,>=,\word{and},\word{or} \}
\end{displaymath}
Fig.~\ref{fig:valuetypes} and Fig.~\ref{fig:bosrules} represent the typing rules for values, the base typing and the evaluation rules for the language, respectively.
%---type value
\begin{figure*}[!ht]
\begin{mathpar}
\small
\RuleToplabel{V:Unit}
{
v = \word{()}
}
{
\models v:\type{unit}
}

\RuleToplabel{V:Bool}
{
v \in \{\word{true},\word{false}\}
}
{
\models v:\type{bool}
}

\RuleToplabel{V:Int}
{
v \in \mathbb{Z}
}
{
\models v:\type{int}
}

\RuleToplabel{V:Pair}
{
\models v_1:T_1 \\
\models v_2:T_2
}
{
\models (v_1,v_2):T_1 * T_2
}

\RuleToplabel{V:Nil}
{
v = \word{nil}
}
{
\models v:L(T)
}

\RuleToplabel{V:List}
{
\models v_i:T \\
\forall i = 1,...,n
}
{
\models [v_1,...,v_n]:L(T)
}

% unit
\RuleToplabel{T:Unit}
{
}
{
\tstack;\emptyset \vdash ():\type{unit}
}

%---bool
\RuleToplabel{T:Bool}
{
b \in \{\word{true},\word{false}\}
}
{
\tstack;\emptyset \vdash b:\type{bool}
}

%---int
\RuleToplabel{T:Int}
{
n \in \mathbb{Z}
}
{
\tstack;\emptyset \vdash n:\type{int}
}

%---var
\RuleToplabel{T:Var}
{
x \in \dom{\ev}
}
{
\tstack;x: T \vdash x: T
}

%---bin bool
\RuleToplabel{T:B-Op}
{
\diamond \in \{\word{and},\word{or}\}
}
{
\tstack;x_1:\type{bool},x_2:\type{bool} \vdash \text{op}_{\diamond}(x_1,x_2):\type{bool}
}

%---bin int bool
\RuleToplabel{T:IB-Op}
{
\diamond \in \{=,<>,>,<,<=,>=\}
}
{
\tstack;x_1:\type{int},x_2:\type{int} \vdash \text{op}_{\diamond}(x_1,x_2):\type{bool}
}

%---bin int
\RuleToplabel{T:I-Op}
{
\diamond \in \{+,-,*,\word{div},\word{mod}\}
}
{
\tstack;x_1:\type{int},x_2:\type{int} \vdash \text{op}_{\diamond}(x_1,x_2):\type{int}
}

%---fun
\RuleToplabel{T:Fun}
{
\tstack(g) = T_1 \rightarrow T_2
}
{
\tstack;x: T_1 \vdash \text{app}(g,x): T_2
}

%---let
\RuleToplabel{T:Let}
{
\tstack;\context_1 \vdash e_1: T_1 \\
\tstack;\context_2,x: T_1 \vdash e_2: T_2
}
{
\tstack;\context_1,\context_2 \vdash \text{let}(x,e_1,x.e_2): T_2
}

%---if
\RuleToplabel{T:If}
{
\tstack;\context \vdash e_t: T \\
\tstack;\context \vdash e_f: T
}
{
\tstack;\context,x:\type{bool} \vdash\text{if}(x,e_t,e_f): T
}

%---pair
\RuleToplabel{T:Pair}
{
}
{
\tstack;x_1:T_1,x_2:T_2 \vdash \text{pair}(x_1,x_2):T_1 * T_2
}

%---match-p
\RuleToplabel{T:Match-P}
{
\tstack;\context,x_1:T_1,x_2:T_2 \vdash e: T
}
{
\tstack;\context,x:T_1 * T_2 \vdash \text{match}(x,(x_1,x_2).e): T
}

%---nil
\RuleToplabel{T:Nil}
{
T \in \bdatatypes
}
{
\tstack;\emptyset \vdash \word{nil}:L(T)
}

%---cons
\RuleToplabel{T:Cons}
{
}
{
\tstack;x_h:T,x_t:L(T) \vdash \text{cons}(x_h,x_t):L(T)
}

%---match-l
\RuleToplabel{T:Match-L}
{
\tstack;\context \vdash e_1: T_1 \\
\tstack;\context,x_h:T,x_t:L(T) \vdash e_2: T_1
}
{
\tstack;\context,x:L(T) \vdash \text{match}(x,e_1,(x_h,x_t).e_2): T_1
}

%----share
\RuleToplabel{T:Share}
{
\tstack;\context,x_1:T,x_2:T \vdash e: T_1
}
{
\tstack;\context,x:T \vdash \text{share}(x,(x_1,x_2).e): T_1
}

%----weakening
\RuleToplabel{T:Weakening}
{
\tstack;\context \vdash e: T_1
}
{
\tstack;\context,x:T \vdash e:T_1
}
\end{mathpar}
\caption{Typing rules for values and base types}
\label{fig:valuetypes}
\end{figure*}

%---evaluation rules
\begin{figure*}[!ht]
\begin{mathpar}
\small
% unit
\RuleToplabel{E:Unit}
{
}
{
\ev \rtyps{q + K^{\word{unit}}}{q} \word{()} \Downarrow \word{()}
}

%---bool
\RuleToplabel{E:Bool}
{
b \in \{\word{true},\word{false}\}
}
{
\ev \rtyps{q + K^\text{\word{bool}}}{q} b \Downarrow b
}

%---int
\RuleToplabel{E:Int}
{
n \in \mathbb{Z}
}
{
\ev \rtyps{q + K^{\word{int}}}{q} n \Downarrow n
}

%---var
\RuleToplabel{E:Var}
{
x \in \dom{\ev}
}
{
\ev \rtyps{q + K^{\word{var}}}{q} x \Downarrow \ev(x)
}
\hfill
%---bin int
\RuleToplabel{E:Bin}
{
v = \ev(x_1) \diamond \ev(x_2)
}
{
\ev \rtyps{q + K^{\word{op}}}{q}  \text{op}_{\diamond}(x_1,x_2) \Downarrow v
}

%---fun
\RuleToplabel{E:Fun}
{
\tstack(g) = T_1 \rightarrow T_2 \\
E[x^g \mapsto \ev(x)] \rtyps{q}{q'} e_g \Downarrow v
}
{
\ev \rtyps{q+K^{\word{app}}}{q'} \text{app}(g,x) \Downarrow v
}

%---match-nil
\RuleToplabel{E:Match-N}
{
\ev(x) = \word{nil} \\
\ev \rtyps{q - K^{\word{matchN}}}{q'} e_1 \Downarrow v
}
{
\ev \rtyps{q}{q'} \text{match}(x,e_1,(x_h,x_t).e_2) \Downarrow v
}

%---nil
\RuleToplabel{E:Nil}
{
}
{
\ev \rtyps{q + K^{\word{nil}}}{q} \word{nil} \Downarrow \word{nil}
}

%---if-true
\RuleToplabel{E:If-True}
{
\ev(x) = \word{true} \\
\ev \rtyps{q - K^{\word{cond}}}{q'} e_t \Downarrow v
}
{
\ev \rtyps{q}{q'} \text{if}(x,e_t,e_f) \Downarrow v
}

%---if-false
\RuleToplabel{E:If-False}
{
\ev(x) = \word{false} \\
\ev \rtyps{q - K^{\word{cond}}}{q'} e_f \Downarrow v
}
{
\ev \rtyps{q}{q'} \text{if}(x,e_t,e_f) \Downarrow v
}

%---pair
\RuleToplabel{E:Pair}
{
x_1,x_2 \in \dom{\ev} \\
v = (\ev(x_1),\ev(x_2))
}
{
\ev \rtyps{q + K^{\word{pair}}}{q} \text{pair}(x_1,x_2) \Downarrow v
}

%---match-p
\RuleToplabel{E:Match-P}
{
\ev(x) = (v_1,v_2) \\
\ev[x_1 \mapsto v_1,x_2 \mapsto v_2] \rtyps{q - K^{\word{matchP}}}{q'} e \Downarrow v
}
{
\ev \rtyps{q}{q'} \text{match}(x,(x_1,x_2).e) \Downarrow v
}

%---cons
\RuleToplabel{E:Cons}
{
x_h,x_t \in \dom{\ev} \\
\ev(x_h) = v_1 \\
\ev(x_t) = [v_2,..,v_n]
}
{
\ev \rtyps{q + K^{\word{cons}}}{q} \text{cons}(x_h,x_t) \Downarrow [v_1,...,v_n]
}

%---let
\RuleToplabel{E:Let}
{
\ev \rtyps{q - K^{\word{let}}}{q'_1} e_1 \Downarrow v_1 \\
\ev[x \mapsto v_1] \rtyps{q'_1}{q'} e_2 \Downarrow v
}
{
\ev \rtyps{q}{q'} \text{let}(x,e_1,x.e_2) \Downarrow v
}

%---share
\RuleToplabel{E:Share}
{
\ev(x) = v_1 \\ \ev[x_1 \mapsto v_1,x_2 \mapsto v_1] \setminus \{x\} \rtyps{q}{q'} e \Downarrow v
}
{
\ev \rtyps{q}{q'} \text{share}(x,(x_1,x_2).e) \Downarrow v
}

%---match-l
\RuleToplabel{E:Match-L}
{
\ev(x) = [v_1,...,v_n] \\
\ev[x_h \mapsto v_1,x_t \mapsto [v_2,...,v_n]] \rtyps{q - K^{\word{matchL}}}{q'} e_2 \Downarrow v
}
{
\ev \rtyps{q}{q'} \text{match}(x,e_1,(x_h,x_t).e_2) \Downarrow v
}
\end{mathpar}
\caption{Evaluation typing rules}
\label{fig:bosrules}
\end{figure*}

%--- security type system
\section*{Typing rules and proofs of Lemma \ref{lem:simplesecurity}, Lemma \ref{lem:noninterference}, and Theorem \ref{theo:securitysoundness}}

\subsection*{Typing rules}
The full typing rules of the resource-aware security type system are presented in Fig.~\ref{fig:securityrules}.
%
%---subtypeing relation
\begin{figure*}[!th]
\begin{mathpar}
\small
%---unit
\RuleNolabel
{
	k \sqsubseteq k'\\
   T \in \{\type{unit}, \type{int}, \type{bool}\}
}
{
	k \triangleleft (T,k')
}

%---list
\RuleNolabel
{
k \sqsubseteq k' \\
\!\!\! k \triangleleft S
}
{
k \triangleleft (L(S),k')
}

%---pair
\RuleNolabel
{
k \triangleleft S_1 \\
k \triangleleft S_2
}
{
k \triangleleft S_1 * S_2
}

%---unit
\RuleNolabel
{
   k' \sqsubseteq k\\
   T \in \{\type{unit}, \type{int}, \type{bool}\}
}
{
	(T,k') \blacktriangleleft k
}

%---list
\RuleNolabel
{
k' \sqsubseteq k \\
S \blacktriangleleft k
}
{
(L(S),k') \blacktriangleleft k
}

%---pair
\RuleNolabel
{
S_1 \blacktriangleleft k \\
S_2 \blacktriangleleft k
}
{
S_1 * S_2 \blacktriangleleft k
}

%---unit
\RuleNolabel
{
	k \sqsubseteq k'  \\
	T {\in} \{\type{unit}, \type{int}, \type{bool}\}
}
{
	(\type{T},k) \leq (\type{T},k')
}

%---list
\RuleNolabel
{
k \sqsubseteq k' \\
S \leq S'
}
{
(L(S),k) {\leq} (L(S'),k')
}

%---pair
\RuleNolabel
{
S_1 \leq S_1' \\
S_2 \leq S_2'
}
{
S_1 * S_2 \leq S_1' * S_2'
}
\end{mathpar}
\caption{Guards, collecting, and subtyping relations}
\label{fig:secsubtype}
\end{figure*}

%---typing rules
\begin{figure*}[!ht]
\begin{mathpar}
\small
%---unit
\RuleToplabel{SR:Unit}
{
}
{
\type{pc};\sectstack;\seccontext \rtyps{\type{const}}{} ():(\type{unit},\type{pc})
}

%---bool
\RuleToplabel{SR:Bool}
{
b \in \{\word{true},\word{false}\}
}
{
\type{pc};\sectstack;\seccontext \rtyps{\type{const}}{} b:(\type{bool},\type{pc})
}

%---int
\RuleToplabel{SR:Int}
{
n \in \mathbb{Z}
}
{
\type{pc};\sectstack;\seccontext \rtyps{\type{const}}{} n:(\type{int},\type{pc})
}

%---var
\RuleToplabel{SR:Var}
{
x: S \in \seccontext \\
\type{pc} \triangleleft S
}
{\type{pc};\sectstack;\seccontext \rtyps{\type{const}}{} x: S}

%---bin opt
\RuleToplabel{SR:B-Op}
{
x_1:(\type{bool},k_{x_1}) \in \seccontext \\
x_2:(\type{bool},k_{x_2}) \in \seccontext \\
\type{pc} \sqsubseteq k_{x_1} \sqcup k_{x_2} \\
\diamond \in \{\word{and},\word{or}\}
}
{
\type{pc};\sectstack;\seccontext \rtyps{\type{const}}{} \text{op}_{\diamond}(x_1,x_2):(\type{bool},k_{x_1} \sqcup k_{x_2})
}

%---constant general rule
\RuleToplabel{SR:C-Gen}
{
\type{pc};\sectstack;\seccontext \vdash e: S \\
\type{const}_{X}(e)
}
{
\type{pc};\sectstack;\seccontext \rtyps{\type{const}}{} e: S
}

%---general rule
\RuleToplabel{SR:Gen}
{
\type{pc};\sectstack;\seccontext \rtyps{\type{const}}{} e: S
}
{
\type{pc};\sectstack;\seccontext \vdash e: S
}

%---bin opt
\RuleToplabel{SR:IB-Op}
{
x_1:(\type{int},k_{x_1}) \in \seccontext \\
x_2:(\type{int},k_{x_2}) \in \seccontext \\\\
\type{pc} \sqsubseteq k_{x_1} \sqcup k_{x_2} \\
\diamond \in \{=,<>,>,<,<=,>=\}
}
{
\type{pc};\sectstack; \seccontext \rtyps{\type{const}}{} \text{op}_{\diamond}(x_1,x_2):(\type{bool},k_{x_1} \sqcup k_{x_2})
}

%---pair
\RuleToplabel{SR:Pair}
{
x_1: S_1 \in \seccontext \\
x_2: S_2 \in \seccontext \\
\type{pc} \triangleleft S_1 * S_2
}
{
\type{pc};\sectstack;\seccontext \rtyps{\type{const}}{} \text{pair}(x_1,x_2):S_1 * S_2
}

%---bin opt
\RuleToplabel{SR:I-Op}
{
x_1:(\type{int},k_{x_1}) \in \seccontext \\
x_2:(\type{int},k_{x_2}) \in \seccontext \\\\
\type{pc} \sqsubseteq k_{x_1} \sqcup k_{x_2} \\
\diamond \in \{+,-,*,\word{div},\word{mod}\}
}
{\type{pc};\sectstack;\seccontext \rtyps{\type{const}}{} \text{op}_{\diamond}(x_1,x_2):(\type{int},k_{x_1} \sqcup k_{x_2})}

%---cons
\RuleToplabel{SR:Cons}
{
x_h: S \in \seccontext \\
x_t: (L(S),k_x) \in \seccontext \\\\
\type{pc} \triangleleft (L(S),k_x)
}
{
\type{pc};\sectstack;\seccontext \rtyps{\type{const}}{} \text{cons}(x_h,x_t):(L(S),k_x)
}

\RuleToplabel{SR:SubTyping}
{
\type{pc};\sectstack;\seccontext \vdash e: S \\
S \leq S'
}
{
\type{pc};\sectstack;\seccontext \vdash e: S'
}

%---fun
\RuleToplabel{SR:Fun}
{
x: S_1 \in \seccontext \\
\sectstack(f) = S_1 \xrightarrow{\type{pc}'} S_2 \\
\type{pc} \sqsubseteq \type{pc}'
}
{
\type{pc};\sectstack;\seccontext \vdash \text{app}(f,x): S_2
}

\RuleToplabel{SR:L-Arg}
{
x: S_1 \in \seccontext \\
\sectstack(f) = S_1 \xrightarrow{\type{pc}'} S_2 \\
\type{pc} \sqsubseteq \type{pc}' \\
S_1 \blacktriangleleft k_1
}
{
\type{pc};\sectstack;\seccontext \rtyps{\type{const}}{} \text{app}(f,x): S_2
}

\RuleToplabel{SR:C-Fun}
{
x: S_1 \in \seccontext \\
\sectstack(f) = S_1 \xrightarrow{\type{pc}'/\type{const}} S_2 \\
\type{pc} \sqsubseteq \type{pc}'
}
{
\type{pc};\sectstack;\seccontext \rtyps{\type{const}}{} \text{app}(f,x): S_2
}

%---nil
\RuleToplabel{SR:Nil}
{
S \in \secdatatypes \\
\type{pc} \triangleleft S
}
{
\type{pc};\sectstack;\seccontext \rtyps{\type{const}}{} \word{nil}:(L(S),\type{pc})
}

%----subtyping
\RuleToplabel{SR:C-SubTyping}
{
\type{pc};\sectstack;\seccontext \rtyps{\type{const}}{} e: S \\
S \leq S'
}
{
\type{pc};\sectstack;\seccontext \rtyps{\type{const}}{} e: S'
}

%---let
\RuleToplabel{SR:Let}
{
\type{pc};\sectstack;\seccontext \vdash e_1: S_1 \\
\type{pc};\sectstack;\seccontext,x: S_1 \vdash e_2: S_2
}
{
\type{pc};\sectstack;\seccontext \vdash \text{let}(x,e_1,x.e_2): S_2
}

\RuleToplabel{SR:L-Let}
{
\type{pc};\sectstack;\seccontext \rtyps{\type{const}}{} e_1: S_1 \\
\type{pc};\sectstack;\seccontext,x: S_1 \rtyps{\type{const}}{} e_2: S_2 \\
S_1 \blacktriangleleft k_1
}
{
\type{pc};\sectstack;\seccontext \rtyps{\type{const}}{} \text{let}(x,e_1,x.e_2): S_2
}

\RuleToplabel{SR:If}
{
x:(\type{bool},k_x) \in \seccontext \\
\type{pc} \sqcup k_x;\sectstack;\seccontext \vdash e_t: S  \\
\type{pc} \sqcup k_x;\sectstack;\seccontext \vdash e_f: S \\
\type{pc} \sqcup k_x \triangleleft S
}
{\type{pc};\sectstack;\seccontext \vdash \text{if}(x,e_t,e_f): S}

%---if
\RuleToplabel{SR:L-If}
{
x:(\type{bool},k_x) \in \seccontext \\
\type{pc} \sqcup k_x;\sectstack;\seccontext \rtyps{\type{const}}{} e_t: S  \\
\type{pc} \sqcup k_x;\sectstack;\seccontext \rtyps{\type{const}}{} e_f: S \\
\type{pc} \sqcup k_x \triangleleft S \\
k_x \sqsubseteq k_1
}
{\type{pc};\sectstack;\seccontext \rtyps{\type{const}}{} \text{if}(x,e_t,e_f): S}

%---match-p
\RuleToplabel{SR:Match-P}
{
x: S_1 * S_2 \in \seccontext \\
\type{pc};\sectstack;\seccontext,x_1:S_1,x_2:S_2 \vdash e: S
}
{
\type{pc};\sectstack;\seccontext \vdash \text{match}(x,(x_1,x_2).e): S
}

\RuleToplabel{SR:C-Match-P}
{
x: S_1 * S_2 \in \seccontext \\
\type{pc};\sectstack;\seccontext,x_1:S_1,x_2:S_2 \rtyps{\type{const}}{} e: S
}
{
\type{pc};\sectstack;\seccontext \rtyps{\type{const}}{} \text{match}(x,(x_1,x_2).e): S
}

%---match-l
\RuleToplabel{SR:Match-L}
{
x:(L(S),k_x) \in \seccontext \\
\type{pc} \sqcup k_x;\sectstack;\seccontext \vdash e_1: S_1 \\
\type{pc} \sqcup k_x;\sectstack;\seccontext,x_h: S,x_t:(L(S),k_x) \vdash e_2: S_1 \\
\type{pc} \sqcup k_x \triangleleft S_1
}
{
\type{pc};\sectstack;\seccontext \vdash \text{match}(x,e_1,(x_h,x_t).e_2): S_1
}

\RuleToplabel{SR:C-Match-L}
{
x:(L(S),k_x) \in \seccontext \\
\type{pc} \sqcup k_x;\sectstack;\seccontext \rtyps{\type{const}}{} e_1: S_1 \\
\type{pc} \sqcup k_x;\sectstack;\seccontext,x_h: S,x_t:(L(S),k_x) \rtyps{\type{const}}{} e_2: S_1 \\
\type{pc} \sqcup k_x \triangleleft S_1 \\
k_x \sqsubseteq k_1
}
{
\type{pc};\sectstack;\seccontext \rtyps{\type{const}}{} \text{match}(x,e_1,(x_h,x_t).e_2): S_1
}
\end{mathpar}
\caption{Typing rules for resource-aware security type system}
\label{fig:securityrules}
\end{figure*}

\subsection*{Proof of Lemma \ref{lem:simplesecurity}}
The proof is done by induction on the structure of the typing derivation.

%---unit
\paragraph{\textsc{SR:Unit}} There is no variable thus it follows immediately.

%---bool
\paragraph{\textsc{SR:Bool}} It is similar to the case \textsc{SR:Unit}.

%---int
\paragraph{\textsc{SR:Int}} It is similar to the case \textsc{SR:Unit}.

%---var
\paragraph{\textsc{SR:Var}} Since $\seccontext(x) = S$, if $S \blacktriangleleft k_1$ then $\seccontext(x) \blacktriangleleft k_1$.

%---binop
\paragraph{\textsc{SR:B-Op}} If $(\type{bool},k_{x_1} \sqcup k_{x_2}) \blacktriangleleft k_1$ then $\seccontext(x_1) = (\type{bool},k_{x_1}) \blacktriangleleft k_1$ and $\seccontext(x_2) = (\type{bool},k_{x_2}) \blacktriangleleft k_1$.

%---binintbool
\paragraph{\textsc{SR:IB-Op}} It is similar to the case \textsc{SR:B-Op}.

%---binint
\paragraph{\textsc{SR:I-Op}} It is similar to the case \textsc{SR:B-Op}.

%--- gen and c-gen
\paragraph{\textsc{SR:Gen} - \textsc{SR:C-Gen}} By induction for $e$ in the premise, it follows.

%---fun
\paragraph{\textsc{SR:Fun}} Because $e$ is well-formed program, there exists a well-typed expression $e_f$ such that $\type{pc'};\sectstack;\seccontext \vdash e_f: S_2$. By induction for $e_f$, for all variables $x$ in $e$, if $S \blacktriangleleft k_1$ then $\seccontext(x) \blacktriangleleft k_1$. It is similar for \textsc{SR:L-Arg} and \textsc{SR:C-Fun}.

%---let
\paragraph{\textsc{S:Let}} If $S_2 \blacktriangleleft k_1$ then by induction for $e_2$, $S_1 \blacktriangleleft k_1$. Thus for all variable $x$ in $e$, it is a variable in $e_1$ or $e_2$. By induction for $e_1$ and $e_2$, it follows. It is similar for \textsc{SR:L-Let}.

%---if
\paragraph{\textsc{SR:If}} If $S \blacktriangleleft k_1$ then by the hypothesis $(\type{bool},k_x) \blacktriangleleft k_1$. For all variable $y$ in $e$, it is a variable in $e_t$ or $e_f$. By induction for $e_t$ and $e_f$, it follows. It is similar for \textsc{SR:L-If}.

%---pair
\paragraph{\textsc{SR:Pair}} If $S_1 * S_2 \blacktriangleleft k_1$ then $\seccontext(x_1) = S_1 \blacktriangleleft k_1$ and $\seccontext(x_2) = S_2 \blacktriangleleft k_1$.

%---match-p
\paragraph{\textsc{SR:Match-P}} If $S \blacktriangleleft k_1$ then by induction for $e$, $\seccontext(x_1) \blacktriangleleft k_1$ and $\seccontext(x_2) \blacktriangleleft k_1$. Thus $\seccontext(x) \blacktriangleleft k_1$. For all other variables $y$ in $e$, again by induction for $e$, if $S \blacktriangleleft k_1$ then $\seccontext(y) \blacktriangleleft k_1$. It is similar for \textsc{SR:C-Match-P}.

%---nil
\paragraph{\textsc{SR:Nil}} It is similar to the case \textsc{SR:Unit}.

%---cons
\paragraph{\textsc{SR:Cons}} If $(L(S),k_x) \blacktriangleleft k_1$ then $\seccontext(x_h) = S \blacktriangleleft k_1$ and $\seccontext(x_t) = (L(S),k_x) \blacktriangleleft k_1$.

%---match
\paragraph{\textsc{SR:Match-L}} If $S_1 \blacktriangleleft k_1$ then by induction for $e_2$, $\seccontext(x_h) = S \blacktriangleleft k_1$ and $\seccontext(x_t) = (L(S),k_x) \blacktriangleleft k_1$. Thus $\seccontext(x) \blacktriangleleft k_1$. For all other variables $y$ in $e$, $y$ is a variable in $e_1$ or $e_2$. Again by induction for $e_1$ and $e_2$, if $S_1 \blacktriangleleft k_1$ then $\seccontext(y) \blacktriangleleft k_1$.

%---subtyping
\paragraph{\textsc{SR:SubTyping}} By the subtyping relation, if $S' \blacktriangleleft k_1$ then $S \blacktriangleleft k_1$. Thus by induction for $e$ in the premise, for all variables $x$ in $e$, if $S \blacktriangleleft k_1$ then $\seccontext(x) \blacktriangleleft k_1$. It is similar for \textsc{SR:C-SubTyping}.

\subsection*{Proof of Lemma \ref{lem:noninterference}}
The proof is done by induction on the structure of the evaluation derivation and the typing derivation.

%---unit
\paragraph{\textsc{SR:Unit}} Suppose the evaluation derivation of $e$ ends with an application of the rule \textsc{E:Unit}, thus $\ev_1 \vdash e \Downarrow \type{()}$ and $\ev_2 \vdash e \Downarrow \type{()}$. Hence, it follows.

%---bool
\paragraph{\textsc{SR:Bool}} It is similar to the case \textsc{SR:Unit}.

%---int
\paragraph{\textsc{SR:Int}} It is similar to the case \textsc{SR:Unit}.

%---var
\paragraph{\textsc{SR:Var}} Suppose the evaluation derivation ends with an application of the rule \textsc{E:Var}, thus $\ev_1(x) = v_1$ and $\ev_2(x) = v_2$. The typing derivation ends with an application of the rule \textsc{SR:Var}, thus $\seccontext(x) = S$. If $S \blacktriangleleft k_1$, by the hypothesis $\ev_1(x) = \ev_2(x)$ since $x \in \dom{E_i}, i = \{1,2\}$.

%---binop
\paragraph{\textsc{SR:B-Op}} Suppose the evaluation derivation ends with an application of the rule \textsc{E:Bin}, thus $\ev_1(x_1) \diamond \ev_1(x_2) = v_1$ and $\ev_2(x_1) \diamond \ev_2(x_2) = v_2$. The typing derivation ends with an application of the rules \textsc{SR:B-Op} or \textsc{SR:Gen}. We have $k_{x_1} \triangleleft S$ and $k_{x_2} \triangleleft S$. If $S \blacktriangleleft k_1$ then $k_{x_1} \sqsubseteq k_1$ and $k_{x_2} \sqsubseteq k_1$. By the hypothesis, we have $\ev_1(x_1) = \ev_2(x_1)$ and $\ev_1(x_2) = \ev_2(x_2)$, thus $v_1 = v_2$.

%---binintbool
\paragraph{\textsc{SR:IB-Op}} It is similar to the case \textsc{SR:B-Op}.

%---binint
\paragraph{\textsc{SR:I-Op}} It is similar to the case \textsc{SR:B-Op}.

%--- gen and c-gen
\paragraph{\textsc{SR:Gen}-\textsc{SR:C-Gen}} By induction for $e$ in the premise, it follows that if $S \blacktriangleleft k_1$ then $v_1 = v_2$.

%---fun
\paragraph{\textsc{SR:Fun}} Suppose the evaluation derivation ends with an application of the rule \textsc{E:Fun}, thus $\tstack(g) = T_1 \rightarrow T_2$ and $[y^g \mapsto \ev_i(x)] \vdash e_g \Downarrow v_i$ for $i = \{1,2\}$. The typing derivation ends with an application of the following rules.
\begin{itemize}
	\item Case \textsc{SR:Fun}. Because $e$ is well-formed program, there exists a well-typed expression $e_f$ such that $\type{pc'};\sectstack;\seccontext \vdash e_f: S_2$ and $e_{\atype{f}} = e_g$. By induction for $e_f$, if $S_2 \blacktriangleleft k_1$ then $v_1 = v_2$.
	\item Case \textsc{SR:L-Arg}. It is similar to the case \textsc{SR:Fun}.
	\item Case \textsc{SR:C-Fun}. It is similar to the case \textsc{SR:Fun}.
	\item Case \textsc{SR:Gen} and \textsc{SR:C-Gen}. It follows.
\end{itemize}

%---let
\paragraph{\textsc{S:Let}} Suppose the evaluation derivation ends with an application of the rule \textsc{E:Let}, thus $\ev_i \vdash e_1 \Downarrow v^i_1$ and $\ev_i[x \mapsto v^i_1] \vdash e_2 \Downarrow v_i$ for $i = \{1, 2\}$. The typing derivation ends with an application of the following rules.
\begin{itemize}
	\item Case \textsc{SR:L-Let}. If $S_2 \blacktriangleleft k_1$, by the simple security lemma, it holds that $S_1 \blacktriangleleft k_1$. By induction for $e_1$, we have $v^1_1 = v^2_1$, so $\ev_1[x \mapsto v^1_1] \secesim{k} \ev_2[x \mapsto v^2_1]$. Again by induction for $e_2$, we have $v_1 = v_2$.
	\item Case \textsc{SR:Let}. It is similar to the case \textsc{SR:L-Let}.
	\item Case \textsc{SR:Gen} and \textsc{SR:C-Gen}. It follows.
\end{itemize}

%---if
\paragraph{\textsc{SR:If}} Suppose $e$ is of the form $\text{if}(x,e_t,e_f)$, the evaluation derivation ends with an application of the rule \textsc{E:If-True} or the rule \textsc{E:If-False}. The typing derivation ends with an application of the following rules.
\begin{itemize}
	\item Case \textsc{SR:L-If}. By the hypothesis we have $k_x \sqsubseteq k_1$, thus $\ev_1(x) = \ev_2(x)$. Assume that $\ev_1(x) = \word{true}$, then $\ev_1 \vdash e_t \Downarrow v_1$ and $\ev_2 \vdash e_t \Downarrow v_2$. By induction for $e_t$ we have $v_1 = v_2$ if $S \blacktriangleleft k_1$. It is similar for $\ev_1(x) = \word{false}$.
	\item Case \textsc{SR:If}. If $k_x \sqsubseteq k_1$ the proof is similar to the case \textsc{SR:L-If}. Otherwise, $k_x \not \sqsubseteq k_1$, thus by the simple security lemma we have $S \not \blacktriangleleft k_1$.
	\item Case \textsc{SR:Gen} and \textsc{SR:C-Gen}. It follows.
\end{itemize}

%---pair
\paragraph{\textsc{SR:Pair}} Suppose the evaluation derivation ends with an application of the rule \textsc{E:Pair}, thus $(\ev_i(x_1),\ev_i(x_2)) = v_i$ for $i = \{1, 2\}$. The typing derivation ends with an application of the rules \textsc{SR:Pair} or \textsc{SR:Gen}.

If $S_1 * S_2 \blacktriangleleft k$, then by the simple security lemma we have $S_1 \blacktriangleleft k_1$ and $S_2 \blacktriangleleft k_1$. Hence it follows $v_1 = v_2$.

%---match-p
\paragraph{\textsc{SR:Match-P}} Suppose the evaluation derivation ends with an application of the rule \textsc{E:Match-P}, thus $\ev_i(x) = (v^i_1,v^i_2)$ and $\ev_i[x_1 \mapsto v^i_1,x_2 \mapsto v^i_2] \vdash e \Downarrow v_i$ for $i = \{1, 2\}$. The typing derivation ends with an application of the following rules.
\begin{itemize}
	\item Case \textsc{SR:Match-P}. If $S \blacktriangleleft k_1$, then by the simple security lemma we have $S_1 * S_2 \blacktriangleleft k_1$. By the hypothesis, $\ev_1(x) = \ev_2(x)$, thus $v^1_1 = v^2_1$ and $v^1_2 = v^2_2$. Hence, $\ev_1[x_1 \mapsto v^1_1,x_2 \mapsto v^1_2] \secesim{k} \ev_2[x_1 \mapsto v^2_1,x_2 \mapsto v^2_2]$, by induction for $e$ in the premise, it holds that $v_1 = v_2$.
	\item Case \textsc{SR:C-Match-P}. It is similar to the case \textsc{SR:Match-P}.
	\item Case \textsc{SR:Gen} and \textsc{SR:C-Gen}. It follows.
\end{itemize}

%---nil
\paragraph{\textsc{SR:Nil}} It is similar to the case \textsc{SR:Unit}.

%---cons
\paragraph{\textsc{SR:Cons}} Suppose the evaluation derivation ends with an application of the rule \textsc{E:Cons}, thus $\ev_i(x_h) = v^i_1$ and $\ev_i(x_t) = [v^i_2, \cdots ,v^i_n]$ for $i = \{1, 2\}$. The typing derivation ends with an application of the rules \textsc{SR:Cons} or \textsc{SR:Gen}. If $(L(S),k_x) \blacktriangleleft k_1$ then by the hypothesis we have $v^1_1 = v^2_1$ and $[v^1_2, \cdots ,v^1_n] = [v^2_2, \cdots ,v^2_n]$. Thus $\ev_1(\word{cons}(x_h,x_t)) = \ev_2(\word{cons}(x_h,x_t))$.

%---match
\paragraph{\textsc{SR:Match-L}} Suppose $e$ is of the form $\text{match}(x,e_1,(x_h,x_t).e_2)$, the evaluation derivation ends with an application of the rule \textsc{E:Match-N} or the rule \textsc{E:Match-L}. The typing derivation ends with an application of the following rules.
\begin{itemize}
	\item Case \textsc{SR:Match-L}. If $S_1 \blacktriangleleft k_1$, then by the simple security lemma we have $(L(S),k_x) \blacktriangleleft k_1$. By the hypothesis we have $\ev_1(x) = \ev_2(x)$. Assume that $\ev_1(x) = \ev_2(x) = [v_1, \cdots ,v_n]$, by the rule \textsc{E:Match-L} we have $\ev_i[x_h \mapsto v_1,x_t \mapsto [v_2,...,v_n]] \vdash e_2 \Downarrow v_i$ for $i = \{1, 2\}$. Since $\ev_1[x_h \mapsto v_1,x_t \mapsto [v_2,...,v_n]] \secesim{k} \ev_2[x_h \mapsto v_1,x_t \mapsto [v_2,...,v_n]]$, by induction for $e_2$, it holds that $v_1 = v_2$ if $S_1 \blacktriangleleft k_1$. It is similar for  $\ev_1(x) = \ev_2(x) = \word{nil}$.
	\item Case \textsc{SR:C-Match-L}.It is similar to the case \textsc{SR:Match-L}.
	\item Case \textsc{SR:Gen} and \textsc{SR:C-Gen}. It follows.
\end{itemize}

%---subtyping
\paragraph{\textsc{SR:SubTyping}} Suppose the typing derivation ends with the rule \textsc{SR:SubTyping}. If $S' \blacktriangleleft k_1$ then $S \blacktriangleleft k_1$. Thus by induction for $e$ in the premise it follows. It is similar for \textsc{SR:C-SubTyping}.

\subsection*{Proof of Theorem \ref{theo:securitysoundness}}
The proof is done by induction on the structure of the typing derivation and the evaluation derivation. Let $X$ be the set of variables $\hiddenvars{\seccontext}{k_1}$. For all environments $\ev_1$, $\ev_2$ such that $\ev_1 \esim{X} \ev_2$ and $\ev_1 \secesim{k_1} \ev_2$, if $\ev_1 \rtyps{p_1}{p_1'} e \Downarrow v_1$ and $\ev_2 \rtyps{p_2}{p_2'} e \Downarrow v_2$. We then show that $p_1 - p_1' = p_2 - p_2'$ and $v_1 = v_2$ if $S \blacktriangleleft k_1$. By Lemma \ref{lem:noninterference}, $e$ satisfies the noninterference property at security label $k_1$. Thus we need to prove that $p_1 - p_1' = p_2 - p_2'$.

%---unit
\paragraph{\textsc{SR:Unit}} Suppose the evaluation derivation of $e$ ends with an application of the rule \textsc{E:Unit}, thus  $p_1 - p_1' = p_2 - p_2' = K^{\word{unit}}$.

%---bool
\paragraph{\textsc{SR:Bool}} It is similar to the case \textsc{SR:Unit}.

%---int
\paragraph{\textsc{SR:Int}} It is similar to the case \textsc{SR:Unit}.

%---var
\paragraph{\textsc{SR:Var}} It is similar to the case \textsc{SR:Unit}.

%---binop
\paragraph{\textsc{SR:B-Op}} Suppose the evaluation derivation ends with an application of the rule \textsc{E:Bin}, thus $\ev_1 \rtyps{p_1' + K^{\word{op}}}{p_1'} e \Downarrow v_1$ and $\ev_1 \rtyps{p_2' + K^{\word{op}}}{p_2'} e \Downarrow v_1$. We have $p_1 - p_1' = p_2 - p_2' = K^{\word{op}}$.

%---binintbool
\paragraph{\textsc{SR:IB-Op}} It is similar to the case \textsc{SR:B-Op}.

%---binint
\paragraph{\textsc{SR:I-Op}} It is similar to the case \textsc{SR:B-Op}.

%---gen
\paragraph{\textsc{SR:C-Gen}} By the hypothesis we have $\type{const}_{X}(e)$, thus it holds that $\reststack;\rescontext \rtyps{q}{q'} e:A$ and $\share(A \mid A,A)$. By the constant-resource theorem, for all $p_1, p_1', p_2, p_2' \in \mathbb{Q}^+_0$ such that $\ev_1 \rtyps{p_1}{p_1'} e \Downarrow v_1$ and $\ev_2 \rtyps{p_2}{p_2'} e \Downarrow v_2$, we have $p_1 - p_1' = q + \poten{\ev_1}(\rescontext) - (q' + \poten{}(v_1:A))$ and $p_2 - p_2' = q + \poten{\ev_2}(\rescontext) - (q' + \poten{}(v_2:A))$.

Since $\ev_1 \esim{X} \ev_2$, $\poten{\ev_1}(X) = \poten{\ev_2}(X)$. For all $y \not \in X$, $\ev_1(y) = \ev_2(y)$ since $\ev_1 \secesim{k_1} \ev_2$, thus $\poten{}(\ev_1(y)) = \poten{}(\ev_2(y))$. Hence, $\poten{\ev_1}(\rescontext) = \poten{\ev_2}(\rescontext)$, it follows $p_1 - p_1' = p_2 - p_2'$.

%---fun
\paragraph{\textsc{SR:Fun}} Suppose $e$ is of the form $\text{app}(f,x)$, thus the typing derivation ends with an application of either the rule \textsc{SR:L-Arg}, \textsc{SR:C-Fun}, or \textsc{SR:C-Gen}.
\begin{itemize}
	\item Case \textsc{SR:L-Arg}. By the hypothesis we have $\ev_1(x) = \ev_2(x)$, it follows $p_1 - p_1' = p_2 - p_2'$.
	\item Case \textsc{SR:C-Fun}. Because $e$ is well-formed, there exists a well-typed expression $e_f$ such that $\type{pc'};\sectstack;\seccontext \rtyps{\type{const}}{} e_f: S_2$. By induction for $e_f$ which is resource-aware noninterference w.r.t $X$, $p_1 - K^{\word{app}} - p_1' = p_2 - K^{\word{app}} - p_2'$, it follows.
	\item Case \textsc{SR:C-Gen}. By the case \textsc{SR:C-Gen} it follows.
\end{itemize}

%---let
\paragraph{\textsc{SR:Let}} Suppose $e$ is of the form $\text{let}(x,e_1,x.e_2)$, thus the typing derivation ends with an application of either the rule \textsc{SR:L-Let} or \textsc{SR:C-Gen}.
\begin{itemize}
	\item Case \textsc{SR:L-Let}. Suppose the evaluations $\ev_1 \rtyps{p_1 - K^{\word{let}}}{p'} e_1 \Downarrow v^1_1$, $\ev_2 \rtyps{p_2 - K^{\word{let}}}{p"} e_1 \Downarrow v^2_1$, $\ev_1[x \mapsto v^1_1] \rtyps{p'}{p_1'} e_2 \Downarrow v_1$, and $\ev_2[x \mapsto v^2_1] \rtyps{p"}{p_2'} e_2 \Downarrow v_2$. By induction for $e_1$ that is resource-aware noninterference w.r.t $X$, $p_1 - K^{\word{let}} - p' = p_2 - K^{\word{let}} - p"$. By the hypothesis $v^1_1 = v^2_1$. Thus $\ev_1[x \mapsto v^1_1] \esim{X} \ev_2[x \mapsto v^2_1]$ and $\ev_1[x \mapsto v^1_1] \secesim{k_1} \ev_2[x \mapsto v^2_1]$, by induction for $e_2$ that is resource-aware noninterference w.r.t $X$, we have $p' - p_1' = p" - p_2'$. Hence, $p_1 - p_1' = p_2 - p_2'$.
	\item Case \textsc{SR:C-Gen}. By the case \textsc{SR:C-Gen} it follows.
\end{itemize}

%---if
\paragraph{\textsc{SR:If}} Suppose $e$ is of the form $\text{if}(x,e_t,e_f)$, thus the typing derivation ends with an application of either the rule \textsc{SR:L-If} or \textsc{SR:C-Gen}.
\begin{itemize}
	\item Case \textsc{SR:L-If}. By the hypothesis we have $\ev_1(x) = \ev_2(x)$. Assume that $\ev_1(x) = \ev_2(x) = \word{true}$, by the evaluation rule \textsc{E:If-True}, $\ev_1 \rtyps{p_1-K^{\word{cond}}}{p_1'} e_t \Downarrow v_1$ and $\ev_2 \rtyps{p_2-K^{\word{cond}}}{p_2'} e_t \Downarrow v_2$. By induction for $e_t$ that is resource-aware noninterference w.r.t $X$, we have $p_1 - p_1' = p_2 - p_2'$. It is similar for $\ev_1(x) = \ev_2(x) = \word{false}$.
	\item Case \textsc{SR:C-Gen}. Since $\ev_1 \esim{X} \ev_2$ w.r.t $\seccontext$, we have $\ev_1 \esim{X} \ev_2$ w.r.t $\rescontext$. By the hypothesis we have $\type{const}_{X}(e)$. Thus by the soundness theorem of constant resource type system, it follows $p_1 - p_1' = p_2 - p_2'$.
\end{itemize}

%---pair
\paragraph{\textsc{SR:Pair}} It is similar to the case \textsc{SR:B-Op}.

%---match-p
\paragraph{\textsc{SR:Match-P}} Suppose $e$ is of the form $\text{match}(x,(x_1,x_2).e)$, thus the typing derivation ends with an application of either the rule \textsc{SR:C-Match-P} or \textsc{SR:C-Gen}.
\begin{itemize}
	\item Case \textsc{SR:C-Match-P}. Let $\ev_1' = \ev_1[x_1 \mapsto v^1_1,x_2 \mapsto v^1_2]$ and $\ev_2' = \ev_2[x_1 \mapsto v^2_1,x_2 \mapsto v^2_2]$. If $x \in X$ then $\sizeval{\ev_1(x)} \esim{} \sizeval{\ev_2(x)}$. Thus $\sizeval{\ev_1'(x_1)} \esim{} \sizeval{\ev_2'(x_1)}$ and $\sizeval{\ev_1'(x_2)} \esim{} \sizeval{\ev_2'(x_2)}$. Hence, $\ev_1' \esim{X \cup \{x_1,x_2\}} \ev_2'$, by induction for $e$ in the premise which is resource-aware noninterference w.r.t $X \cup \{x_1,x_2\}$, $p_1 - K^{\word{matchP}} - p_1' = p_2 - K^{\word{matchP}} - p_2'$, it follows. If $x \not \in X$ then $\ev_1(x) = \ev_2(x)$, it is similar.
	\item Case \textsc{SR:C-Gen}. By the case \textsc{SR:C-Gen} it follows.
\end{itemize}

%---nil
\paragraph{\textsc{SR:Nil}} It is similar to the case \textsc{SR:Unit}.

%---cons
\paragraph{\textsc{SR:Cons}} It is similar to the case \textsc{SR:B-Op}.

%---match
\paragraph{\textsc{SR:Match-L}} Suppose $e$ is of the form $\text{match}(x,e_1,(x_h,x_t).e_2)$, thus the typing derivation ends with an application of either the rule \textsc{SR:C-Match-L} or \textsc{SR:C-Gen}.
\begin{itemize}
	\item Case \textsc{SR:C-Match-L}. Let $\ev_1' = \ev_1[x_h \mapsto v^1_1,x_t \mapsto v^1_2]$ and $\ev_2' = \ev_2[x_h \mapsto v^2_1,x_t \mapsto v^2_2]$. If $x \in X$ then $\sizeval{\ev_1(x)} \esim{} \sizeval{\ev_2(x)}$. Suppose $\ev_1(x)$ and $\ev_2(x)$ are different from $\word{nil}$, $\sizeval{\ev_1'(x_h)} \esim{} \sizeval{\ev_2'(x_h)}$ and $\sizeval{\ev_1'(x_t)} \esim{} \sizeval{\ev_2'(x_t)}$. Hence, $\ev_1' \esim{X \cup \{x_t,x_h\}} \ev_2'$, by induction for $e_2$ which is resource-aware noninterference w.r.t $X \cup \{x_t,x_h\}$, we have $p_1 - K^{\word{matchL}} - p_1' = p_2 - K^{\word{matchL}} - p_2'$, thus $p_1 - p_1' = p_2 - p_2'$. If $\ev_1(x) = \ev_2(x) = \word{nil}$ then by induction for $e_1$ that is resource-aware noninterference w.r.t $X$, it follows. If $x \not \in X$ then $\ev_1(x) = \ev_2(x)$, it is similar.
	\item Case \textsc{SR:C-Gen}. By the case \textsc{SR:C-Gen} it follows.
\end{itemize}

%---subtyping
\paragraph{\textsc{SR:SubTyping}} The typing derivation ends with an application of either the rule \textsc{SR:C-Share} or \textsc{SR:C-Gen}.
\begin{itemize}
	\item Case \textsc{SR:C-SubTyping}. By induction for $e$ in the premise, $p_1 - p_1' = p_2 - p_2'$.
	\item Case \textsc{SR:C-Gen}. By the case \textsc{SR:C-Gen} it follows.
\end{itemize}

%---constant and lower bound type systems
\section*{Type systems for lower bounds and constant resource}
The common syntax-directed typing rules for all of three type systems; upper bounds, constant resource, and lower bounds are represented in Fig.~\ref{fig:commonrules}. While the different structural rules are shown in Fig.~\ref{fig:upperbound}, Fig.~\ref{fig:constant}, and Fig.~\ref{fig:lowerbound}. We can see that the relax rules are consistent among these type systems in sense of satisfying the following.
\begin{displaymath}
\begin{array}{ll}
	                & (q \geq p \wedge q - p \leq q' - p') \wedge (q \geq p \wedge q - p \geq q' - p') \\
	\Leftrightarrow & (q \geq p \wedge q - p = q' - p')
\end{array}
\end{displaymath}
That means the constraints for upper bounds and lower bounds imply the constraints for constant resource and vice versa.

The type systems for upper bounds, constant resource, and lower bounds are \emph{affine}, \emph{linear}, and \emph{relevant} sub-structural type systems, respectively. 
\begin{itemize}
	\item Type system for upper bounds allows exchange and weakening, but not contraction properties.
	\item Type system for constant resource allows exchange but not weakening or contraction properties.
	\item Type system for lower bounds allows exchange and contraction, but not weakening properties.
\end{itemize}
%
%---typing rules
\begin{figure*}[!ht]
\begin{mathpar}
\small
%--unit
\RuleToplabel{A:Unit}
{
}
{
\reststack;\emptyset \rtyps{K^{\word{unit}}}{0} ():\type{unit}
}

%---bin bool
\RuleToplabel{A:B-Op}
{
	\diamond \in \{\word{and},\word{or}\}
}
{
	\reststack;x_1:\type{bool},x_2:\type{bool} \rtyps{K^{\word{op}}}{0} \text{op}_{\diamond}(x_1,x_2):\type{bool}
}

%---bin int bool
\RuleToplabel{A:IB-Op}
{
\diamond \in \{=,<>,>,<,<=,>=\}
}
{
\reststack;x_1:\type{int},x_2:\type{int} \rtyps{K^{\word{op}}}{0} \text{op}_{\diamond}(x_1,x_2):\type{bool}
}

%---int
\RuleToplabel{A:Int}
{
n \in \mathbb{Z}
}
{
\reststack;\emptyset \rtyps{K^{\word{int}}}{0} n:\type{int}
}

%---bin int
\RuleToplabel{A:I-Op}
{
\diamond \in \{+,-,*,\word{div},\word{mod}\}
}
{
\reststack;x_1:\type{int},x_2:\type{int} \rtyps{K^{\word{op}}}{0} \text{op}_{\diamond}(x_1,x_2):\type{int}
}

%---fun
\RuleToplabel{A:Fun}
{
	\reststack(f) = A_1 \xrightarrow{q/q'} A_2
}
{
	\reststack;x: A_1 \rtyps{q + K^{\word{app}}}{q'} \text{app}(f,x): A_2
}

%---let
\RuleToplabel{A:Let}
{
	\reststack;\rescontext_1 \rtyps{q - K^{\word{let}}}{q'_1} e_1: A_1 \\
	\reststack;\rescontext_2,x: A_1 \rtyps{q'_1}{q'} e_2: A_2
}
{
	\reststack;\rescontext_1,\rescontext_2 \rtyps{q}{q'} \text{let}(x,e_1,x.e_2): A_2
}

%---if
\RuleToplabel{A:If}
{
	\reststack;\rescontext \rtyps{q - K^{\word{cond}}}{q'} e_t: A \\
	\reststack;\rescontext \rtyps{q - K^{\word{cond}}}{q'} e_f: A
}
{
	\reststack;\rescontext,x:\type{bool} \rtyps{q}{q'} \text{if}(x,e_t,e_f): A
}

%---pair
\RuleToplabel{A:Pair}
{
}
{
\reststack;x_1:A_1,x_2:A_2 \rtyps{K^{\word{pair}}}{0} \text{pair}(x_1,x_2):A_1 * A_2
}

%---match-p
\RuleToplabel{A:Match-P}
{
	\reststack;\rescontext,x_1:A_1,x_2:A_2 \rtyps{q - K^{\word{matchP}}}{q'} e: A
}
{
	\reststack;\rescontext,x:A_1 * A_2 \rtyps{q}{q'} \text{match}(x,(x_1,x_2).e): A
}

%---bool
\RuleToplabel{A:Bool}
{
b \in \{\word{true},\word{false}\}
}
{
\reststack;\emptyset \rtyps{K^{\word{bool}}}{0} b:\type{bool}
}

%---var
\RuleToplabel{A:Var}
{
}
{
\reststack;x: A \rtyps{K^{\word{var}}}{0} x: A
}

%---nil
\RuleToplabel{A:Nil}
{
	A \in \datatypes
}
{
	\reststack;\emptyset \rtyps{K^{\word{nil}}}{0} \word{nil}:L^{p}(A)
}

%---cons
\RuleToplabel{A:Cons}
{
}
{
	\reststack;x_h:A,x_t:L^{p}(A) \rtyps{p + K^{\word{cons}}}{0} \text{cons}(x_h,x_t):L^{p}(A)
}

%---match-l
\RuleToplabel{A:Match-L}
{
	\reststack;\rescontext \rtyps{q - K^{\word{matchN}}}{q'} e_1: A_1 \\
	\reststack;\rescontext,x_h:A,x_t:L^{p}(A) \rtyps{q + p - K^{\word{matchL}}}{q'} e_2:A_1
}
{
	\reststack;\rescontext,x:L^{p}(A) \rtyps{q}{q'} \text{match}(x,e_1,(x_h,x_t).e_2):A_1
}

%----share
\RuleToplabel{A:Share}
{
	\reststack;\rescontext,x_1:A_1,x_2:A_2 \rtyps{q}{q'} e:B \\\\
	\share(A \mid A_1,A_2)
}
{
	\reststack;\rescontext,x:A \rtyps{q}{q'} \text{share}(x,(x_1,x_2).e):B
}
\end{mathpar}
\caption{Common syntax-directed typing rules for upper bounds, constant, and lower bounds}
\label{fig:commonrules}
\end{figure*}

\begin{figure*}[!ht]
\begin{mathpar}
\small
\RuleToplabel{U:Relax}
{
	\reststack;\rescontext \rtyps{p}{p'} e:A \\\\
	q \geq p \\
	q - p \geq q' - p'
}
{
	\reststack;\rescontext \rtyps{q}{q'} e:A
}

%---weakening
\RuleToplabel{U:Weakening}
{
	\reststack;\rescontext \rtyps{q}{q'} e:B
}
{
	\reststack;\rescontext,x:A \rtyps{q}{q'} e:B
}

%---subtype
\RuleToplabel{U:Subtype}
{
	\reststack;\rescontext \rtyps{q}{q'} e:A \\
	B \subtype A
}
{
	\reststack;\rescontext \rtyps{q}{q'} e:B
}

%---supertype
\RuleToplabel{U:Supertype}
{
	\reststack;\rescontext,x:B \rtyps{q}{q'} e:C \\
	B \subtype A
}
{
	\reststack;\rescontext,x:A \rtyps{q}{q'} e:C
}
\end{mathpar}
\caption{Relax and structural typing rules for upper bounds}
\label{fig:upperbound}
\end{figure*}

\begin{figure*}[!ht]
\begin{mathpar}
\small
\RuleToplabel{C:Relax}
{
	\reststack;\rescontext \rtyps{p}{p'} e:A \\
	q \geq p \\
	q - p = q' - p'
}
{
	\reststack;\rescontext \rtyps{q}{q'} e:A
}

%---weakening
\RuleToplabel{C:Weakening}
{
	\reststack;\rescontext \rtyps{q}{q'} e:B \\
	\share(A \mid A,A)
}
{
	\reststack;\rescontext,x:A \rtyps{q}{q'} e:B
}
\end{mathpar}
\caption{Relax and structural typing rules for constant-resource}
\label{fig:constant}
\end{figure*}
%---lower bound
\begin{figure*}[!ht]
\begin{mathpar}
\small
%----relax
\RuleToplabel{L:Relax}
{
	\reststack;\rescontext \rtyps{p}{p'} e:A \\
	q \geq p \\
	q - p \leq q' - p'
}
{
	\reststack;\rescontext \rtyps{q}{q'} e:A
}

%---weakening
\RuleToplabel{L:Weakening}
{
	\reststack;\rescontext \rtyps{q}{q'} e:B \\
	\share(A \mid A,A)
}
{
	\reststack;\rescontext,x:A \rtyps{q}{q'} e:B
}

%---subtype
\RuleToplabel{L:Subtype}
{
	\reststack;\rescontext \rtyps{q}{q'} e:A \\
	A \subtype B
}
{
	\reststack;\rescontext \rtyps{q}{q'} e:B
}

%---supertype
\RuleToplabel{L:Supertype}
{
	\reststack;\rescontext,x:B \rtyps{q}{q'} e:C \\
	A \subtype B
}
{
	\reststack;\rescontext,x:A \rtyps{q}{q'} e:C
}

\RuleToplabel{L:Contraction}
{
	\reststack;\rescontext,x_1:A,x_2:A \rtyps{q}{q'} e:B
}
{
  \reststack;\rescontext,x:A \rtyps{q}{q'} \text{share}(x,(x_1,x_2).e):B
}
\end{mathpar}
\caption{Relax and structural typing rules for lower bounds}
\label{fig:lowerbound}
\end{figure*}

\section*{Proofs of Lemma \ref{lem:environmentpotential}, Theorem \ref{theo:soundness}, Theorem \ref{theo:constant}, Theorem \ref{theo:lowersoundness}, and Theorem \ref{theo:lowerbound}}

\subsection*{Proof of Lemma \ref{lem:environmentpotential}}
The claim is proved by induction on the definitions of potential and size-equivalence, in which $\sizeval{\ev_1(x)} \esim{} \sizeval{\ev_2(x)}$ implies $\poten{}(\ev_1(x):\rescontext(x)) = \poten{}(\ev_2(x):\rescontext(x))$.

\subsection*{Proof of Theorem \ref{theo:soundness}}
The proof is done by induction on the length of the derivation of the evaluation judgement and the typing judgement with lexical order, in which the derivation of the evaluation judgement takes priority over the typing derivation. We need to do induction on the length of both evaluation and typing derivations since on one hand, an induction of only typing derivation would fail for the case of function application, which increases the length of the typing derivation, while the length of the evaluation derivation never increases. On the other hand, if the rule \textsc{C:Weakening} is the final step in the derivation, then the length of typing derivation decreases, while the length of evaluation derivation is unchanged.
%---share
\paragraph{\textsc{A:Share}} 
Assume that the typing derivation ends with an application of the rule \textsc{A:Share}, thus $\reststack;\rescontext,x_1:A_1,x_2:A_2 \rtyps{q}{q'} e:B$ and $\share(A|A_1,A_2)$.

Let $\ev_1 = \ev \setminus \{x\} \cup \{[x_1 \mapsto \ev(x),x_2 \mapsto \ev(x)]\}$. Since $\models \ev: \rescontext,x: A$ and following the property of the share relation we have $\models \ev_1: \rescontext,x_1:A_1,x_2:A_2$. By the induction hypothesis for $e$, it holds that for all $p, r \in \mathbb{Q}^+_0$ such that $p = q + \poten{\ev_1}(\rescontext,x_1:A_1,x_2:A_2) + r$, there exists $p' \in \mathbb{Q}^+_0$ satisfying $\ev_1 \rtyps{p}{p'} e \Downarrow v$ and $p' = q' + \poten{}(v:B) + r$.

Because $\poten{}(\ev(x):A) = \poten{}(\ev_1(x_1):A_1) + \poten{}(\ev_1(x_2):A_2)$ and $\poten{\ev}(\rescontext) = \poten{\ev_1}(\rescontext) = \poten{\ev \setminus \{x\}}(\rescontext)$, thus $p = q + \poten{\ev}(\rescontext,x:A) + r$ and there exists $p'$ satisfying $\ev \rtyps{p}{p'} \text{share}(x,(x_1,x_2).e) \Downarrow v$.

%---weakening
\paragraph{\textsc{C:Weakening}} 
Suppose that the typing derivation ends with an application of the rule \textsc{C:Weakening}. Thus we have $\reststack;\rescontext \rtyps{q}{q'} e:B$, in which the data type $A$ satisfies $\share(A \mid A,A)$.

Since $\models \ev: \rescontext,x:A$, it follows $\models \ev:\rescontext$. By the induction hypothesis for $e$, it holds that for all $p, r \in \mathbb{Q}^+_0$ such that $p = q + \poten{\ev}(\rescontext) + r$, there exists $p' \in \mathbb{Q}^+_0$ satisfying $\ev \rtyps{p}{p'} e \Downarrow v$ and $p' = q' + \poten{}(v:B) + r$. By the property of the share relation, $\poten{}(a:A) = 0$, then we have $p = q + \poten{\ev}(\rescontext,x:A) + r$, $\ev \rtyps{p}{p'} e \Downarrow v$ and $p' = q' + \poten{}(v:B) + r$ as required.

%---relax
\paragraph{\textsc{C:Relax}} 
Suppose that the typing derivation ends with an application of the rule \textsc{C:Relax}, thus we have
$\reststack;\rescontext \rtyps{q_1}{q_1'} e:A$, $q \geq q_1$, and $q - q_1 = q' - q_1'$.

For all $p, r \in \mathbb{Q}^+_0$ such that $p = q + \poten{\ev}(\rescontext) + r = q_1 + \poten{\ev}(\rescontext) + (q - q_1) + r$, we have $\models \ev: \rescontext$. By the induction hypothesis for $e$ in the premise, there exists $p' \in \mathbb{Q}^+_0$ satisfying $\ev \rtyps{p}{p'} e \Downarrow v$ and $p' = q_1' + \poten{}(v:A) + (q - q_1) + r = q' + \poten{}(v:A) + r$.

%---var
\paragraph{\textsc{A:Var}} 
Assume that $e$ is a variable $x$. If $\reststack;x:A \rtyps{K^{\word{var}}}{0} x:A$. Thus for all $p, r \in \mathbb{Q}^+_0$ such that $p = K^{\word{var}} + \poten{}(v:A) + r$, there exists $p' = \poten{}(v:A) + r$ satisfying $\ev \rtyps{p}{p'} e \Downarrow v$.

%---const unit
\paragraph{\textsc{A:Unit}} It is similar to the case \textsc{A:Var}.

%---const bool
\paragraph{\textsc{A:Bool}} It is similar to the case \textsc{A:Var}.

%---const int
\paragraph{\textsc{A:Int}} It is similar to the case \textsc{A:Var}.

%---bin bool
\paragraph{\textsc{A:B-Op}} Assume that $e$ is an expression of the form $\text{op}_\diamond(x_1,x_2)$, where $\diamond = \{\word{and},\word{or}\}$. Thus $\reststack;x_1:\type{bool},x_2:\type{bool} \rtyps{K^{\word{op}}}{0} e:\type{bool}$ and $\models \ev: \{x_1:\type{bool},x_2:\type{bool}\}$. We have $\ev \rtyps{K^{\word{op}}}{0} e \Downarrow v$, thus for all $p, r \in \mathbb{Q}^+_0$ such that $p = K^{\word{op}} + r= K^{\word{op}} + \poten{\ev}(x_1:\type{bool},x_2:\type{bool}) + r$, there exists $p' = \poten{}(v:\type{bool}) + r = r$ satisfying $\ev \rtyps{p}{p'} e \Downarrow v$.

%---bin intbool
\paragraph{\textsc{A:I-Op}} It is similar to the case \textsc{A:B-Op}.

%---bin int
\paragraph{\textsc{A:IB-Op}} It is similar to the case \textsc{A:B-Op}.

%---cons
\paragraph{\textsc{A:Cons}} If $e$ is of the form $\text{cons}(x_1,x_2)$, then the type derivation ends with an application of the rule \textsc{A:Cons} and the evaluation ends with the application of the rule \textsc{E:Cons}. Thus
$\reststack;x_1:A,x_2:L^{p_1}(A) \rtyps{p_1 + K^{\word{cons}}}{0} e:L^{p_1}(A)$ and $\models \ev: \{x_1:A,x_2:L^{p_1}(A)\}$.

We have $\ev \rtyps{K^{\word{cons}}}{0} e \Downarrow [v_1,...,v_n]$, where $\ev(x_1) = v_1$ and $\ev(x_2) = [v_2,\cdots,v_n]$. Let $\rescontext = x_h:A,x_t:L^{p_1}(A)$, for all $p, r \in \mathbb{Q}^+_0$ such that $p = p_1 + K^{\word{cons}} + \poten{\ev}(\rescontext) +r$, there exists $p' \in \mathbb{Q}^+_0$ satisfying $p' = \poten{}([v_1,...,v_n]:L^{p_1}(A)) + r = \poten{\ev}(\rescontext) + p_1 + r$
and $\ev \rtyps{p}{p'} e \Downarrow [v_1,...,v_n]$.

%---pair
\paragraph{\textsc{A:Pair}} It is similar to the case \textsc{A:Cons}.

%---nil
\paragraph{\textsc{A:Nil}} It is similar to the case \textsc{A:Cons}.

%---match-p
\paragraph{\textsc{A:Match-P}} Suppose that the typing derivation $\reststack;\rescontext,x:A_1 * A_2 \rtyps{q}{q'} \text{match}(x,(x_1,x_2).e): A$ ends with an application of the rule \textsc{A:Match-P}. Thus $\reststack;\rescontext,x_1:A_1,x_2:A_2 \rtyps{q - K^{\word{matchP}}}{q'} e: A$ and $\models \ev: \rescontext,x:A_1 * A_2$.

Let $\ev_1 = \ev[x_1 \mapsto v_1,x_2 \mapsto v_2]$ and $\rescontext_1 = \rescontext,x_1:A_1,x_2:A_2$. Since $\models v_1: A_1$, $\models v_2: A_2$, and $\models \ev: \rescontext$ it holds that $\models \ev_1: \rescontext_1$. For all $p, r \in \mathbb{Q}^+_0$ such that $p = q + \poten{\ev}(\rescontext,x: A_1 * A_2) + r$, thus $p - K^{\word{matchP}} = q - K^{\word{matchP}} + \poten{\ev_1}(\rescontext_1) + r$, by the induction hypothesis for $e$, there exists $p' \in \mathbb{Q}^+_0$ satisfying $p' = q' + \poten{}(v:A) + r$ and $\ev_1 \rtyps{p - K^{\word{matchP}}}{p'} e \Downarrow v$. Hence, by the rule \textsc{E:Match-P}, there exists $p' = q' + \poten{}(v:A) + r$ satisfying $\ev \rtyps{p}{p'} \text{match}(x,(x_1,x_2).e) \Downarrow v$.

%---fun app
\paragraph{\textsc{A:Fun}} Assume that $e$ is a function application of the form $\text{app}(f,x)$. Thus $\reststack;x: A_1 \rtyps{q + K^{\word{app}}}{q'} e: A_2$ and $\reststack(f) = A_1 \xrightarrow{q/q'} A_2$. Because the considering program is well-formed, there exists a well-typed expression $e_f$ under the typing context $\rescontext_1 = y^{\atype{f}}:A_1$ and the signature $\reststack$, or $\reststack;\rescontext_1 \rtyps{q}{q'} e_f:A_2$.

Let $\rescontext = x: A_1$, $\ev(x) = v_1$ and $\ev_1 = [y^{\atype{f}} \mapsto v_1]$, since $\models \ev:\rescontext$, it follows that $\models \ev_1:\rescontext_1$. For all $p, r \in \mathbb{Q}^+_0$ such that $p = q + K^{\word{app}} + \poten{\ev}(\rescontext) + r$, since $\poten{\ev_1}(\rescontext_1) = \poten{}(\ev_1(y^{\atype{f}}):A_1) = \poten{\ev}(\rescontext) = \poten{}(\ev(x):A_1)$, it holds that $p - K^{\word{app}} = q + \poten{\ev_1}(\rescontext_1) + r$. By the induction hypothesis for $e_f$, there exists $p' \in \mathbb{Q}^+_0$ satisfying $p' = q' + \poten{}(v:A_2) + r$ and $\ev_1 \rtyps{p_1}{p_1'} e_{\atype{f}} \Downarrow v$. Hence, $\ev \rtyps{p}{p'} e \Downarrow v$.

%---if
\paragraph{\textsc{A:If}} Suppose that $e$ is an expression of the form $\text{if}(x,e_t,e_f)$. Then one of the rules \textsc{E:If-True}
and \textsc{E:If-False} has been applied in the evaluation derivation depending on the value of $x$.

Assume that the variable $x$ is assigned the value \word{true} in $\ev$, or $\ev(x) = \word{true}$. The typing rule for $e$ has been derived by an application of the rule \textsc{A:If} using the premise on the left thus $\reststack;\rescontext \rtyps{q-K^{\word{cond}}}{q'} e_t:A$.

Let $\rescontext_1 = \rescontext,x:\type{bool}$, since $\models \ev: \rescontext_1$, it follows that $\models \ev: \rescontext$. For all $p, r \in \mathbb{Q}^+_0$ such that $p = q + \poten{\ev}(\rescontext_1) + r$, since $\poten{\ev}(\rescontext) = \poten{\ev}(\rescontext_1)$ thus $p_1 = p - K^{\word{cond}} = q - K^{\word{cond}} + \poten{\ev}(\rescontext) + r$. By the induction hypothesis for $e_t$, there exists $p_1' \in \mathbb{Q}^+_0$ satisfying $\ev \rtyps{p_1}{p_1'} e_t \Downarrow v$ and $p_1' = q' + \poten{}(v:A)$. Hence, by the rule \textsc{E:If-True}, there exists $p' = p_1'$ satisfying $\ev \rtyps{p}{p'} e \Downarrow v$ and $p' = q' + \poten{}(v:A)$. If $x$ is assigned the value \word{false} in $\ev$ then it is similar to the case $\ev(x) = \word{true}$.

%---match-l
\paragraph{\textsc{A:Match-L}} It is the same as the case of a conditional expression. The evaluation derivation applies one of the rules \textsc{E:Match-N} and \textsc{E:Match-L} depending on the value of $x$.

Assume that $x$ is assigned the value $[v_1,....,v_n]$ under $\ev$, or $\ev(x) = [v_1,...,v_n]$. Then, the evaluation derivation ends with an application of the rule \textsc{E:Match-L}. Let $\ev_1 = \ev[x_h \mapsto v_1,x_t \mapsto [v_2,...,v_n]]$ and $\rescontext_1 = \rescontext,x_h:A,x_t:L^{p_1}(A)$, the typing derivation ends with an application of the rule \textsc{A:Match-L}, thus $\reststack;\rescontext_1 \rtyps{q + p_1 - K^{\word{matchL}}}{q'} e_2:A_1$.

Since $\models [v_1,...,v_n]: L^{p_1}(A)$, we have $\models v_i: A, \forall i = 1,...,n$. Hence, it holds that $\models v_1: A$ and $\models [v_2,...,v_n]: L^{p_1}(A)$. Finally, we have $\models \ev_1: \rescontext_1$ (since $\models \ev: \rescontext$ implies $\models \ev_1: \rescontext$).

For all $p, r \in \mathbb{Q}^+_0$ such that $p = q + \poten{\ev}(\rescontext,x:L^{p_1}(A)) + r$, because $\poten{\ev}(\rescontext,x:L^{p_1}(A)) = \poten{\ev}(\rescontext) + n.p_1 + \Sigma^{n}_{i=1}\poten{}(v_i:A)$, $\poten{\ev_1}(\rescontext_1) = \poten{\ev_1}(\rescontext) + (n-1).p_1 + \Sigma^{n}_{i=1}\poten{}(v_i:A)$ and $\poten{\ev_1}(\rescontext) = \poten{\ev}(\rescontext)$, thus we have $\poten{\ev_1}(\rescontext_1) = \poten{\ev}(\rescontext,x:L^{p_1}(A)) - p_1$. Thus $p_2 = p - K^{\word{matchL}} = q + p_1 - K^{\word{matchL}} + \poten{\ev_1}(\rescontext_1) + r$. By the induction hypothesis for $e_2$, there exists $p_2' \in \mathbb{Q}^+_0$ satisfying $\ev_1 \rtyps{p_2}{p_2'} e_2 \Downarrow v$ and $p_2' = q' + \poten{}(v:A_1)$.  Hence, there exists $p' = p_2'$ such that $\ev \rtyps{p}{p'} e \Downarrow v$. If
$\ev(x) = \word{nil}$ then it is similar to the case \textsc{A:Match-P}.

%---let
\paragraph{\textsc{A:Let}} Assume that $e$ is an expression of the form $\text{let}(x,e_1,x.e_2)$. Hence, the evaluation derivation ends with an application of the rule \textsc{E:Let}. Let $\ev_1 = \ev[x \mapsto v_1]$ and $\rescontext = \rescontext_1,\rescontext_2$. The typing derivation ends with an application of the rule \textsc{A:Let}, thus $\reststack;\rescontext_1 \rtyps{q - K^{\word{let}}}{q_1'} e_1:A_1$ and
$\reststack;\rescontext_2,x:A_1 \rtyps{q_1'}{q'} e_2:A_2$.

For all $p, r \in \mathbb{Q}^+_0$ such that $p = q + \poten{\ev}(\rescontext) + r$, thus $p_1 = p - K^{\word{let}} = q - K^{\word{let}} + \poten{\ev}(\rescontext_1) + \poten{\ev}(\rescontext_2) + r$. Since $\models \ev: \rescontext$, we have $\models \ev: \rescontext_1$. By the induction hypothesis for $e_1$, there exists $p_1' \in \mathbb{Q}^+_0$ satisfying $\ev \rtyps{p_1}{p_1'} e_1 \Downarrow v_1$ and $p_1' = q_1' + \poten{}(v_1:A_1) + \poten{\ev}(\rescontext_2) + r$.

We have $\models \ev: \rescontext_2$, thus $\models \ev_1: \rescontext_2,x:A_1$. Again by the induction hypothesis for $e_2$, with $p_2 = p - K^{\word{let}} - (p_1 - p_1') = p_1' = q_1' + \poten{\ev_1}(\rescontext_2,x:A_1) + r$, there exists $p_2' \in \mathbb{Q}^+_0$ satisfying $\ev_1 \rtyps{p_2}{p_2'} e_2 \Downarrow v$ and $p_2' = q' + \poten{}(v:A_2) + r$. Hence, by the rule \textsc{E:Let}, there exists $p' = p_2'$ satisfying $\ev \rtyps{p}{p'} e \Downarrow v$ and $p' = q' + \poten{}(v:A_2)$.

\subsection*{Proof of Theorem \ref{theo:constant}}
First, we prove that if $\ev \rtyps{p}{p'} e \Downarrow v$ then $p - p' = q + \poten{\ev}(\rescontext) - (q' + \poten{}(v:A))$. Suppose $p - p' \not = q + \poten{\ev}(\rescontext) - (q' + \poten{}(v:A))$, there exists always some $r_1, r_2 \in \mathbb{Q}^+_0$ such that $p + r_1 = q + \poten{\ev}(\rescontext) + r_2$. Since $\ev \rtyps{p}{p'} e \Downarrow v$, we have $\ev \rtyps{p+r_1}{p'+r_1} e \Downarrow v$. By Theorem \ref{theo:soundness}, $p' + r_1 = q' + \poten{}(v:A) + r_2$, thus the assumption is contradictory.

Consider any $\ev_1$ and $\ev_2$ such that $\ev_1 \esim{X} \ev_2$, hence $\ev_1 \vdash e \Downarrow v_1$ and $\ev_2 \vdash e \Downarrow v_2$. For all $p_1, p_1' \in \mathbb{Q}^+_0$ such that $\ev_1 \rtyps{p_1}{p_1'} e \Downarrow v_1$, we have $p_1 - p_1' = q + \poten{\ev_1}(\rescontext) - (q' + \poten{}(v_1:A))$. Similarly, for all $p_2, p_2' \in \mathbb{Q}^+_0$ such that $\ev_2 \rtyps{p_2}{p_2'} e \Downarrow v_2$, $p_2 - p_2' = q + \poten{\ev_2}(\rescontext) - (q' + \poten{}(v_2:A))$. Since $\poten{\ev_1}(X) = \poten{\ev_2}(X)$ by Lemma \ref{lem:environmentpotential}, $\forall x \in \dom{\rescontext} \setminus X. \poten{}(\ev_i(x):\rescontext(x)) = 0$, and $\poten{}(v_i:A) = 0$, $i = 1,2$. Thus $p_1 - p_1' = p_2 - p_2'$.

\subsection*{Proof of Theorem \ref{theo:lowersoundness}}
The proof is relied on Theorem \ref{theo:lowerbound}. For all $p, r \in \mathbb{Q}^+_0$ such that $p < q + \poten{\ev}(\rescontext) + r$, assume that there exists some $p' \in \mathbb{Q}^+_0$ such that $\ev \rtyps{p}{p'} e \Downarrow v$ and $p' \geq q' + \poten{}(v:A) + r$. Thus we have $p - p' < q + \poten{\ev}(\rescontext) - (q' + \poten{}(v:A))$. On the other hand, it holds that $q + \poten{\ev}(\rescontext) - (q' + \poten{}(v:A)) \leq p - p'$. The assumption is contradictory.

\subsection*{Proof of Theorem \ref{theo:lowerbound}}
The proof is done by induction on the length of the derivation of the evaluation judgement $\ev \rtyps{p}{p'} e \Downarrow v$ and the typing judgement $\tstack;\context \rtyps{q}{q'} e:A$ with lexical order, in which the derivation of the evaluation judgement takes priority over the typing derivation. We need to do induction on the length of both evaluation and typing derivations since on one hand, an induction of only typing derivation would fail for the case of function application, which increases the length of the typing derivation, while the length of the evaluation derivation never increases. On the other hand, if the rules \textsc{L:Weakening} and \textsc{A:Share} are final step in the derivation, then the length of typing derivation decreases, while the length of evaluation derivation is unchanged.

%---share
\paragraph{\textsc{A:Share}} Assume that the typing derivation ends with an application
of the rule \textsc{A:Share}, thus $\reststack;\rescontext,x_1:A_1,x_2:A_2 \rtyps{q}{q'} e:B$
and $\share(A|A_1,A_2)$. Let $\ev_1 = \ev \setminus \{x\} \cup \{[x_1 \mapsto \ev(x),x_2 \mapsto \ev(x)]\}$. Since $\models \ev: \rescontext,x: A$ and following the property of the share relation we have $\models \ev_1: \rescontext,x_1:A_1,x_2:A_2$.

For all $p, p' \in \mathbb{Q}^+_0$ such that $\ev \rtyps{p}{p'} \text{share}(x,(x_1,x_2).e \Downarrow v$, by the rule
\textsc{E:Share} we have $\ev_1 \rtyps{p}{p'} e \Downarrow v$. Hence, by the induction hypothesis for $e$ in the premise,
it holds that
$q + \poten{\ev_1}(\rescontext,x_1:A_1,x_2:A_2) - (q' + \poten{}(v:B)) \leq p - p'$.

Because $\poten{}(\ev(x):A) = \poten{}(\ev_1(x_1):A_1) + \poten{}(\ev_1(x_2):A_2)$ and
$\poten{\ev}(\rescontext) = \poten{\ev_1}(\rescontext) = \poten{\ev \setminus \{x\}}(\rescontext)$,
we have $q + \poten{\ev}(\rescontext,x:A) - (q' + \poten{}(v:B)) \leq p - p'$.

%---weakening
\paragraph{\textsc{L:Weakening}} Suppose that the typing derivation $\reststack;\rescontext,x:A \rtyps{q}{q'} e:B$ ends with an application of the rule \textsc{L:Weakening}. Thus we have $\reststack;\rescontext \rtyps{q}{q'} e:B$, in which the data type $A$ satisfies $\share(A \mid A,A)$. Since $\models \ev: \rescontext,x:A$, it follows that $\models \ev:\rescontext$.

For all $p, p' \in \mathbb{Q}^+_0$ such that $\ev \rtyps{p}{p'} e \Downarrow v$, by the induction hypothesis for $e$ in the premise, it holds that $q + \poten{\ev}(\rescontext) - (q' + \poten{}(v:B)) \leq p - p'$. By the property of the share relation, $\poten{}(a:A) = 0$, hence we have $q + \poten{\ev}(\rescontext,x:A) - (q' + \poten{}(v:B)) \leq p - p'$.

%---relax
\paragraph{\textsc{L:Relax}} Suppose that the typing derivation ends with an application of the rule \textsc{L:Relax}, thus we have $\reststack;\rescontext \rtyps{q_1}{q_1'} e:A$, $q \geq q_1$, and $q - q_1 \leq q' - q_1'$.

For all $p, p' \in \mathbb{Q}^+_0$ such that $\ev \rtyps{p}{p'} e \Downarrow v$, we have $\models \ev: \rescontext$, hence by the induction hypothesis for $e$ in the premise, it holds that $q_1 + \poten{\ev}(\rescontext) - (q_1' + \poten{}(v:A)) \leq p - p'$. We have
$q + \poten{\ev}(\rescontext) - (q' + \poten{}(v:A)) = q_1 + \poten{\ev}(\rescontext) - (q_1' + \poten{}(v:A)) + ((q - q_1) - (q' - q_1'))$. Since $q - q_1 \leq q' - q_1'$, it holds that $q + \poten{\ev}(\rescontext) - (q' + \poten{}(v:A)) \leq q_1 + \poten{\ev}(\rescontext) - (q_1' + \poten{}(v:A)) \leq p - p'$.

%---var
\paragraph{\textsc{A:Var}} Assume that $e$ is a variable $x$. If $\reststack;x:A \rtyps{K^{\word{var}}}{0} x:A$. Thus for all $p, p' \in \mathbb{Q}^+_0$ such that $\ev \rtyps{p}{p'} e \Downarrow v$, we have $p = p' + K^{\word{var}}$, hence $K^{\word{var}} + \poten{}(\ev(x):A) - \poten{}(v:A) \leq p - p' = K^{\word{var}}$.

%---const unit
\paragraph{\textsc{A:Unit}} It is similar to the case \textsc{A:Var}.

%---const bool
\paragraph{\textsc{A:Bool}} It is similar to the case \textsc{A:Var}.

%---const int
\paragraph{\textsc{A:Int}} It is similar to the case \textsc{A:Var}.

%---bin bool
\paragraph{\textsc{A:B-Op}} Assume that $e$ is an expression of the form $\text{op}_\diamond(x_1,x_2)$, where $\diamond = \{\word{and},\word{or}\}$. Thus $\reststack;x_1:\type{bool},x_2:\type{bool} \rtyps{K^{\word{op}}}{0} e:\type{bool}$ and $\models \ev: \{x_1:\type{bool},x_2:\type{bool}\}$.

For all $p, p' \in \mathbb{Q}^+_0$ such that $\ev \rtyps{p}{p'} e \Downarrow v$, we have $K^{\word{op}} + \poten{\ev}(x_1:\type{bool},x_2:\type{bool}) - \poten{}(v:\type{bool}) = K^{\word{op}} \leq p - p' = K^{\word{op}}$.

%---bin intbool
\paragraph{\textsc{A:IB-Op}} It is similar to the case \textsc{A:B-Op}.

%---bin int
\paragraph{\textsc{A:I-Op}} It is similar to the case \textsc{A:B-Op}.

%---cons
\paragraph{\textsc{A:Cons}} If $e$ is of the form $\text{cons}(x_1,x_2)$, then the typing derivation ends with an application of the rule \textsc{A:Cons} and the evaluation derivation ends with the application of the rule \textsc{E:Cons}. Thus $\reststack;x_1:A,x_2:L^{p_1}(A) \rtyps{p_1 + K^{\word{cons}}}{0} e:L^{p_1}(A)$ and $\models \ev: \{x_1:A,x_2:L^{p_1}(A)\}$.

For all $p, p' \in \mathbb{Q}^+_0$ such that $\ev \rtyps{p}{p'} e \Downarrow v$, we have $p - p' = K^{\word{cons}}$, $\ev(x_1) = v_1$ and $\ev(x_2) = [v_2,\cdots,v_n]$. Let $\rescontext = x_h:A,x_t:L^{p_1}(A)$, it holds that $p_1 + K^{\word{cons}} + \poten{\ev}(\rescontext) - (\poten{}([v_1,...,v_n]:L^{p_1}(A))) = K^{\word{cons}} \leq p - p'$.

%---pair
\paragraph{\textsc{A:Pair}} It is similar to the case \textsc{A:Cons}.

%---nil
\paragraph{\textsc{A:Nil}} It is similar to the case \textsc{A:Cons}.

%---match-p
\paragraph{\textsc{A:Match-P}} Suppose that the typing derivation $\reststack;\rescontext,x:A_1 * A_2 \rtyps{q}{q'} \text{match}(x,(x_1,x_2).e): A$ ends with an application of the rule \textsc{A:Match-P}. Thus $\reststack;\rescontext,x_1:A_1,x_2:A_2 \rtyps{q - K^{\word{matchP}}}{q'} e: A$ and $\models \ev: \context,x:A_1 * A_2$.

Let $\ev_1 = \ev[x_1 \mapsto v_1,x_2 \mapsto v_2]$ and $\rescontext_1 = \rescontext,x_1:A_1,x_2:A_2$, since $\models v_1: A_1$, $\models v_2: A_2$, and $\models \ev: \rescontext$ it holds that $\models \ev_1: \rescontext_1$.

For all $p, p' \in \mathbb{Q}^+_0$ such that $\ev \rtyps{p}{p'} e \Downarrow v$, by the rule \textsc{E:Match-P} we have $\ev_1 \rtyps{p - K^{\word{matchP}}}{p'} e \Downarrow v$. Hence, by the induction hypothesis for $e$ in the premise, it holds that $q - K^{\word{matchP}} + \poten{\ev_1}(\rescontext_1) - (q' + \poten{}(v:A)) \leq p - K^{\word{matchP}} - p'$.

Since $\poten{\ev}(\rescontext,x: A_1 * A_2) = \poten{\ev_1}(\rescontext_1)$, it follows that $q + \poten{\ev}(\rescontext,x: A_1 * A_2) - (q' + \poten{}(v:A)) \leq p - p'$.

%---fun app
\paragraph{\textsc{A:Fun}} Assume that $e$ is a function application of the form $\text{app}(f,x)$. Thus $\reststack;x: A_1 \rtyps{q + K^{\word{app}}}{q'} e: A_2$ and $\reststack(f) = A_1 \xrightarrow{q/q'} A_2$. Because the considering program is well-formed, there exists a well-typed expression $e_f$ under the typing context $\rescontext_1 = y^{\atype{f}}:A_1$ and the signature $\reststack$, or $\reststack;\rescontext_1 \rtyps{q}{q'} e_f:A_2$.

Let $\rescontext = x: A_1$, $\ev(x) = v_1$ and $\ev_1 = [y^{\atype{f}} \mapsto v_1]$, since $\models \ev:\rescontext$, it follows that $\models \ev_1:\rescontext_1$. For all $p, p' \in \mathbb{Q}^+_0$ such that $\ev \rtyps{p}{p'} e \Downarrow v$, we have $\ev_1 \rtyps{p - K^{\word{app}}}{p'} e_{\atype{f}} \Downarrow v$. Hence, by the induction hypothesis for $e_f$, it holds that $q + \poten{\ev_1}(\rescontext_1) - (q' + \poten{}(v:A_2)) \leq p - K^{\word{app}} - p'$.

Since $\poten{\ev_1}(\rescontext_1) = \poten{}(\ev_1(y^{\atype{f}}):A_1) = \poten{\ev}(\rescontext) = \poten{}(\ev(x):A_1)$, it follows that
$q + K^{\word{app}} + \poten{}(\ev(x):A_1) - (q' + \poten{}(v:A_2)) \leq p - p'$.

%---if
\paragraph{\textsc{A:If}} Suppose that $e$ is an expression of the form $\text{if}(x,e_t,e_f)$. Then one of the rules \textsc{E:If-True}
and \textsc{E:If-False} has been applied in the evaluation derivation depending on the value of $x$.

Assume that the variable $x$ is assigned the value \word{true} in $\ev$, or $\ev(x) = \word{true}$. The typing rule for $e$ has been derived by an application of the rule \textsc{A:If} using the premise on the left thus $\reststack;\rescontext \rtyps{q-K^{\word{cond}}}{q'} e_t:A$. Let
$\rescontext_1 = \rescontext,x:\type{bool}$, since $\models \ev: \rescontext_1$, it follows that $\models \ev: \rescontext$.

For all $p, p' \in \mathbb{Q}^+_0$ such that $\ev \rtyps{p}{p'} e \Downarrow v$, by the rule \textsc{E:If-True} we have $\ev \rtyps{p - K^{\word{cond}}}{p'} e_t \Downarrow v$. Hence, by the induction hypothesis for $e_t$, it holds that $q - K^{\word{cond}} + \poten{\ev}(\rescontext) - (q' + \poten{}(v:A)) \leq p - K^{\word{cond}} - p'$.

Because $\poten{\ev}(\rescontext) = \poten{\ev}(\rescontext_1)$, it follows $q + \poten{\ev}(\rescontext_1) - (q' + \poten{}(v:A)) \leq p - p'$. If $\ev(x) = \word{false}$ then the proof is similar.

%---match-l
\paragraph{\textsc{A:Match-L}} It is the same as the case of a conditional expression. The evaluation derivation applies one of the rules \textsc{E:Match-N} and \textsc{E:Match-L} depending on the value of $x$.

Assume that $x$ is assigned the value $[v_1,....,v_n]$ under $\ev$, or $\ev(x) = [v_1,...,v_n]$. Then, the evaluation derivation ends with an application of the rule \textsc{E:Match-L}. Let $\ev_1 = \ev[x_h \mapsto v_1,x_t \mapsto [v_2,...,v_n]]$ and $\rescontext_1 = \rescontext,x_h:A,x_t:L^{p_1}(A)$, the typing derivation ends with an application of the rule \textsc{A:Match-L}, thus $\reststack;\rescontext_1 \rtyps{q + p_1 - K^{\word{matchL}}}{q'} e_2:A_1$.

Since $\models [v_1,...,v_n]: L^{p_1}(A)$, we have $\models v_i: A, \forall i = 1,...,n$. Hence, it holds that $\models v_1: A$ and $\models [v_2,...,v_n]: L^{p_1}(A)$. Finally, we have $\models \ev_1: \rescontext_1$ (since $\models \ev: \rescontext$ implies $\models \ev_1: \rescontext$).

For all $p, p' \in \mathbb{Q}^+_0$ such that $\ev \rtyps{p}{p'} e \Downarrow v$, by the rule \textsc{E:Match-L} we have $\ev_1 \rtyps{p - K^{\word{matchL}}}{p'} e_2 \Downarrow v$. By the induction hypothesis for $e_2$, it holds that $q + p_1 - K^{\word{matchL}} + \poten{\ev_1}(\rescontext_1) - (q' + \poten{}(v:A_1)) \leq p - K^{\word{matchL}} - p'$.

Because $\poten{\ev}(\rescontext,x:L^{p}(A)) = \poten{\ev}(\rescontext) + n.p_1 + \Sigma^{n}_{i=1}\poten{}(v_i:A)$, $\poten{\ev_1}(\rescontext_1) = \poten{\ev_1}(\rescontext) + (n-1).p_1 + \Sigma^{n}_{i=1}\poten{}(v_i:A)$ and $\poten{\ev_1}(\rescontext) = \poten{\ev}(\rescontext)$, thus we have $\poten{\ev_1}(\rescontext_1) = \poten{\ev}(\rescontext,x:L^{p_1}(A)) - p_1$. Therefore, $q + \poten{\ev}(\rescontext,x:L^{p_1}(A)) - (q' + \poten{}(v:A_1)) \leq p - p'$. If $\ev(x) = \word{nil}$ then it is similar to the case \textsc{A:Match-P}.

%---let
\paragraph{\textsc{A:Let}} Assume that $e$ is an expression of the form $\text{let}(x,e_1,x.e_2)$. Hence, the evaluation derivation ends with an application of the rule \textsc{E:Let}. Let $\ev_1 = \ev[x \mapsto v_1]$ and $\rescontext = \rescontext_1,\rescontext_2$. The typing derivation ends with an application of the rule \textsc{A:Let}, thus $\reststack;\rescontext_1 \rtyps{q - K^{\word{let}}}{q_1'} e_1:A_1$ and $\reststack;\rescontext_2,x:A_1 \rtyps{q'_1}{q'} e_2:A_2$.

For all $p, p' \in \mathbb{Q}^+_0$ such that $\ev \rtyps{p}{p'} e \Downarrow v$, by the rule \textsc{E:Let} we have $\ev \rtyps{p - K^{\word{let}}}{p_1'} e_1 \Downarrow v_1$ and $\ev_1 \rtyps{p_1'}{p'} e_2 \Downarrow v$. Since $\models \ev: \rescontext$, we have $\models \ev: \rescontext_1$. By the induction hypothesis for $e_1$, it holds that $q - K^{\word{let}} + \poten{\ev}(\rescontext_1) - (q_1' + \poten{}(v_1:A_1)) \leq p - K^{\word{let}} - p_1'$.

We have $\models \ev: \rescontext_2$, thus $\models \ev_1: \rescontext_2,x:A_1$. Again by the induction hypothesis for $e_2$, we derive that
$q_1' + \poten{\ev_1}(\rescontext_2,x:A_1) - (q' + \poten{}(v:A_2)) \leq p_1' - p'$.

Sum two in-equations above, it follows $q + \poten{\ev}(\rescontext_1) - \poten{}(v_1:A_1) + \poten{\ev_1}(\rescontext_2,x:A_1) - (q' + \poten{}(v:A_2)) = q + \poten{\ev}(\rescontext_1,\rescontext_2) - (q' + \poten{}(v:A_2)) \leq p - p'$.

%---subtyping
\paragraph{\textsc{L:Subtype}} Assume that the typing derivation ends with an application of the rule \textsc{L:Subtype}, thus $\reststack;\rescontext \rtyps{q}{q'} e:A$ and $A \subtype B$.

By the induction hypothesis for $e$ in the premise, for all $p, p' \in \mathbb{Q}^+_0$ such that $\ev \rtyps{p}{p'} e \Downarrow v$ it holds that
$q + \poten{\ev}(\rescontext) - (q' + \poten{}(v:A)) \leq p - p'$.

Because $\poten{}(\ev(x):A) \leq \poten{}(\ev(x):B)$ we have $q + \poten{\ev}(\rescontext) - (q' + \poten{}(v:B)) \leq p - p'$.

%---supertype
\paragraph{\textsc{L:Supertype}} Assume that the typing derivation ends with an application of the rule \textsc{L:Supertype}, thus $\reststack;\rescontext,x:B \rtyps{q}{q'} e:C$ and $A \subtype B$. Since $\models \ev: \rescontext,x: A$ and following the property of the subtyping relation we have $\models \ev: \rescontext,x: B$.

By the induction hypothesis for $e$ in the premise, for all $p, p' \in \mathbb{Q}^+_0$ such that $\ev \rtyps{p}{p'} e \Downarrow v$ it holds that
$q + \poten{\ev}(\rescontext,x:B) - (q' + \poten{}(v:C)) \leq p - p'$.

Because $\poten{}(\ev(x):A) \leq \poten{}(\ev(x):B)$ we have $q + \poten{\ev}(\rescontext,x:A) - (q' + \poten{}(v:C)) \leq p - p'$.

% that's all folks
\end{document}